\documentclass[11pt]{article}

\usepackage{stylesheet}
\newcommand{\ENVY}{\mathsf{ENVY}}

\newcommand{\agents}{\mathcal{N}}
\newcommand{\items}{\mathcal{G}}
\newcommand{\alloc}{A}
\newcommand{\xalloc}{X}
\newcommand{\sink}{Sk}
\usepackage{dsfont}

\usepackage{tabularx}
\usepackage{tikz}

\newcolumntype{M}{>{\centering\arraybackslash}m{1cm}}

\newcommand\tikzmark[2]{%
\tikz[remember picture,baseline] \node[inner sep=0.1pt,outer sep=0] (#1){#2};%
}

\newcommand\link[2]{%
\begin{tikzpicture}[remember picture, overlay, >=stealth, shift={(0,0)}]
  \draw[-implies,double equal sign distance] (#1) to (#2);
\end{tikzpicture}%
}

\usepackage{capt-of}
\usepackage{makecell}

\usepackage{footnote}
\makesavenoteenv{algorithm}

\title{Fairness-Efficiency Tradeoffs in Dynamic Fair Division}
\author{David Zeng\thanks{Carnegie Mellon University, davidzeng.42@gmail.com} \and Alexandros Psomas\thanks{Google Research, alexpsomi@gmail.com}}
\date{}

\begin{document}
\maketitle

\begin{abstract}
We investigate the tradeoffs between fairness and efficiency when allocating indivisible items over time. Suppose $T$ items arrive over time and must be allocated upon arrival, immediately and irrevocably, to one of $n$ agents. Agent $i$ assigns a value $v_{it} \in [0,1]$ to the $t$-th item to arrive and has an additive valuation function. If the values are chosen by an adaptive adversary, that gets to see the (random) allocations of items $1$ through $t-1$ before choosing $v_{it}$, it is known that the algorithm that minimizes maximum pairwise envy simply allocates each item uniformly at random; the maximum pairwise envy is then sublinear in $T$, namely $\tilde{O}(\sqrt{T/n})$. If the values are independently and identically drawn from an adversarially chosen distribution $D$, it is also known that, under some mild conditions on $D$, allocating to the agent with the highest value --- a Pareto efficient allocation --- is envy-free with high probability.

In this paper we study fairness-efficiency tradeoffs in this setting and provide matching upper and lower bounds under a spectrum of progressively stronger adversaries. On one hand we show that, even against a non-adaptive adversary, there is no algorithm with sublinear maximum pairwise envy that Pareto dominates the simple algorithm that allocates each item uniformly at random. On the other hand, under a slightly weaker adversary regime where item values are drawn from a known distribution and are independent with respect to time, i.e. $v_{it}$ is independent of $v_{i\hat{t}}$ but possibly correlated with $v_{\hat{i}t}$, optimal (in isolation) efficiency is compatible with optimal (in isolation) fairness. That is, we give an algorithm that is Pareto efficient ex-post and is simultaneously optimal with respect to fairness: for each pair of agents $i$ and $j$, either $i$ envies $j$ by at most one item (a prominent fairness notion), or $i$ does not envy $j$ with high probability. 
En route, we prove a structural (and constructive) result about allocations of divisible items that might be of independent interest: there always exists a Pareto efficient divisible allocation where each agent $i$ either strictly prefers her own bundle to the bundle of agent $j$, or, if she is indifferent, then $i$ and $j$ have identical allocations and the same value (up to multiplicative factors) for all the items that are allocated to them.
\end{abstract}

\section{Introduction}

A set of $T$ indivisible goods has to be allocated to a set of $n$ agents with additive utilities, in a way that is \textit{fair} and \textit{efficient}. A standard fairness concept is envy-freeness, which requires that each agent prefers her own allocation over the allocation of any other agent. Even though envy is clearly unavoidable in this context --- consider the case of a single indivisible good and two agents --- providing \emph{approximately} envy-free solutions is possible~\cite{caragiannis2016unreasonable,lipton2004approximately}. Specifically, an allocation is envy-free up to one item (EF1) if for every pair of agents $i$ and $j$, any envy $i$ has for $j$ can be eliminated by removing at most one good from $j$'s bundle. Recently,~\citet{caragiannis2016unreasonable} show that the allocation that maximizes the product of the agents' utilities (with ties broken based on the number of agents with positive utility) is EF1 and Pareto efficient. 

The majority of the literature to date has focused on the case where the items are available to the algorithm upfront. 
In many situations of interest, however, items arrive \emph{online}.
A paradigmatic example is that of food banks~\cite{aleksandrov2015online,lee2019webuildai}. Food banks across the world receive food donations they must allocate; these donations are often perishable, and thus allocation decisions must be made quickly, and donations are typically leftovers, leading to uncertainty about items that will arrive in the future. \citet{BKPP18} study this problem, but focus only on fairness. They show that there exists a deterministic algorithm with \textit{vanishing envy}, that is, the maximum pairwise envy (after all $T$ items have been allocated) is sublinear in $T$, when the value $v_{it}$ of agent $i$ for the $t$-th item is normalized to be in $[0,1]$. Specifically, the envy is guaranteed to be at most $O(\sqrt{T \log T / n})$, and this guarantee is tight up to polylogarithmic factors. The same guarantee can be also achieved by the simple randomized algorithm that allocates each item to a uniformly at random chosen agent. These results hold even against an adaptive adversary that selects the value $v_{it}$ after seeing the allocation of the first $t-1$ items. On the other hand, if we focus only on efficiency, our task is much easier. For example, we could simply allocate each item to the agent with the highest value. But, and this brings us to our interest here, the question remains: \emph{How should we make  allocation decisions online in a way that is fair to the donation recipients, but also as efficient as possible?}

\subsection{Our Contributions}\label{sec:results}

We study the tradeoff between fairness and efficiency in the following setting. $T$ indivisible items arrive online; item $t$ has value $v_{it}$ for agent $i$ and must be allocated immediately and irrevocably. We investigate this tradeoff under a range of adversary models, each one specifying how the values $v_{it}$ are generated. We consider five adversaries and completely characterize the extent to which fairness and efficiency are compatible under each and every one of them.

The weakest adversary we consider simply selects a distribution $D$ from which each value $v_{it}$ is drawn (independently and identically). For this case, even though the setting studied was otherwise unrelated, the answer is essentially given by~\citet{DickersonGKPS14} (and later simplified and improved by~\citet{kurokawa2016can}): under conditions on $D$, the algorithm that allocates each item to the agent with the highest value, an ex-post Pareto efficient algorithm\footnote{See Section~\ref{sec:prelims} for definitions.}, is envy-free with high probability.
Removing the conditions on $D$ is possible: we prove (Proposition~\ref{prop:util-max-properties}) that a simple variation of this algorithm (that remains ex-post Pareto efficient) either outputs an EF1 allocation, or is envy-free with high probability as the number of items goes to infinity. 

Our next, slightly stronger adversary selects a distribution $D_i$ for each agent $i$; agents are independent, but not identical, and items are independent and identical. Unfortunately, as we discuss in Section~\ref{sec:compatible}, it seems very unlikely that the previous greedy approach generalizes even to this adversary. Our first main result is an \emph{optimal} algorithm that works against the even stronger adversary that allows for \emph{correlated} agents, but i.i.d. items: $v_{it}$ can be correlated with $v_{\hat{i}t}$ but not with $v_{i\hat{t}}$. In Theorem~\ref{thm:POCR} we give an ex-post Pareto efficient algorithm that guarantees to every pair of agents $i, j$ that $i$ envies $j$ by at most one item ex-post, or $i$ will not envy $j$ with high probability.

On a high level our approach works as follows. We take the support of the correlated distribution chosen by the adversary\footnote{We assume finite support; see Section~\ref{sec:prelims}.}, scale each item down by the probability it is drawn, and treat this as an offline instance (with $n$ agents and as many items as the support of the joint distribution $D$). We first prove (Lemma~\ref{lem:por-po}) that using a fractional Pareto efficient solution to guide the online decisions (that is, if an $x$ fraction of item $j$ is allocated to agent $i$ in the offline problem, she gets the item with probability $x$ every time it appears) results in \emph{ex-post} Pareto efficient allocations; in fact, applying almost any scaling (or no scaling) doesn't change this. By coupling this with a Pareto efficient and envy-free offline solution, say the fractional allocation that maximizes the product of agents' utilities, we can get an ex-post Pareto efficient algorithm with $\tilde{O}(\sqrt{T})$ envy, since we can treat the pairwise envy between agents $i$ and $j$ contributed by item $t$ as a zero mean random variable with support in $[-1,1]$ (since the $v_{it}$s are in $[0,1]$) and use standard concentration inequalities. If instead the allocation was strongly envy-free, i.e. each agent strictly preferred her own allocation, then these random variables would have negative mean, and the same concentration inequalities would imply negative envy, i.e. envy-freeness, with high probability. This goal is, unfortunately, a bit too optimistic. But, surprisingly, we can provide an allocation with a property (slightly) weaker than strong envy-freeness, which enables, in the online setting, envy-freeness with high probability or EF1 ex-post (the same guarantee as against the weakest adversary!).

Our second main result (Theorem~\ref{thm: structural}) is a structural, and constructive, result about fractional allocations in the offline problem. We give an algorithm that finds a Pareto efficient fractional allocation where agent $i$ either strictly prefers her allocation to the allocation of agent $j$, or, if she is indifferent, then $i$ and $j$ have \emph{identical} allocations and the \emph{same} value (up to multiplicative factors) for all items allocated to them. We do this by giving budgets $\be$ and a solution $(\bx,\bp)$ to the Eisenberg-Gale convex program where $\bx$ has the desired fairness properties. In slightly more detail, we start with the E-G solution where all agents have the same budget, and leverage various properties implied by the KKT conditions in order to iteratively adjust the initial solution and budgets. Our goal is to eliminate edges in the ``indifference graph'', a graph where vertices correspond to agents and there exists a directed edge from $i$ to $j$ if agent $i$ is indifferent between her allocation and the allocation of agent $j$. We end up with an indifference graph that is a disjoint union of cliques, where agents in the same clique have the same allocation and the same value for the items allocated to them (up to multiplicative factors). At this point, the reader familiar with the E-G convex program might be wondering if budget adjustments are even needed.
In Appendix~\ref{app: same budget counter example} we provide an example where in order to get the desired structure, giving a different budget to each agent is necessary.
We believe that our result is of independent interest, and our approach might have further applications. 

Before we proceed to even stronger adversaries, we briefly discuss the chosen fairness criteria. Our results focus on the extent to which efficiency is achievable under the restriction that the algorithm must also guarantee the same fairness criteria that can be achieved in isolation. The fairness guarantee of our main positive result consists of a probabilistic and non-probabilistic part. Even in isolation and under the weakest adversary, this is the best fairness guarantee (with respect to envy) that can be achieved, as we cannot improve on the probabilistic or non-probabilistic guarantee. That is, it is impossible to always output an EF1 allocation (ex-post), and it is also impossible to always output an allocation that is envy-free with high probability. One might further ask if we can give a single non-probabilistic guarantee, but, even envy-freeness up to a constant is not achievable in the online setting; see Appendix~\ref{app:optimal-fairness-guarantee} for these impossibility results. Our algorithms do guarantee vanishing envy, but, while not directly comparable to the ``EF1 or EF w.h.p.'' guarantee, qualitatively this guarantee seems much weaker. Finally, one might attempt to restrict the class of distributions considered. In Appendix~\ref{app:independent agents strong ef} we show that if one excludes point masses, it is possible to guarantee EF with high probability (that is, improve the ``or EF1'' part) for the case of independent agents and i.i.d. items. We do not know if the same result is possible for the case of correlated agents.

Our next (and stronger) adversary is a familiar one: a non-adaptive adversary that selects an instance (with $T$ items) after seeing the algorithm's ``code'', but without knowing the random outcomes of coins used by the algorithm. Of course, as in all the previous cases, the result of~\citet{BKPP18} still applies, so $\tilde{O}(\sqrt{T/n})$ envy is still possible via a deterministic algorithm, but also by simply allocating each item uniformly at random. Our third main result (Theorem~\ref{thm:eff-envy-lb-det}) shows that the latter algorithm is essentially optimal in terms of Pareto efficiency among all algorithms with sublinear envy! Specifically, we show that for every $\varepsilon > 0$, there is no algorithm with sublinear envy that achieves $(1/n + \varepsilon)$-Pareto efficiency ex-ante; see Section~\ref{sec:prelims} for definitions, but, intuitively an allocation is $\alpha$-Pareto efficient, for $\alpha \in (0,1]$, if it is not possible to improve all agents' utilities by a factor of $1/\alpha$. This result extends to our last adversary, the fully adaptive one.

See Table~\ref{table:summary} for a summary of our results.

\renewcommand{\arraystretch}{3}

\begin{table}[ht]
\centering
\resizebox{0.9\columnwidth}{!} {
            \begin{tabular}{c|c|c}
            \multicolumn{1}{c|}{Setting}    & \multicolumn{1}{c|}{Lower Bound}    & \multicolumn{1}{c}{Upper Bound} \\ \cline{1-3}
            \makecell{Identical Agents\\i.i.d. items} 		& \makecell{(EF1 or EF w.h.p.) impossible \\ \tikzmark{a}{to improve (Appendix~\ref{app:optimal-fairness-guarantee})}} & \makecell{(EF1 or EF w.h.p.) and ex-post PO \\ (Prop.~\ref{prop:util-max-properties}, essentially~\citep{DickersonGKPS14})} \\ \cline{1-3}
            \makecell{Independent Agents\\i.i.d. items} 	& \makecell{\tikzmark{b}{(EF1 or EF w.h.p.) impossible} \\ \tikzmark{c}{to improve}}    &  \makecell{\tikzmark{f}{(EF1 or EF w.h.p.) and ex-post PO}} \\ \cline{1-3}
            \makecell{Correlated Agents\\i.i.d. items}		& \makecell{ \tikzmark{d}{(EF1 or EF w.h.p.) impossible} \\ to improve } & \makecell{\tikzmark{e}{(EF1 or EF w.h.p.) and ex-post PO}\\(Theorem~\ref{thm:POCR})} \\ \cline{1-3}
            Non-Adaptive   							& \makecell{Vanishing envy and ex-ante $1/n +\varepsilon$ \\ \tikzmark{g}{Pareto is impossible (Theorem~\ref{thm:eff-envy-lb-det})}}    & \makecell{Vanishing envy and\\ \tikzmark{k}{ex-ante $1/n$ Pareto}} \\ \cline{1-3}
            Adaptive   								& \makecell{\tikzmark{t}{Vanishing envy and ex-ante $1/n +\varepsilon$} \\ Pareto is impossible}    & \makecell{\tikzmark{l}{Vanishing envy and}\\ ex-ante $1/n$ Pareto \cite{BKPP18}} \\ \cline{1-3}
            \end{tabular}
            \link{a}{b}
            \link{c}{d} 
            \link{e}{f} 
            \link{g}{t}
            \link{l}{k}}
            \caption{Comparing the compatibility of efficiency and envy in each setting.}\label{table:summary}
\vspace{-3mm}
\end{table}

\vspace{-3mm}
\subsection{Related Work and Roadmap}

Our paper is related to the growing literature on \emph{online} or \emph{dynamic} fair division (\cite{BKPP18, heachieving, kash2014no, friedman2015dynamic, friedman2017controlled, li2018dynamic, freeman2018dynamic, aleksandrov2015online, walsh2011online, bogomolnaia2019simple}). The work most closely related to ours is by~\citet{BKPP18}. They study envy-freeness in isolation against a fully adaptive adversary, and design a deterministic algorithm (specifically a derandomization of the simple algorithm that allocates each item uniformly at random) that achieves maximum envy of $\tilde{O}(\sqrt{T/n})$. They also show that this is tight up to polylogarithmic factors. In contrast, here we study the tradeoff between fairness and efficiency under different adversaries; against the fully adaptive adversary we prove that there is no algorithm with sublinear (in $T$) envy that outperforms allocating uniformly at random in terms of efficiency. In the same setting, ~\citet{heachieving} study the number of \emph{adjustments} that are necessary and sufficient in order to maintain an EF1 allocation online. 
Very recently,~\citet{bansal2019online} give an algorithm that guarantees envy at most $O(\log T)$ with high probability (efficiency is not considered), for the case of two i.i.d. agents. Notably, as opposed to our positive result here, their result allows for the adversary's distribution to depend on $T$. Even though the setting studied is completely different, the proof that allocating to the agent with the highest value is optimal against the weakest adversary (identical agents and i.i.d. items) is given by~\citet{DickersonGKPS14} (a similar statement also appears in~\citet{kurokawa2016can}), where it is shown that this algorithm outputs an envy-free allocation with probability that goes to $1$ as the number of items goes to infinity; we go over their result in more detail, as well as argue about why it seems unlikely that it generalizes to stronger adversaries, in Section~\ref{sec:compatible}.

For the offline problem, i.e. when all item values are available to the algorithm,~\citet{caragiannis2016unreasonable} show that there is no tradeoff between fairness and efficiency. The (integral) allocation that maximizes the product of the agents' utilities is Pareto efficient and EF1, simultaneously. More recently,~\citet{barman2018finding} show that there always exists an allocation that is EF1 and fractionally Pareto efficient (and also give pseudopolynomial time algorithm for finding allocations that are EF1 and Pareto efficient). 
Computing the fractional allocation that maximizes the product of utilities is a special case of the Fisher market equilibrium with affine utility buyers; the latter problem was solved in (weakly) polynomial time by~\citet{devanur2008market}, improved to a strongly polynomial time algorithm by~\citet{orlin2010improved}. Our structural result starts from an exact solution to the Eisenberg-Gale convex program and then uses a polynomial number of operations. Therefore, all our algorithms run in strongly polynomial time; we further comment on this in Section~\ref{sec: conclusion}.


It is worth pointing out a connection between what we call the indifference graph and the so-called maximum bang per buck (MBB) graph. In our indifference graph there is an edge from $i$ to $j$ if $i$ is indifferent between her allocation and the allocation of agent $j$. We show (Lemma~\ref{lem:edge-indifferent}) that this condition is equivalent to $v_{ik}/p_k = v_i(\alloc_i)/e_i$, for all items $k$ that $j$ gets a non-zero fraction of, where $p_k$ is the price of item $k$ and $e_i$ is the budget of agent $i$. On the other hand, the MBB graph is bipartite, one side is agents and the other is items, and there is an edge from agent $i$ to item $k$ if $v_{ik}/p_k = v_i(\alloc_i)/e_i$. Properties of MBB graphs have been crucial in recent algorithmic progress on approximating the Nash Social Welfare, e.g. ~\citet{cole2015approximating,garg2018approximating,chaudhury2018fair}, as well as in computing equilibria in Arrow-Debreu
exchange markets (\citet{garg2019strongly}), but beyond the similarity in the definition we are unaware of any technical overlap.

\paragraph{Roadmap.}
Section~\ref{sec:prelims} poses our model and some preliminaries. 
In Section~\ref{sec:compatible} we prove our main positive result: for all i.i.d. distributions over items (with possibly correlated agents), there exists an algorithm that always outputs a Pareto efficient allocation (i.e. this is an ex-post Pareto efficient algorithm) that guarantees that for each pair of agents $i$ and $j$, either $i$ envies $j$ by at most one item, or $i$ does not envy $j$ with high probability. As we mentioned earlier, our algorithm relies on a structural result about divisible allocations; we prove this result in Section~\ref{sec: CISEF}.
In Section~\ref{sec:incompatible} we prove our main negative result: there is a non-adaptive adversary strategy, such that no vanishing envy algorithm has efficiency better than allocating uniformly at random.
We conclude in Section~\ref{sec: conclusion}.

\section{Preliminaries}\label{sec:prelims}

We study the problem of allocating a set of $T$ indivisible items (also referred to as goods) arriving over time, labeled by $\items =  \{ 1, 2, \cdots, T \}$, to a set of $n$ agents, labeled by $\agents = \{ 1, \dots, n \}$. Agent $i \in \agents$ assigns a (normalized) value $v_{it} \in [0,1]$ to each item $t \in \items$. Agents have additive utilities for subsets of items: the value of agent $i$ for a subset of the items $S$ is $v_i(S) = \sum_{t \in S} v_{it}$. An allocation $\alloc$ is a partition of the items into bundles $\alloc_1, \dots, \alloc_n$, where $\alloc_i$ is assigned to agent $i \in \agents$.

Items arrive in order, one per round, for a total of $T$ rounds. The agents' valuations for the $t$-th item become available only after the item arrives, and are unknown until then. Let $\items^t = \{ 1, 2, \dots, t \}$ be the set of items that have arrived up until time $t$. We denote an allocation of $\items^t$ by $\alloc^t$.
We would like to allocate the goods to the agents in a way that the final allocation  $\alloc^T$ is \textit{fair} and \textit{efficient}.
Before we formally define our notions of fairness and efficiency, we go over the different adversary models, each one specifying how the agents' item values are generated.

\paragraph{Adversary Models.}
One can think of each scenario as a game between the adversary and the algorithm. 
For the first three adversaries it will be more intuitive to think of the adversary picking her strategy first, followed by the algorithm, that gets to first see the adversary's strategy. For the last two, it will be more intuitive to think of the algorithm's code being fixed before the adversary picks her strategy. We list our adversaries from weakest to strongest, where a stronger adversary can simulate the strategy of a weaker adversary but not vice versa. All distributions are assumed to be discrete with finite support.
(1) \textbf{Identical agents and i.i.d. items.}  The adversary selects a distribution $D$. On round $t$, the value of item $t$ to each agent $i$ is drawn independently from this distribution, i.e. $v_{it} \sim D$. (2) \textbf{Independent agents and i.i.d. items.}
The adversary selects a distribution $D_i$ for each agent $i$. On round $t$, the value of item $t$ to each agent $i$ is drawn independently from their value distribution, that is $v_{it} \sim D_i$. (3) \textbf{Correlated agents and i.i.d. items.}
The adversary specifies a joint distribution $D$ with marginals $D_1, \ldots, D_n$. On round $t$, the vector of values $v_t = (v_{1t},\dots,v_{nt})$ is drawn i.i.d. from $D$. That is, $v_{it}$ can be correlated with $v_{jt}$, but not with $v_{i\hat{t}}$. For simplicity, we treat this setting as follows. Each item $t$ has one of $m$ types. Agent $i$ has value $v_i(\gamma_j)$ for an item of type $\gamma_j$; the type of each item is drawn i.i.d. from a distribution $D$ with support $G_D$, $|G_D|=m$. We write $f_D(\gamma_j)$ for the probability that the $t$-th item has type $\gamma_j$. Our main positive result is for this setting.
(4) \textbf{Non-adaptive adversary.} The adversary selects an instance (with $n$ agents  and $T$ items) after seeing the algorithm's code, but doesn't know the outcomes of the random coins flipped by the algorithm. Our main negative result is for this setting.
(5) \textbf{Adaptive Adversary} When the adversary selects the value $v_{it}$ she has access to the algorithm's decisions for the first $t-1$ items. This is the setting studied in~\citet{BKPP18}.

\paragraph{Known vs Unknown $T$.} A subtle point that needs to be addressed is whether $T$ is fixed or not when the adversary picks her strategy. For the latter two adversaries we assume that $T$ is fixed (the results of~\cite{BKPP18} also need this assumption). For the first three adversaries it will be convenient to think of the adversary picking a distribution and then study our algorithm's behavior as $T$ goes to infinity. That is, the distribution picked by the adversary does not depend on $T$; so, it cannot for example have support of size $T$ or variance $1/T$, and so on. This allows us to bypass common issues regarding the dependence on the number of items and existence of fair allocations in probabilistic settings (e.g. the answer could depend on whether $T$ is divisible by $n$); see~\cite{DickersonGKPS14,kurokawa2016can,manurangsi2019envy} for some examples where this dependence is crucial. It is worth noting that the recent result of~\citet{bansal2019online} for the case of two correlated agents allows the agents' distribution to depend on $T$ (but only gets an $O(\log T)$ bound on envy, with high probability, and no guarantees about efficiency).
\vspace{-2mm}
\paragraph{Fairness and Efficiency.}
The utility profile of an allocation $\alloc$ is a vector $u = (u_1, \ldots, u_n)$ where $u_i = v_i(\alloc_i)$.
A utility vector $u$ dominates another utility vector $u'$, denoted by $u \succ u'$, if for all $i \in \agents$, $u_i \geq u'_i$ and for some $j \in \agents$, $u_j > u'_j$. An allocation that achieves utility $u$ is \textit{Pareto efficient} if there is no allocation with utility vector $u'$ such that $u' \succ u$. Uncertainty about the future when making allocation decisions will make it impossible to provide efficiency and fairness guarantees simultaneously. In order to measure the efficiency of our algorithms we instead use the following notion of approximate Pareto efficiency, initially defined by~\citet{ruhe1990varepsilon}: an allocation with utility profile $u$ is $\alpha$-Pareto efficient, if there is no feasible utility profile in which, compared to $u$, every agent is (strictly) more than $1/\alpha$ times better off. 

Since our setting is online, we also need to further specify whether our guarantees are worst-case or average-case with respect to the adversary instance and the randomness of our algorithms. For a worst-case guarantee, we say that an allocation is $\alpha$-Pareto efficient ex-post if it always outputs an $\alpha$-Pareto efficient allocation (that is, for all agent valuations and all possible outcomes of the random coins flipped by the algorithm). On the other hand, an allocation algorithm is $\alpha$-Pareto efficient ex-ante if the expected allocation is $\alpha$-Pareto efficient (where the expectation is with respect to the randomness in the instance and the algorithm). Our main positive result/algorithm will guarantee $1$-Pareto efficiency ex-post, while our main negative result shows that a certain notion of fairness is incompatible with $1/n$-Pareto efficiency ex-ante.

Regarding fairness, in this paper we focus on envy. Given an allocation $\alloc = (\alloc_1, \ldots, \alloc_n)$, let $\ENVY_{i, j}(\alloc) = \max \{ v_i(\alloc_j) - v_i(\alloc_i), 0 \}$ be the pairwise envy between agents $i$ and $j$, and let $\ENVY(\alloc) = \max_{i,j \in \agents} \ENVY_{i, j}(\alloc)$ be the maximum envy. Clearly, if $\ENVY(\alloc) = 0$ the allocation is envy-free. An allocation $\alloc$ is EF1, if for all pairs of agents $i,j$ $\ENVY_{i, j}(\alloc) \leq \max_{t \in A_j} v_{it}$. Note that this is a stronger guarantee than $\ENVY(\alloc) \leq 1$, since the highest value an agent has can be less than $1$. An algorithm has vanishing envy if the expected maximum pairwise envy is sublinear in $T$, that is $\E[\ENVY(\alloc)] \in o(T)$, or just $\lim_{T \rightarrow \infty} \E[\ENVY(\alloc)]/T \rightarrow 0$.


\vspace{-2mm}

\section{When Fairness and Efficiency are Compatible}\label{sec:compatible}

In this section we go over the first three, weaker adversaries, under which fairness and efficiency are compatible.
We start with the easiest setting, identical agents with i.i.d. items, in Section~\ref{sec:util}. We then proceed, in Sections~\ref{subsec:warm up} and~\ref{subsec:mainalgo}, to our main positive result, an algorithm for correlated agents with i.i.d. items that gives the optimal fairness and efficiency guarantees. This, of course, implies the same result for independent agents with i.i.d. items. In Section~\ref{subsec:warm up} we highlight some key insights, while avoiding some of the technical obstacles, and give an ex-post Pareto efficient algorithm with a slightly weaker fairness guarantee. We give our main algorithm in Section~\ref{subsec:mainalgo}.

\subsection{Identical Agents with i.i.d. Items}
\label{sec:util}
Consider an adversary that picks a single distribution $D$, with support $G_D$ of size $m$, and $v_{it}$ is sampled i.i.d. from $D$, for all agents $i$ and all items $t$. Then, efficiency and fairness can be simultaneously achieved using the following algorithm.

\begin{algorithm}
\eIf{$D$ is a point mass}
{Allocate each arriving item in a round-robin manner.}
{For each arriving item $t$, allocate it to the agent $i$ with the maximum value $v_{it}$, breaking ties uniformly at random.}
\caption{Utilitarian}\label{alg:util}
\end{algorithm}%

\begin{proposition}\label{prop:util-max-properties}
Algorithm~\ref{alg:util} outputs an allocation that is always Pareto efficient. Furthermore, for all $\varepsilon > 0$, there exists $T_0 = T_0(\varepsilon)$, such that if $T \geq T_0$, the output allocation satisfies EF1 or is envy-free with probability at least $1-\varepsilon$.
\end{proposition}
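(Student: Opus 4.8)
The plan is to mirror the algorithm's own case split on whether $D$ is a point mass (and to note that for $n=1$ both claims are trivial, so assume $n\ge 2$; throughout, $v$ denotes a single draw from $D$).

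\textbf{Pareto efficiency.} If $D$ is a point mass at some $c\ge 0$, every agent values every item at $c$, so any \emph{complete} allocation has utility profile $(c|A_1|,\dots,c|A_n|)$ with $\sum_i|A_i|=T$ fixed; no such profile can be Pareto-dominated, and the round-robin rule always outputs a complete allocation. If $D$ is not a point mass, the algorithm gives each item to an agent of maximum value, so $\sum_i v_i(A_i)=\sum_t\max_i v_{it}$ is the maximum utilitarian welfare over all allocations (by additivity), and any Pareto improvement would strictly increase this sum, a contradiction. Either way the output is Pareto efficient for \emph{every} outcome of the coins.

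\textbf{Fairness, point-mass case.} Here I would simply check that round-robin is always EF1: since $\big||A_i|-|A_j|\big|\le 1$, we get $v_i(A_j)-v_i(A_i)=c(|A_j|-|A_i|)\le c=\max_{t\in A_j}v_{it}$ whenever $A_j\neq\emptyset$, and there is no envy toward an empty bundle. This holds with probability $1$ for all $T$, which supplies the ``EF1'' disjunct of the statement.

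\textbf{Fairness, non-point-mass case.} Here I would prove the stronger ``envy-free with probability $\ge 1-\varepsilon$'' disjunct by concentration. Fix an ordered pair $(i,j)$ and set $Z_t=v_{it}\mathbf{1}[i\text{ receives }t]-v_{it}\mathbf{1}[j\text{ receives }t]$, so that $v_i(A_i)-v_i(A_j)=\sum_{t=1}^{T}Z_t$, the $Z_t$ are i.i.d.\ (rounds and tie-breaks are independent across $t$), and $Z_t\in[-1,1]$ (at most one agent receives a given item and $v_{it}\in[0,1]$). The crux is that $\E[Z_t]$ equals a constant $\delta>0$ that does not depend on $T$: by symmetry across all $n$ agents, $\E[v_{it}\mathbf{1}[i\text{ receives }t]]=\tfrac1n\E[X]$ where $X$ is the maximum of $n$ i.i.d.\ draws from $D$; by symmetry across the $n-1$ agents other than $i$, $\E[v_{it}\mathbf{1}[j\text{ receives }t]]=\tfrac1{n-1}\big(\E[v]-\tfrac1n\E[X]\big)$; subtracting gives $\delta=\tfrac1{n-1}\big(\E[X]-\E[v]\big)$, which is strictly positive precisely because $D$ is not a point mass ($X\ge v$ pointwise and $X>v$ with positive probability). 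Hoeffding's inequality then yields $\Pr[\sum_tZ_t<0]\le e^{-T\delta^2/2}$, and a union bound over the $n(n-1)$ ordered pairs bounds the probability that the output fails to be envy-free by $n(n-1)e^{-T\delta^2/2}$, which is at most $\varepsilon$ once $T\ge T_0(\varepsilon):=\tfrac{2}{\delta^2}\ln\!\big(n(n-1)/\varepsilon\big)$.

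\textbf{Main obstacle.} Everything except the sign of $\E[Z_t]$ is routine (Pareto efficiency, the point-mass analysis, Hoeffding, the union bound). The real content is the symmetrization computation — in particular handling ties so that the per-item winner's value is exactly the item maximum $X$ — together with the observation that $\E[X]-\E[v]$ is a \emph{positive constant}, not a quantity shrinking with $T$, as soon as $D$ has at least two points in its support. This last point is exactly where the modelling convention that $D$ is chosen independently of $T$ is used, and it is what allows a single threshold $T_0(\varepsilon)$ to work for all sufficiently large $T$.
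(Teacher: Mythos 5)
Your proof is correct. The Pareto efficiency and point-mass/round-robin parts coincide with the paper's argument. For the non-point-mass case you take a more self-contained route than the paper: the paper verifies the hypotheses of the Dickerson et al. theorem (Theorem~\ref{thm: john et al}, adapted to atomic distributions with random tie-breaking), and its verification of the key gap condition goes through a somewhat involved lower bound in terms of $\Var[D]$ and $p=\Pr[X<\bar X]$, ultimately yielding $\Delta \geq (p-p^n)\tfrac{c}{2(n-1)}+(p-p^n)\tfrac{c}{2}$. You instead compute the per-item expected envy contribution \emph{exactly} by symmetrization, obtaining the clean closed form $\delta=\tfrac{1}{n-1}\left(\E[X]-\E[v]\right)$ with $X$ the maximum of $n$ draws, observe this is a positive constant precisely when $D$ is not a point mass, and then apply Hoeffding and a union bound directly. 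The two arguments rest on the same underlying fact (a strictly positive, $T$-independent per-item mean gap plus concentration), but yours avoids both the black-box citation and the variance detour, and gives an explicit $T_0(\varepsilon)=\tfrac{2}{\delta^2}\ln\left(n(n-1)/\varepsilon\right)$; the paper's route has the advantage of connecting the statement explicitly to the prior work it credits. One point worth making explicit in a final write-up: the exchangeability argument needs the tie-breaking rule to be symmetric across agents and independent across rounds (both true for Algorithm~\ref{alg:util}), which is what guarantees both that the winner's value equals the item maximum and that the $Z_t$ are i.i.d.
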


This result is essentially given by~\citet{DickersonGKPS14}, in a different context.~\citet{DickersonGKPS14} show that in a \emph{static} setting with $K$ items and $n$ agents, where the value of agent $i$ for item $t$ is drawn from a distribution $D_{i}$, then under mild conditions on the distributions, an envy-free allocation exists with probability $1$ as $K$ goes to infinity.

\begin{theorem}[\citet{DickersonGKPS14}]\label{thm: john et al}
Assume that for all $i,j \in \agents$ and items $t$ the input distributions satisfy (1) $\Pr[ \arg\max_{k \in \agents} v_{kt} = \{ i \} ] = 1/n$, and (2) there exist constants $\mu, \mu^*$ such that $$
0 < \E[ v_{it} | \arg\max_{k \in \agents} v_{kt} = \{ j \} ] \leq \mu < \mu^* \leq \E[ v_{it} | \arg\max_{k \in \agents} v_{kt} = \{ i \} ].$$
Then for all $n \in O( K/ \log K )$, allocating each item to the agent with the highest value is envy-free with probability $1$ as $K \rightarrow \infty$.
\end{theorem}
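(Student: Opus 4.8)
The plan is to fix an ordered pair of distinct agents $i \neq j$, write the signed envy of $i$ towards $j$ as a sum of $K$ independent, bounded, small-variance random variables whose common mean is strictly positive (in favor of no envy), and then use a concentration inequality to bound the probability that $i$ envies $j$ by something exponentially small in $K/n$. A union bound over the at most $n^2$ ordered pairs then yields envy-freeness with probability tending to $1$, provided $K/n$ dominates $\log n$, which is exactly what $n \in O(K/\log K)$ guarantees.

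First I would set up the per-item contribution. Let $W^i_t = \mathds{1}[\arg\max_{k} v_{kt} = \{i\}]$ indicate that $i$ wins item $t$; condition (1) gives $\Pr[W^i_t = 1] = 1/n$, and summing over agents forces ties to have probability zero, so the greedy allocation is well defined almost surely. Writing $A_i = \{t : W^i_t = 1\}$, we have $v_i(A_i) - v_i(A_j) = \sum_{t=1}^K X_t$ with $X_t := v_{it}(W^i_t - W^j_t)$, and agent $i$ does not envy $j$ precisely when this sum is nonnegative. The $X_t$ are independent across $t$ (the value vectors are drawn independently per item) and satisfy $|X_t| \le 1$.

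Next I would compute the mean and variance of $X_t$. Since $W^i_t$ and $W^j_t$ indicate disjoint events, conditions (1) and (2) give $\E[X_t] = \tfrac1n\,\E[v_{it}\mid i \text{ wins}] - \tfrac1n\,\E[v_{it}\mid j \text{ wins}] \ge (\mu^* - \mu)/n > 0$, hence $\E\!\left[\sum_t X_t\right] \ge K(\mu^* - \mu)/n$. The crucial observation is that $X_t \neq 0$ only when $i$ or $j$ wins item $t$, an event of probability $2/n$, so $\E[X_t^2] \le 2/n$ and the total variance is at most $2K/n$. Applying Bernstein's inequality to $\sum_t (X_t - \E X_t)$ with variance proxy $2K/n$, bound $|X_t - \E X_t| \le 2$, and deviation $s = \E[\sum_t X_t] \ge K(\mu^*-\mu)/n$ yields
\[
\Pr\Big[\textstyle\sum_{t=1}^K X_t < 0\Big] \le \exp\!\left(-\frac{s^2/2}{2K/n + 2s/3}\right) \le e^{-cK/n},
\]
for a constant $c = c(\mu,\mu^*) > 0$. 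A union bound over the $n(n-1) \le n^2$ ordered pairs gives failure probability at most $n^2 e^{-cK/n} = \exp(2\log n - cK/n)$, which tends to $0$ as $K \to \infty$ whenever $cK/n - 2\log n \to \infty$; since $n = O(K/\log K)$ forces $K/n = \Omega(\log K) \ge \Omega(\log n)$, choosing the hidden constant small enough makes this hold.

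I expect the one genuinely load-bearing estimate to be the choice of concentration inequality. A naive Hoeffding bound sees only the range $[-1,1]$ and yields $\exp(-K(\mu^*-\mu)^2/(2n^2))$, which after the union bound tolerates only $n = O(\sqrt{K/\log K})$. To reach the claimed $n = O(K/\log K)$ one must exploit that each relevant item is won by $i$ or $j$ with probability only $2/n$, making the variance $O(K/n)$ rather than $O(K)$; feeding this smaller variance into a Bernstein- or Bennett-type bound is precisely what upgrades the admissible range of $n$ to the stated $O(K/\log K)$.
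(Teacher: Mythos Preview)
The paper does not prove this theorem; it is quoted as a result of \citet{DickersonGKPS14} and used as a black box in the proof of Proposition~\ref{prop:util-max-properties}. So there is no ``paper's own proof'' to compare against.

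That said, your argument is correct and is essentially the proof given in the original reference: express $v_i(A_i) - v_i(A_j)$ as a sum of i.i.d.\ bounded variables, use condition~(2) to get a positive mean of order $(\mu^*-\mu)/n$ per term, exploit that each term is nonzero only with probability $2/n$ to bound the variance by $O(K/n)$, apply Bernstein, and union-bound over pairs. Your observation that Hoeffding alone would only yield $n = O(\sqrt{K/\log K})$ and that the variance bound is what buys the full $n = O(K/\log K)$ range is exactly the point.

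One minor caveat worth flagging: your final step gives $\Pr[\text{some envy}] \le \exp(2\log n - cK/n)$ with $c$ depending only on $\mu^*-\mu$, so the conclusion holds only when the constant hidden in $n = O(K/\log K)$ is small enough relative to $c$. You acknowledge this (``choosing the hidden constant small enough''), which is the honest reading of the theorem as stated; the $O(\cdot)$ in the hypothesis should indeed be read as ``for some sufficiently small constant'' rather than ``for every constant.''
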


The first condition implies that each player receives roughly $1/n$ of the goods, and the second condition ensures that each player has higher expected value for each of his own goods compared to goods allocated to other players. Given this result, it is straightforward to prove Proposition~\ref{prop:util-max-properties}. The proof is relegated to Appendix~\ref{app: missing proposition}.

\subsection{Warm-up: Pareto Optimal Rounding}\label{subsec:warm up}

One might ask if we can keep the simplicity of Algorithm~\ref{alg:util} and extend it to work with a stronger adversary, such as when agent's valuations are independent but not necessarily identically distributed. Note that in this case, asking for the probability that agent $i$ has the highest value to be $1/n$ is a fairly strong requirement, so the result of~\citep{DickersonGKPS14} no longer applies. One possible approach is to assign item $t$ to the agent $i$ for whom $F_{D_i}(v_{it})$ is highest, where $F_{D_i}$ is the quantile function for agent $i$'s value distribution. In fact, this approach is fruitful if one focuses solely on fairness, as shown by~\citet{kurokawa2016can}. However, the resulting allocation is not guaranteed to be Pareto efficient. For example, consider a simple scenario with 2 agents: the first agent has values distributed uniformly in $[0,1]$, and the second agent has values uniformly distributed in $[1/2 - \varepsilon, 1/2 + \varepsilon]$ for some small $\varepsilon$. The second agent cares mostly about the number of items she got; items with high quantiles and low quantiles have essentially the same value for her. Therefore, intuitively, exchanging items that agent $2$ won but have large values for agent $1$ (i.e. items where both had high quantiles and $2$ ended up larger) for items that agent $1$ won with low valuation (i.e. items where both had low quantile and $1$ ended up larger) will result in a Pareto improvement.\footnote{For a specific instance, consider the allocation for the following three items: $(0.9, \frac{1}{2} + \varepsilon), (0.1, \frac{1}{2} - \varepsilon), (0.1, \frac{1}{2} - \varepsilon)$. The second agent receives the first item, while the first agent receives the latter two, which is not Pareto optimal.}

All in all, achieving fairness and efficiency simultaneously, beyond identical agents, seems a lot more intricate than fairness or efficiency in isolation. We will skip the independent agent case altogether, and directly study the harder problem of correlated agents: each item $t$ draws its type $\gamma_j$ from a distribution $D$. Items are i.i.d. but agent values can be correlated. Before we present the optimal algorithm we illustrate some key ideas by giving a simple algorithm that achieves ex-post Pareto efficiency and a weaker notion of fairness, namely vanishing envy with high probability. Recall that $f_D(\gamma_j)$ is the probability that an item drawn from $D$ has type $\gamma_j$, and $G_D$, with $|G_D|=m$, is the support of $D$. $v_{i}(\gamma_j)$ is the value of an item of type $\gamma_j$ to agent $i$; note that this is different than $v_{it}$, the value of agent $i$ for the $t$-th item to arrive.

Although the original problem concerns indivisible items, our approach solves an offline divisible item allocation problem as an intermediate step. Rather than each item being assigned to an agent, a fractional allocation is an $n$ by $m$ matrix $X$, where $n$ is the number of agents and $m$ is the support of $D$. For each agent $i$ and item $j$, $X_{ij} \in [0,1]$ represents the proportion of item $j$ allocated to agent $i$. $X$ is constrained so that $\sum_{i \in \agents} X_{ij} = 1$ for all items $j$. The $i^{th}$ row of $X$, $X_i$, can be thought of as the fractional allocation of agent $i$. Finally, we assume agents have linear preferences, where for any two agents $i, j$, $v_i(X_j) = \sum_{k \in [m]} v_{ik} X_{jk}$.

\begin{algorithm}
\leavevmode \newline Input: Distribution $D$ over item types, agent valuation functions $v_i$.
\begin{enumerate}
\item Define $v'_i$ for each agent $i$ as follows. For each $\gamma_j \in G_D$, set $v'_i(\gamma_j) = v_i(\gamma_j) f_D(\gamma_j)$.
\item Find the divisible allocation $X$ of $G_D$ that maximizes the product of utilities w.r.t $v'$.
\item In the online setting, allocate the newly arrived item $t$ with type $\gamma_j$ to agent $i$ with probability $X_{ij}$, for all $t = 1 \dots T$.
\end{enumerate}
\caption{Pareto Optimal Rounding}\label{alg:POR}
\end{algorithm}

We first show that Algorithm~\ref{alg:POR} always produces a Pareto optimal allocation. In fact, we show something much stronger: allocating type $\gamma_j$ to agent $i$ with probability $X_{ij}$ always results in a Pareto efficient allocation (ex-post) for \emph{every} Pareto efficient fractional allocation $X$.

\begin{lemma}
\label{lem:por-po}
Given a distribution $D$ over $m$ item types and valuation function $v_i$ for each agent $i$, let $X$ be a fractional and Pareto optimal allocation of $G_D$ under valuation functions $v'_i$, where $v'_i (\gamma_j) = v_i(\gamma_j) \cdot f_D(\gamma_j)$. Let $S$ be a set of $T$ items, where the type of the $t$-th item is drawn from $D$. Let $\alloc = (\alloc_1, \ldots, \alloc_n)$ be any allocation of $S$ where an item with type $\gamma_j$ is allocated to agent $i$ only if $X_{ij} > 0$.
Then $\alloc$ is Pareto optimal under $v$.
\end{lemma}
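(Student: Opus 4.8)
\textbf{Proof plan for Lemma~\ref{lem:por-po}.}

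The plan is to argue by contradiction: suppose the (integral) allocation $\alloc$ of the realized items $S$ is Pareto-dominated under $v$ by some allocation $\alloc'$ of $S$. I would aggregate each of $\alloc$ and $\alloc'$ by item type to obtain fractional allocations of the $m$ types $G_D$, and then show that this aggregated version of $\alloc'$ Pareto-dominates the aggregated version of $\alloc$ under the scaled valuations $v'$, contradicting the Pareto optimality of $X$ — after also checking that the aggregated version of $\alloc$ is itself achievable/ dominated by $X$ in the right sense. Concretely, for a type $\gamma_j$ let $n_j$ be the number of realized items of that type in $S$. Given any allocation $B=(B_1,\dots,B_n)$ of $S$, define the fractional type-allocation $Y^B$ by $Y^B_{ij} = (\text{number of type-}\gamma_j\text{ items in }B_i)/n_j$ (and $0$ if $n_j=0$); this is a valid fractional allocation of $G_D$. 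The key computation is that agent $i$'s value satisfies $v_i(B_i) = \sum_j (\text{\# type-}\gamma_j\text{ items in }B_i)\, v_i(\gamma_j) = \sum_j n_j\, Y^B_{ij}\, v_i(\gamma_j)$. So if we rescale item $\gamma_j$ by weight $n_j$ instead of $f_D(\gamma_j)$, the vector of utilities of $B$ under $v$ equals the vector of utilities of $Y^B$ under the $n_j$-scaled valuations.

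Next I would invoke (or quickly reprove) the robustness of Pareto optimality of $X$ under rescaling: if $X$ is Pareto optimal under $v'_i(\gamma_j) = v_i(\gamma_j) f_D(\gamma_j)$, then $X$ is Pareto optimal under $v''_i(\gamma_j) = v_i(\gamma_j) w_j$ for any strictly positive weights $w_j$ — indeed, for linear/additive valuations a fractional allocation is Pareto optimal iff it is supported by a common strictly positive price vector (equivalently iff it maximizes some positive-weighted welfare $\sum_i \lambda_i v''_i(X_i)$ with all $\lambda_i>0$), and multiplying each good's value by a positive constant $w_j/f_D(\gamma_j)$ just rescales that price vector coordinatewise, preserving the optimality certificate. (We may freely restrict attention to those types with $n_j>0$, since types that never appear carry zero total mass and don't affect anyone's realized utility; we should also handle the degenerate event $n_j=0$ for some $j$, which is fine because the excerpt studies the regime $T\to\infty$ and, more to the point, the lemma's conclusion only concerns the realized items.) Thus $X$ is Pareto optimal under the $n_j$-scaled valuations.

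Now combine the pieces. Since $\alloc'$ Pareto-dominates $\alloc$ under $v$, by the key computation $Y^{\alloc'}$ Pareto-dominates $Y^{\alloc}$ under the $n_j$-scaled valuations. On the other hand, the hypothesis that an item of type $\gamma_j$ is allocated in $\alloc$ only to agents $i$ with $X_{ij}>0$ means the support of $Y^{\alloc}$ is contained in the support of $X$; for additive valuations with a common positive supporting price vector $p$, every allocation whose support lies in the MBB/best-response support of $X$ achieves exactly the same utility profile as $X$ (each agent spends her fixed ``budget'' $v_i(X_i)$ only on her maximum-bang-per-buck goods, so any reshuffling within that support leaves every $v_i(\cdot)$ unchanged). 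Hence $Y^{\alloc}$ has the same utility profile as $X$ under the $n_j$-scaled valuations, and therefore $Y^{\alloc'}$ Pareto-dominates $X$ under those valuations — contradicting the Pareto optimality of $X$ established above. This contradiction shows $\alloc$ is Pareto optimal under $v$.

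\textbf{Main obstacle.} I expect the delicate step to be the two ``support'' arguments, i.e. pinning down exactly why (i) Pareto optimality of a fractional additive allocation is invariant under positive coordinatewise rescaling of item values, and (ii) any fractional allocation supported within $X$'s best-response set has the same utility profile as $X$. Both are standard once one writes down a supporting price vector (equivalently, the KKT/complementary-slackness conditions for the Eisenberg–Gale or utilitarian LP characterization of fractional PO), but stating the right characterization cleanly — and making sure the price vector can be taken strictly positive, which is where the finite support and all-goods-desired structure matters — is the part that needs care. Everything else (the type-aggregation bookkeeping and the $v_i(B_i) = \sum_j n_j Y^B_{ij} v_i(\gamma_j)$ identity) is routine.
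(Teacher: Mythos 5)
Your aggregation-by-type bookkeeping and your rescaling step are sound and in fact parallel the paper's own argument: the paper proves the rescaling invariance of Pareto optimality more elementarily, by taking the improving direction $\Delta$, rescaling its columns by $1/f_D(\gamma_j)$, and moving a small feasible step $c\Delta'$, rather than invoking the positive-weighted-welfare / supporting-price characterization you lean on (which is correct for linear utilities over the allocation polytope, but is the heavier tool).

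The genuine gap is in your final step. The claim that ``every allocation whose support lies in the MBB/best-response support of $X$ achieves exactly the same utility profile as $X$'' is false: reshuffling within the MBB support changes how much each agent receives and hence her utility. Counterexample: one item type valued $1$ by each of two agents, $X = (1/2, 1/2)$ (which is Pareto optimal), and $Y = (1,0)$; the support of $Y$ is contained in that of $X$, yet the utility profiles differ. The phrase ``each agent spends her fixed budget $v_i(X_i)$'' imports the Fisher-market budget constraint, which an arbitrary reallocation within the support need not respect (utility equals spending times the bang-per-buck ratio, and spending is not fixed). So you cannot conclude that $Y^{\alloc'}$ Pareto-dominates $X$. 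What is true, and what your argument actually needs, is that $Y^{\alloc}$ is itself Pareto optimal under the $n_j$-scaled valuations: any allocation that fully allocates the realized types along MBB edges (at prices supporting $X$, rescaled coordinatewise) maximizes the same strictly-positively-weighted welfare $\sum_i v_i(\cdot)/r_i$ as $X$, where $r_i$ is agent $i$'s maximum bang-per-buck ratio, and is therefore Pareto optimal even though its utility profile may differ; domination of $Y^{\alloc}$ by $Y^{\alloc'}$ then yields the contradiction. The paper avoids prices entirely at this point with a different device: it writes the replicated fractional allocation as a convex combination $\sum_{A \in \mathcal{A}} p_A \cdot A$ over all integral allocations consistent with the support of $X$, and notes that if some component $\hat{A}$ with $p_{\hat{A}} > 0$ were dominated, replacing it by its dominator would dominate the fractional allocation, contradicting its Pareto optimality. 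Either repair closes the gap; as written, your last step does not go through.
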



\begin{proof} 
We prove the theorem in two steps. First, we show that $X$ is also Pareto optimal under $v$. Second, we show that if $X$ is Pareto optimal under $v$, then $\alloc$ will be Pareto optimal under $v$.

First, suppose that $X$ is not Pareto optimal under $v$. Let $X'$ be an allocation that dominates $X$ under $v$ and let $\Delta = X' - X$. We construct $\Delta'$, where $\Delta'_{ij} = \Delta_{ij} \cdot \frac{1}{f_D(\gamma_j)}$, so that the change in utilities induced by $\Delta'$ under $v'$ is the same as the change in utilities induced by $\Delta$ under $v$. Therefore, the (possibly infeasible) allocation $X + \Delta'$ dominates $X$ under $v'$. However, for all $c \in [0,1]$, allocation $X + c\Delta$ is feasible and still dominates $X$ under $v$. Setting $c = \min_k f_D(\gamma_k)$, $X + c\Delta'$ is a feasible allocation that Pareto dominates $X$ under $v'$, which contradicts the Pareto optimality of $X$ under $v'$.

Next, let $\mathcal{A}$ be the set of all allocations $A$ (of $T$ items) such that for all $i \in \agents$ and $j \in S$ an item with type $\gamma_j$ is allocated to agent $i$ only if $X_{ij} > 0$. We can write $X$ as $\sum_{A \in \mathcal{A}} p_A \cdot A$, such that $p_A > 0$ for all $A$, and $\sum_{A \in \mathcal{A}} p_A = 1$. This can be done by simply giving each item $j \in S$ to agent $i$ with probability $X_{ij}$. It is easy to confirm that the set of all possible allocations that can be outputs by this procedure is precisely $\mathcal{A}$, and the probability that a specific allocation $A$ is generated is exactly $p_A = \prod_{j \in S} \prod_{i \in \agents} X_{ij} \cdot \mathds{1}\{ \text{$i$ gets item $j$ in $A$} \}$. Finally, since $X = \sum_{A \in \mathcal{A}} p_A \cdot A$, if an allocation $\hat{A} \in \mathcal{A}$ is Pareto dominated by an allocation $\hat{A}'$ under $v$, by replacing the $p_{\hat{A}} \hat{A}$ term by a $p_{\hat{A}} \hat{A}'$ term we get a fractional allocation $X'$ that dominates $X$, contradicting the Pareto optimality of $X$ under $v$.
\end{proof}

Maximizing the product of utilities leads to a Pareto efficient divisible item allocation. Therefore, Lemma~\ref{lem:por-po} implies that Algorithm~\ref{alg:POR} is ex-post Pareto efficient. Algorithm~\ref{alg:POR} guarantees a notion of fairness slightly weaker than vanishing envy: vanishing envy with high probability.
\begin{theorem}
\label{thm:por-ve}
For all $\varepsilon > 0$, there exists $T_0 = T_0(\varepsilon)$, such that if $T \geq T_0$, Algorithm~\ref{alg:POR} outputs an allocation $\alloc$ such that for all agents $i$, $j$, $\ENVY_{i, j}(\alloc) \in o(T)$ with probability at least $1-\varepsilon$.
\end{theorem}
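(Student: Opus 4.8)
The plan is to reduce, for a fixed ordered pair of agents, the pairwise envy after all $T$ rounds to a sum of i.i.d. bounded random variables with non-positive mean, apply a Chernoff--Hoeffding bound, and finish with a union bound over the at most $n^2$ pairs. The only property of the fractional allocation $X$ that I will use is that, being the divisible allocation of $G_D$ that maximizes the product of utilities under $v'$, it is envy-free under $v'$ (the Nash-optimal/Eisenberg--Gale allocation of divisible goods is envy-free); that is, $v'_i(X_i)\ge v'_i(X_j)$ for every $i,j$, i.e. $\sum_{k\in[m]} f_D(\gamma_k)\,v_i(\gamma_k)\,X_{ik}\ \ge\ \sum_{k\in[m]} f_D(\gamma_k)\,v_i(\gamma_k)\,X_{jk}$.

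Fix $i,j\in\agents$. For round $t$ let $\gamma_{k(t)}$ denote the (random) type of item $t$, and set $Z_t = v_i(\gamma_{k(t)})\big(\mathds{1}\{\text{item }t\to j\}-\mathds{1}\{\text{item }t\to i\}\big)$, the contribution of item $t$ to $v_i(\alloc_j)-v_i(\alloc_i)$, so that $v_i(\alloc_j)-v_i(\alloc_i)=\sum_{t=1}^T Z_t$ and $\ENVY_{i,j}(\alloc)=\max\{\sum_{t=1}^T Z_t,\,0\}$. By construction of Algorithm~\ref{alg:POR}, the pair (type of item $t$, agent to which item $t$ is assigned) is i.i.d.\ across $t$ — item types are drawn i.i.d.\ from $D$ and the assignment coin for item $t$ is independent of everything else — so the $Z_t$ are i.i.d., and $|Z_t|\le 1$ since $v_i(\cdot)\in[0,1]$ and the indicator difference lies in $\{-1,0,1\}$. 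Conditioning on the type and then averaging over $D$,
\[
\E[Z_t]=\sum_{k\in[m]} f_D(\gamma_k)\,v_i(\gamma_k)\,(X_{jk}-X_{ik})=v'_i(X_j)-v'_i(X_i)\le 0,
\]
using the envy-freeness of $X$ under $v'$ recorded above.

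It then remains to apply Hoeffding's inequality: since $\E[Z_t]\le 0$, $\Pr\big[\sum_{t=1}^T Z_t\ge\lambda\big]\le\Pr\big[\sum_{t=1}^T (Z_t-\E Z_t)\ge\lambda\big]\le\exp(-\lambda^2/(2T))$. Fix once and for all a sublinear threshold function $h(T)=o(T)$ that does not depend on $\varepsilon$, e.g.\ $h(T)=T^{3/4}$; then the bound with $\lambda=h(T)$ is at most $\exp(-\sqrt{T}/2)$, which drops below $\varepsilon/n^2$ for all $T\ge T_0(\varepsilon)$ with an appropriate threshold $T_0(\varepsilon)$. A union bound over the at most $n^2$ ordered pairs $(i,j)$ then yields that, with probability at least $1-\varepsilon$, $\ENVY_{i,j}(\alloc)\le h(T)=o(T)$ for all $i,j$ simultaneously, which is the theorem. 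There is no real technical obstacle here; the only points that need a little care are the bookkeeping of the two quantifiers (the function $h$ must be chosen uniformly, while $T_0$ absorbs the $\log(n^2/\varepsilon)$ loss) and the appeal to the (standard) fact that the product-maximizing divisible allocation is envy-free, which is what turns the mean of the $Z_t$ non-positive.
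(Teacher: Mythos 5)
Your proof is correct and follows essentially the same route as the paper's: both rely on the envy-freeness of the product-maximizing fractional allocation to make the expected per-item envy contribution non-positive, and then apply Hoeffding's inequality to a sum of bounded i.i.d.\ terms. The only cosmetic differences are that you concentrate the single sum $\sum_t Z_t$ (rather than bounding the deviations of $v_i(\alloc_i)$ and $v_i(\alloc_j)$ separately) and add an explicit union bound over pairs, both of which are fine.
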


\begin{proof}
The divisible allocation $X$ that maximizes the product of utilities is envy-free~\cite{varian1974equity}.
Thus, we have $\sum_{k \in [m]} v_i(\gamma_k)f_D(\gamma_k)X_{ik} \geq \sum_{k \in [m]} v_i(\gamma_k)f_D(\gamma_k)X_{jk}$ for all pairs of agents $i, j$.
The value of agent $i$ for the bundle allocated to agent $j$ by Algorithm~\ref{alg:POR}, $v_i(\alloc_j)$, is a random variable depending on randomness in both the algorithm and item draws. Let $I^{k, j}_t$ be an indicator random variable for the event that item $t$ is of type $\gamma_k$ and is assigned to agent $j$. We have for any pair of agents $i, j$: $v_i(\alloc_j) = \sum_{t \in [T]} \sum_{k \in [m]} v_i(\gamma_k) I^{k, j}_t$. Therefore, $\E[v_i(\alloc_j)] = T \cdot \sum_{k \in [m]} v_i(\gamma_k)f_D(\gamma_k)X_{jk}$. From the previous discussion, we then have that $\E[v_i(A_i)] \geq \E[v_i(A_j)]$.
Using Hoeffding's inequality with parameter $\delta = \sqrt{T\log{T}}$ we get: $\Pr[v_i(\alloc_i) - \E[v_i(\alloc_i)]] \leq -\sqrt{T\log{T}}] \leq 2 \exp\left(-\frac{2T\log{T}}{T}\right) = \frac{2}{T^2},$
and similarly for the deviation of $v_i(A_j)$. We conclude that by picking $T_0 = \sqrt{4/\varepsilon}$, then with probability at least $1 - \frac{4}{T^2}\geq 1-\varepsilon$, $\ENVY_{i, j}(\alloc) = \max \{ v_i(\alloc_j) - v_i(\alloc_i), 0 \} \leq 2\sqrt{T \log T} \in o(T)$.
\end{proof}

\subsection{Beyond Vanishing Envy: Optimal Fairness for Correlated Agents}
\label{subsec:mainalgo}

In the proof of Theorem~\ref{thm:por-ve} we use standard tail inequalities to show that, with high probability, the envy between any two agents does not deviate from its expectation by more than $O(\sqrt{T \log{T}})$. Since the divisible allocation we are using to round guarantees envy-freeness, this leads to vanishing envy. If instead we were able to find an allocation $X$ for the divisible item problem that guarantees \textit{strong envy-freeness}, where for every pair of agents $i, j$, $v_i(X_i) > v_i(X_j)$, and used $X$ to guide the online decisions, by a similar calculation the final allocation would be envy-free with high probability, since $\E[v_i(\alloc_i)] - \E[v_i(\alloc_j)]$ decreases much faster than the deviation bound of $\sqrt{T \log {T}}$. Unfortunately, strong envy-free allocations do not always exist, even for divisible items; consider the case of two agents with identical valuations. Interestingly, in isolation this condition is also sufficient: if no two players have identical valuation functions (up to multiplicative factor), there exists a strongly envy-free allocation; see~\citet{barbanel2005geometry}.
However, if we want both Pareto optimality and strong envy-freeness, the condition is no longer sufficient; see Appendix~\ref{app:sef-po-nonident-cex} for an example\footnote{\citet{barbanel2005geometry} gives a sufficient condition for Pareto optimality and strong envy-freeness to be simultaneously achievable. Unfortunately, this condition cannot be satisfied here: it roughly asks for every pair of agents to have a different value for \emph{every} subset (allowing for fractional items) of items. See Theorem 12.36 in \citet{barbanel2005geometry} for the exact statement and proof.}.

Nevertheless, we can still achieve a notion of fairness that is weaker than strong envy-freeness, but still sufficient for our purposes. We say that agent $i$ is indifferent to agent $j$ if $v_i(\xalloc_i) = v_i(\xalloc_j)$. We then view indifference as a directed graph, where the nodes consist of the $n$ agents and there is an edge from $i$ to $j$ if $i$ is indifferent to $j$. For a divisible item allocation $\xalloc$, let $I(\xalloc)$ refer to the indifference graph. Note that for an envy-free allocation $\xalloc$, lack of an edge from node $i$ to node $j$ in $I(\xalloc)$ implies that $v_i(\xalloc_i) > v_i(\xalloc_j)$, i.e. strong envy-freeness between the two agents.

\begin{definition}\label{def:cisef}
A divisible allocation $\xalloc$ is \textit{clique identical strongly envy-free} (CISEF) if $\xalloc$ is envy-free and there exists a partition of agents $\agents$ into $s$ sets $C_1, \ldots, C_s$ such that: (1) For each $i \in [s]$, $C_i$ is a clique in $I(\xalloc)$, (2) for each $i, j \in [s]$ where $i \neq j$, there are no edges between $C_i$ and $C_j$, (3) for each $i \in [s]$, for each pair of agents $j, k \in C_i$, $X_j = X_k$, and (4) there exists positive scalars $r_1, \ldots, r_n$ such that for each $i \in [s]$, pair of agents $j, k \in C_i$ and item $l$ where $X_{jl} > 0$ (or $X_{kl} > 0$), $\frac{v_{jl}}{r_j} = \frac{v_{kl}}{r_k}$.
\end{definition}
\vspace{-2mm}
In other words, an allocation is CISEF if $\xalloc$ is envy-free, the graph $I(\xalloc)$ is a disjoint union of cliques, agents in a clique have identical allocations, and agents have identical valuations (up to a multiplicative factor) over items allocated to any agent in the clique.

The intuition behind this definition is the following. As discussed in Appendix~\ref{app:sef-po-nonident-cex}, it is impossible to find a strongly envy-free and Pareto optimal allocation in situations where some agents have valuations that are identical for the items that they could receive in a Pareto optimal allocation. In the definition of CISEF, these too similar agents will be given identical allocations and end up in the same clique. However, in all other cases, the allocation is strongly envy-free.

In Theorem~\ref{thm:main-disjoint-cliques} in Section~\ref{sec: CISEF} we prove our main structural result. We will show that for any divisible item problem instance, we can find an allocation $\xalloc^*$ that is both Pareto efficient and CISEF. In the remainder of this section we present a slightly modified version of Algorithm~\ref{alg:POR} that, given such a Pareto efficient and CISEF divisible allocation, remains Pareto efficient ex-post and also achieves the target fairness properties.

\begin{algorithm}
Input: Item distribution $D$, agent valuation functions $v_i$.
\begin{enumerate}
\item Define $v'_i$ for each agent $i$ as follows. For each $\gamma_j \in G_D$, set $v'_i(\gamma_j) = v_i(\gamma_j) f_D(\gamma_j)$.
\item Compute a divisible allocation $X^*$ of $G_D$ under valuation functions $v'_i$ that is both Pareto efficient and CISEF (see Section~\ref{sec: CISEF} for an algorithm). Let $C_1, \cdots, C_s$ be the disjoint cliques of the graph.
\item In the online setting, assign the newly arrived item $t$ with type $\gamma_j$ to clique $C_i$ with probability $\sum_{k \in C_i} X^*_{kj}$. After an item is assigned to a clique $C_i$, allocate it among the agents in $C_i$ by giving it to the agent in $C_i$ who has received the least value so far, according to \emph{all} agents in the clique.\footnote{$i$ thinks that $argmin_{j \in \agents} v_i(\alloc^t_j)$ has the smallest value so far. We select the agent with the smallest value according to all $i \in \agents$, which is unique, up to tie breaking, since all agents agree on the value of all items that have gone to the clique (up to multiplicative factors).}
\end{enumerate}
\caption{Pareto Optimal Clique Rounding}\label{alg:POCR}
\end{algorithm}

\begin{theorem}\label{thm:POCR}
Algorithm~\ref{alg:POCR} is ex-post Pareto efficient and, for all $\varepsilon > 0$, there exists $T_0 = T_0(\varepsilon)$, such that if $T \geq T_0$, for all agents $i, j$, one of the two guarantees holds: either $i$ envies $j$ by at most one item with probability $1$ or $i$ will not envy $j$ with probability at least $1-\varepsilon$.
\end{theorem}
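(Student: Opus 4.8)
I would verify the two assertions separately, and for the fairness part split into agents that share a clique and agents that do not.

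\emph{Ex-post Pareto efficiency.} The plan is to reduce to Lemma~\ref{lem:por-po}. In Algorithm~\ref{alg:POCR} an item of type $\gamma_j$ can reach agent $k$ only after it is routed to $k$'s clique $C_i$, which happens only when $\sum_{\ell \in C_i} X^*_{\ell j} > 0$; since all members of a clique share the same fractional bundle (property~(3) of CISEF), this is equivalent to $X^*_{kj} > 0$. Hence every realized allocation $\alloc$ gives an item of type $\gamma_j$ to $k$ only when $X^*_{kj} > 0$, and since $X^*$ is Pareto efficient under the scaled valuations $v'$, Lemma~\ref{lem:por-po} yields that $\alloc$ is Pareto efficient under $v$ for every run of the algorithm.

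\emph{Fairness, agents in the same clique.} If $i, j \in C_c$, every item ever received by $i$ or $j$ was routed to $C_c$, and for those items all members of $C_c$ agree on value up to the positive scalars $r_k$ of property~(4); in particular they all agree on which member currently values her bundle the least, so the within-clique rule is exactly ``hand the next item to the member of $C_c$ that currently values her bundle the least.'' The standard balancing argument then gives EF1 with probability $1$: looking at the last item $j$ received, $j$ had (weakly) the least value just before it was allocated, so one extra item cannot push $v_i(\alloc_j)$ above $v_i(\alloc_i) + \max_{t \in \alloc_j} v_{it}$.

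\emph{Fairness, agents in different cliques.} Suppose $i \in C_c$ and $j \in C_d$ with $c \neq d$; I claim $i$ does not envy $j$ with probability at least $1 - \varepsilon$ once $T$ is large. Because $X^*$ is envy-free and there is no $i \to j$ edge in $I(X^*)$ (distinct cliques, property~(2)), the gap $\delta_{ij} := v'_i(X^*_i) - v'_i(X^*_j)$ is strictly positive and independent of $T$. It therefore suffices to show that, with probability $1 - O(1/T^2)$, $v_i(\alloc_k) = T\, v'_i(X^*_k) \pm O(\sqrt{T \log T})$ for both $k = i$ and $k = j$: then $v_i(\alloc_i) - v_i(\alloc_j) = T \delta_{ij} \pm O(\sqrt{T \log T}) > 0$ once $T \ge T_0(\delta_{ij}, \varepsilon)$, and a union bound over the $O(n^2)$ pairs gives a single $T_0(\varepsilon)$. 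The case $k = i$ is easy: $v_i(\alloc_{C_c}) = \sum_t v_{it} \mathds{1}[t \to C_c]$ is a sum of i.i.d.\ $[0,1]$ random variables with mean $T |C_c| v'_i(X^*_i)$, hence Hoeffding-concentrated, and since inside $C_c$ the rule balances $v_i$ itself (property~(4)) the $v_i$-values of $C_c$'s members differ pairwise by at most $\max_t v_{it} \le 1$, so $v_i(\alloc_i) = \tfrac{1}{|C_c|} v_i(\alloc_{C_c}) \pm 1$.

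The case $k = j$ is the technical heart, and where I expect the real work. Inside $C_d$ the rule balances the clique-common weight (call it $W$, the per-type value on which $C_d$ agrees up to the scalars $r_k$), which need not be proportional to $i$'s valuation, so $v_i(\alloc_j)$ is not simply $\tfrac{1}{|C_d|} v_i(\alloc_{C_d})$. The key observation is that which member of $C_d$ is picked at round $t$ is a function of the history $\mathcal{F}_{t-1}$ and the tie-break alone, not of the type of item $t$; writing $\pi_j(t)$ for this $\mathcal{F}_{t-1}$-measurable probability, one computes (using $\sum_{k \in C_d} X^*_{k\ell} = |C_d| X^*_{j\ell}$) that $\E[\,v_{it}\mathds{1}[t \to j] \mid \mathcal{F}_{t-1}\,] = \pi_j(t)\,|C_d|\,v'_i(X^*_j)$, and the analogous identity with $W$ in place of $v_i$. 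Both $v_i(\alloc_j) - \sum_t \E[v_{it}\mathds{1}[t \to j] \mid \mathcal{F}_{t-1}]$ and its $W$-analogue are martingales with bounded increments, so Azuma's inequality bounds each by $O(\sqrt{T \log T})$; cancelling the shared factor $\sum_t \pi_j(t)$ shows $v_i(\alloc_j)$ equals $v_i$'s valuation rate times $W(\alloc_j)$ up to $O(\sqrt{T\log T})$, and $W(\alloc_j) = T \cdot (\text{mean } W \text{ per item}) \pm O(\sqrt{T \log T})$ because the rule balances $W$ inside $C_d$ (pairwise differences at most $\max_\ell W_{\gamma_\ell}$) while $W(\alloc_{C_d})$ is again an i.i.d.\ sum. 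Tracing the constants through gives $v_i(\alloc_j) = T\, v'_i(X^*_j) \pm O(\sqrt{T \log T})$, as needed; the degenerate cases ($X^*_j = 0$, or $v'_i(X^*_j) = 0$) only make $v_i(\alloc_j)$ smaller and are handled directly. Thus the main obstacle is exactly this last step: unlike in Theorem~\ref{thm:por-ve}, a plain i.i.d.\ concentration bound does not apply to the external agent's valuation of a clique's internal split, and one needs the ``martingale plus balancing'' two-step to transfer the concentration.
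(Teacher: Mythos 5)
Your proposal is correct; the efficiency argument and the same-clique case coincide with the paper's proof, but your treatment of the cross-clique case takes a genuinely different (and more careful) route. The paper compares $v_i(\alloc_j)$ to the clique-average bundle $\tilde{\alloc}_j$ (each member of $C_j$ receiving a $1/\abs{C_j}$ share of everything routed to $C_j$), applies Hoeffding to $v_i(\tilde{\alloc}_i)-v_i(\tilde{\alloc}_j)$, and then asserts $v_i(\alloc_j)-v_i(\alloc_i) < v_i(\tilde{\alloc}_j)-v_i(\tilde{\alloc}_i)+2$ on the grounds that within-clique envy is at most one item. That justification is airtight for the $\tilde{\alloc}_i$ term but, exactly as you observe, not for the $\tilde{\alloc}_j$ term: the balancing inside $C_j$ is with respect to the clique's common valuation, which need not be proportional to $v_i$'s valuation of those items, so without further argument $v_i(\alloc_j)$ could a priori drift far from $v_i(\tilde{\alloc}_j)$ if the internal split correlated with item types. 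Your martingale step --- the recipient inside $C_d$ at round $t$ is determined by the history and tie-break alone, hence is independent of item $t$'s type, giving $\E[v_{it}\mathds{1}[t \to j] \mid \mathcal{F}_{t-1}] = \pi_j(t)\,\abs{C_d}\,v'_i(X^*_j)$, followed by Azuma and pinning down $\sum_t \pi_j(t) \approx T/\abs{C_d}$ via the clique's own balanced, i.i.d.-concentrated valuation --- supplies precisely the missing concentration of $v_i(\alloc_j)$ around $T\,v'_i(X^*_j)$. (For the division by $v'_j(X^*_j)$, note that this quantity is strictly positive for any agent who values some item, since the E-G solution assigns only maximum bang-per-buck items; and the degenerate case $v'_i(X^*_j)=0$ forces $v_i(\alloc_j)=0$ almost surely, as you note.) In short, your route costs an extra martingale argument but closes a step that the paper's shortcut only justifies for the agent's own clique; the final union bound over the $O(n^2)$ pairs to obtain a single $T_0(\varepsilon)$ is also handled correctly.
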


\begin{proof}
Let $\alloc$ be the allocation produced by Algorithm~\ref{alg:POCR} and $X^*$ be the fractional allocation found for the divisible item allocation problem. Pareto efficiency is immediately implied by Lemma~\ref{lem:por-po}, since $X^*$ is Pareto efficient and an item of type $\gamma_j$ is allocated to agent $i$ only if $X^*_{ij} > 0$.

For any two agents $i, j$, there are two cases. In the first case, $i$ and $j$ belong to the same clique $C_k$. Let $S$ be the set of items assigned to $C_k$ during the execution of Algorithm~\ref{alg:POCR}, i.e. $S = \cup_{\ell \in C_k} \alloc_\ell$. 
Agents in $C_k$ have identical valuations up to a multiplicative factor for the items that they get with positive probability. Therefore, giving each item to the agent that has received the least value so far (according to any agent, as they rank allocations of $S$ in the same order) ensures that $\ENVY_{i,j}(\alloc)$ is at most the maximum value that $i$ has for any item in $S$.

Now consider the case that $i$ and $j$ belong to different cliques, $C_i$ and $C_j$ respectively. By the definition of a CISEF allocation, we know that $v_i(X^*_i) = v_i(X^*_j) + c$ for some constant $c>0$. Let $\tilde{\alloc}_i$ be the fractional allocation where agent $i$ receives a $1/\abs{C_i}$ fraction of the items assigned to her clique, i.e. $\tilde{\alloc}_{it} = \frac{1}{\abs{C_i}} \mathds{1} \{ t \in A_k \text{ for some agent $k \in C_i$} \}$, and similarly for $\tilde{\alloc}_j$. Since agents in a clique receive identical allocations in $\alloc$ we have that $\E[v_i(\tilde{\alloc}_i) - v_i(\tilde{\alloc}_j)] = T\E[v_i(X^*_i) - v_i(X^*_j)] = cT$.

We can apply Hoeffding's inequality to show that with probability at least $1-\Theta(1/T^2) \geq 1-\varepsilon$, $v_i(\tilde{\alloc}_j) - v_i(\tilde{\alloc}_i) < 2\sqrt{T \log{T}} - cT$, which is negative for sufficiently large $T$. Specifically, we can pick $T$ large enough so that $v_i(\tilde{\alloc}_j) - v_i(\tilde{\alloc}_i) < -2$ with high probability. Finally, observe that $\abs{v_i(\alloc_i) - v_i(\tilde{\alloc}_i)} \leq 1$. This is because $\tilde{\alloc}_i$ is the average allocation of agents in the clique, and, as we argued earlier, the maximum envy for two agents in the same clique is at most $1$, thus $v_i(\alloc_j) - v_i(\alloc_i) < v_i(\tilde{\alloc}_j) - v_i(\tilde{\alloc}_i) + 2$. Combined with the deviation bound, we conclude that $v_i(\alloc_j) - v_i(\alloc_i) \leq 0$ with probability at least $1-\varepsilon$.
\end{proof}

\section{Achieving Pareto Efficiency and CISEF}\label{sec: CISEF}

In this section we prove our main structural result.

\begin{theorem}\label{thm: structural}
Given any instance with $m$ divisible items and $n$ additive agents, there always exists an allocation that is simultaneously clique identical strongly envy-free (CISEF) and Pareto efficient.
\end{theorem}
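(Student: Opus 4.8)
The plan is to construct the desired allocation via the Eisenberg--Gale convex program, starting from the symmetric-budget market equilibrium and then iteratively ``curing'' indifference edges by perturbing budgets. Recall that a solution $(\bx,\bp)$ to the E--G program with budgets $\be=(e_1,\dots,e_n)$ is, by the KKT conditions, a Fisher market equilibrium: there is a per-agent ``bang-per-buck'' value $\alpha_i = v_i(X_i)/e_i$, every item $k$ that agent $i$ buys a positive fraction of satisfies $v_{ik}/p_k = \alpha_i$ (MBB tightness), and every item satisfies $v_{ik}/p_k \le \alpha_i$. A preliminary step (the content of Lemma~\ref{lem:edge-indifferent}, assumed) is that the indifference graph has an edge $i \to j$ exactly when $v_{ik}/p_k = \alpha_i$ for every item $k$ with $X_{jk}>0$; equivalently, $j$'s entire bundle lies in $i$'s MBB set. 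Envy-freeness of any such equilibrium with equal budgets is Varian's classical fact; with unequal budgets one instead gets that $i$ does not envy $j$ whenever $e_i \ge e_j$, so I will maintain the invariant that budgets within a would-be clique are equal and handle cross-clique pairs by an ordering argument.

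First I would take the equal-budget E--G solution $(\bx,\bp)$, $e_i=1$ for all $i$, and look at its indifference graph $I(\bx)$. The goal is to reach a state where $I(\bx)$ is a disjoint union of cliques, agents in a clique have identical bundles, and within a clique all $v_{\cdot k}$ are proportional on the clique's support — i.e., properties (1)--(4) of Definition~\ref{def:cisef}. The key structural observation to exploit is that an indifference edge $i\to j$ forces $j$'s bundle into $i$'s MBB set, which severely constrains what $i$ ``could have'' while staying optimal; in particular, if $i\to j$ and $j\to i$ both hold, then $i$ and $j$ have the same MBB ratio on each other's supports, and one can argue their supports coincide and the valuations are proportional there — this is exactly the clique structure we want on a 2-cycle, and it should bootstrap to larger strongly connected pieces. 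The harder direction is a one-way edge $i\to j$ with no edge back: here $i$ is ``wasting'' potential on $j$'s bundle, and I would nudge $e_j$ down by an infinitesimal $\epsilon$ (redistributing to agents $i$ points to, or handling it via a continuity/limit argument), re-solve E--G, and show this strictly decreases a suitable potential — e.g., the number of one-way edges, or a lexicographic measure — without creating new problems. KKT continuity guarantees that for small enough $\epsilon$ no new tight MBB constraints appear except possibly the intended relaxations, and lowering $e_j$ can only make $j$'s bundle shrink in $i$'s eyes, killing the $i\to j$ edge while the $j \not\to i$ side means no reverse edge is born.

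The main steps in order: (i) set up the E--G / Fisher-equilibrium machinery and record the MBB and equilibrium conditions from KKT; (ii) prove the edge characterization (Lemma~\ref{lem:edge-indifferent}) and deduce that a strongly connected component of $I(\bx)$ on which all budgets are equal is automatically a clique with identical allocations and proportional valuations on its support — establishing properties (1),(3),(4) locally; (iii) define a potential function on market states (pairs of a budget vector and its equilibrium) that is minimized exactly when $I(\bx)$ is a disjoint union of such equal-budget cliques — property (2) being the requirement that there are no cross-component edges; (iv) show that if the potential is not yet minimal, there is a one-way indifference edge or a cross-clique edge, and a small budget perturbation on the ``downstream'' agent, followed by re-solving E--G, strictly decreases the potential while preserving Pareto efficiency (which holds for every Fisher equilibrium regardless of budgets) and the no-envy invariant (maintained because we only ever lower budgets of agents nobody with a smaller budget envies, and we keep budgets equal inside cliques); (v) conclude by finiteness/termination — or, if the perturbations must be infinitesimal, by a compactness argument taking a limit of the perturbed equilibria, checking the limit is still an equilibrium with the desired graph structure.

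I expect step (iv) to be the crux and the main obstacle: one must verify that lowering a single agent's budget by a small amount has exactly the intended local effect on the indifference graph — removing the targeted edge without spawning new indifference edges elsewhere or destroying the equal-budget clique invariant — and that the re-solved E--G equilibrium varies continuously enough for ``small enough $\epsilon$'' to be meaningful. There is a genuine subtlety, flagged in the excerpt itself (Appendix~\ref{app: same budget counter example}), that unequal budgets are \emph{necessary}, so the argument cannot stay in the equal-budget world throughout; the bookkeeping that keeps budgets equal \emph{within} each clique while letting them differ \emph{across} cliques, and that simultaneously preserves envy-freeness via a budget-ordering argument, is where the real work lies. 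A secondary obstacle is making the termination rigorous: if each cure step is only infinitesimal, I would instead phrase the whole construction as choosing a single generic budget vector in a neighborhood of the all-ones vector (so that no ``accidental'' indifferences survive) and arguing directly that its equilibrium is CISEF, using the potential/component analysis only to show such a generic choice exists.
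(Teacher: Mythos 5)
Your starting point matches the paper's (equal-budget Eisenberg--Gale solution, the edge characterization of Lemma~\ref{lem:edge-indifferent}, iterative elimination of indifference edges with budget adjustments), but the mechanism you propose for the elimination step has genuine gaps. First, your step (ii) claims that a strongly connected, equal-budget component of $I(\bx)$ is \emph{automatically} a clique with identical allocations and proportional valuations. This is false: indifference is not transitive ($i\to j$ and $j\to k$ only say that $\xalloc_j$ is MBB for $i$ and $\xalloc_k$ is MBB for $j$, which says nothing about $i$'s bang-per-buck on $\xalloc_k$), so a strongly connected component need not be a clique, and even a genuine clique need not have identical bundles. The paper has to do real work here: Operation~1 eliminates non-clique cycles by transferring, around the cycle, a small amount of value including an item of $\xalloc_{i+1}$ that is \emph{not} MBB for agent $i-1$, and Operation~2 explicitly re-balances bundles within merged cliques. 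None of this is supplied by your argument.

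Second, your cure for a one-way edge --- lower $e_j$ by a small $\epsilon$ and \emph{re-solve} the E--G program, invoking ``KKT continuity'' --- is not a workable substitute for the paper's device. With linear utilities the equilibrium allocation is not unique (only utilities and, generically, prices are), so the indifference graph is not a well-defined function of the budget vector, and there is no continuous selection to which your continuity claim applies; moreover, envy-freeness is \emph{not} a property of E--G solutions with unequal budgets (only the one-sided statement ``$i$ with $e_i\ge e_j$ does not envy $j$'' holds), so after re-solving you have no control over whether lower-budget agents envy higher-budget ones, nor over which new indifference edges appear among agents you never touched. The paper avoids all of this by never re-solving: it keeps the prices $\bp$ fixed, performs explicit ``optimal transfers'' of items along indifference edges (guided by a balanced flow from source cliques to sink cliques), verifies via the KKT conditions (Lemma~\ref{lem:optimal-transfers}) that $(\bx',\bp)$ remains optimal for the implicitly adjusted budgets, and chooses the transfer size $b$ as an explicit positive quantity (Appendix~\ref{app:choosing b}) small enough to preserve envy-freeness and create no new edges; termination is then a finite edge-count argument, not an infinitesimal or limiting one. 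Your fallback of a single generic budget perturbation fails outright: agents with proportional valuations must remain mutually indifferent in any envy-free Pareto efficient outcome, and giving them generically distinct budgets makes the lower-budget one envy the higher-budget one, so genericity destroys exactly the envy-freeness that CISEF requires.
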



We prove Theorem~\ref{thm: structural} by building on a standard approach for finding an envy-free and Pareto efficient allocation, namely solving the Eisenberg-Gale convex program (henceforth E-G program). Recall that the E-G program with ``budgets'' $\be$ is the following.
\begin{align*}
    \max & \quad \sum_{i = 1}^n e_i \log \sum_{j = 1}^m v_{ij}x_{ij}, &\text{subject to} & \quad \sum_{i=1}^n x_{ij} \leq 1, \forall j \in [m], \text{ and }\quad x_{ij} \geq 0, \forall i \in [n], j \in [m].
\end{align*}

When $e_i=e_j$ for all $i, j$, the outcome is also known as  the competitive equilibrium from equal incomes (CEEI), which is envy-free \cite{varian1974equity}. There exists a solution to the E-G program with primal variables $\bx$ and dual variables $\bp$ (dual variable $p_j \geq 0$ corresponds to the first constraint above) satisfying the following conditions.
\begin{align}
        &\textstyle \forall j \in [m] : p_j > 0 \implies \sum_{i=1}^n x_{ij} = 1 \label{eq: KKT 1} \\
        &\textstyle \forall i \in [n], j \in [m] : \frac{v_{ij}}{p_j} \leq \frac{\sum_{k=1}^m v_{ik}x_{ik}}{e_i} \label{eq: KKT 2}\\
        &\textstyle \forall i \in [n], j \in [m] : x_{ij} > 0 \implies \frac{v_{ij}}{p_j} = \frac{\sum_{k=1}^m v_{ik}x_{ik}}{e_i} = \max_{k \in [m]} \frac{v_{ik}}{p_k}.\label{eq: KKT 3}
\end{align}

These conditions are both necessary and sufficient for a feasible solution to be optimal, and can be derived from the KKT conditions; for completeness we show the derivation in Appendix~\ref{app:kkt-derive}. The standard interpretation is that $e_i$ is the budget of agent $i$ and $p_j$ is the price for item $j$. Then, a solution $\bx, \bp$ consists of prices $\bp$ and allocations $\bx$, such that each agent spends their entire budget $e_i$ on ``optimal'' items and all items are completely sold. We say that an item $j$ is ``optimal'' for agent $i$ given prices $\bp$, when it maximizes the ratio $v_{ij}/p_j$, also known as the \emph{bang-per-buck}. 

For the remainder of this section, given a solution $\bx, \bp$, we use $\bx$ and $\xalloc$ indistinguishably for the allocation, we write $\xalloc_i$ for the allocation of agent $i$, and we say that item $k$ is allocated to $i$, $k \in \xalloc_i$, if $x_{ik} > 0$.
We assume without loss of generality that for any solution, $\bx, \bp$, we have $\forall j : p_j > 0$. This holds as long as each item has at least one agent who values it; if this is not the case we can safely drop those items.
We prove the following, which immediately implies Theorem~\ref{thm: structural}\footnote{Pareto efficiency is implied since the objective function is monotone.}.

\begin{theorem}
\label{thm:main-disjoint-cliques}
There exist budgets $\be$ and an optimal solution $(\bx = \xalloc, \bp)$ to the E-G convex program with budgets $\be$, such that $\xalloc$ is clique identical strongly envy-free.
\end{theorem}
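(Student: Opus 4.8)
The plan is to start from the Eisenberg–Gale solution with equal budgets $e_i = 1$ for all $i$ and then iteratively modify both the allocation/price pair $(\bx, \bp)$ and the budget vector $\be$ so as to destroy edges of the indifference graph $I(\bx)$ one structural layer at a time, while always maintaining the invariant that $(\bx, \bp)$ is an optimal solution to the E-G program for the current budgets $\be$. The first thing I would do is establish the combinatorial dictionary connecting the indifference graph to the KKT conditions: by Lemma~\ref{lem:edge-indifferent}, there is an edge $i \to j$ in $I(\bx)$ iff for every item $k \in \xalloc_j$ we have $v_{ik}/p_k = v_i(\xalloc_i)/e_i$, i.e. every item that $j$ holds is maximum-bang-per-buck for $i$ at the current prices. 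This reformulation is what makes the whole argument tractable, because now "eliminating an edge" is the same as making some item in $j$'s bundle strictly sub-optimal for $i$, which can be engineered by nudging prices and reallocating on the MBB graph.

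Next I would set up the iterative step. Consider the indifference graph $I(\bx)$; look at its strongly connected components (SCCs) under the reachability preorder, and pick a "source" SCC $S$ — a set of agents such that no edge enters $S$ from outside. Within $S$, all agents are mutually indifferent, so $S$ is a clique, and using the MBB characterization together with conditions~\eqref{eq: KKT 2}–\eqref{eq: KKT 3} one shows that agents in $S$ share the same set of MBB items and in fact can be given identical allocations (condition (3) of CISEF) with a common scalar $r_i$ arising from the bang-per-buck value $v_i(\xalloc_i)/e_i$ (condition (4)). The goal of the step is to "peel off" $S$: I want to modify $\be$ and $(\bx,\bp)$ so that $S$ becomes a connected component of the new indifference graph with no edges entering or leaving it, so that it is a clique of the desired type, and then recurse on $\agents \setminus S$ — but one has to be careful, since changing prices globally can re-create edges elsewhere, so the modification must be confined to the items held by $S$ and must uniformly scale prices/budgets so as not to disturb the KKT conditions for agents outside $S$. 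Concretely, I'd scale up the prices of all items in $\cup_{i \in S} \xalloc_i$ by a factor $1+\delta$ and scale up the budgets of agents in $S$ by $1+\delta$ (so their allocations are unchanged and \eqref{eq: KKT 3} still holds within $S$); this keeps the per-unit-budget bang-per-buck of agents in $S$ fixed while strictly raising the "cost" of $S$'s items for agents outside $S$, thereby killing every edge $i \to j$ with $i \notin S$, $j \in S$ (there were none by the source property) and, more importantly, killing edges from agents in $\agents\setminus S$ into $S$. One then verifies \eqref{eq: KKT 1}–\eqref{eq: KKT 3} still hold for everyone, possibly after also re-running E-G on $\agents \setminus S$ restricted to the remaining items with the remaining "budget mass" — since agents outside $S$ never spent on $S$'s items (no incoming edges means those items weren't MBB for them, hence $x_{ij}=0$), this restriction is clean.

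The step terminates because each iteration removes at least one SCC from the part of the graph not yet finalized, so after at most $n$ iterations the indifference graph is a disjoint union of cliques; within each clique we read off identical allocations and the scalars $r_i = e_i / v_i(\xalloc_i)$ (or a suitable normalization) to satisfy condition (4), envy-freeness is inherited from the E-G optimality via \eqref{eq: KKT 2} (each agent can afford no bundle strictly better than her own, hence $v_i(\xalloc_i) \ge v_i(\xalloc_j)$ whenever $e_i \ge e_j$ — and one must track that the peeled-off cliques end up with budgets consistent with envy-freeness, which is why the scalings are all "scale up," preserving the relative ordering needed), and Pareto efficiency is automatic since the E-G objective is monotone. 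The main obstacle I anticipate is the bookkeeping in the inductive step: ensuring that the local price/budget surgery on the source SCC genuinely destroys the targeted indifference edges \emph{without} creating new ones among the remaining agents, and that after recursion the budgets glue back together into a single global $\be$ for which the final $(\bx, \bp)$ simultaneously satisfies all three KKT conditions. A secondary subtlety — flagged by the authors' own remark about Appendix~\ref{app: same budget counter example} — is that one cannot hope to keep all budgets equal: the scalings genuinely differentiate budgets across cliques, and the argument must embrace this rather than fight it.
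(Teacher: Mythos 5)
Your high-level plan (start from equal-budget E-G, use Lemma~\ref{lem:edge-indifferent} to translate indifference edges into MBB statements, and iteratively perturb $(\bx,\bp,\be)$ to destroy edges while keeping KKT) matches the paper in spirit, but two of your key steps do not go through as stated. First, you assert that a source SCC $S$ of $I(\bx)$ is automatically a clique whose agents can be given identical allocations. Indifference is not transitive: $i$ indifferent to $j$ and $j$ indifferent to $k$ only says that $j$'s items are MBB for $i$ and $k$'s items are MBB for $j$; it says nothing about $k$'s items being MBB for $i$. So an SCC (mutual reachability) need not be a clique, and even a clique need not have identical bundles. Turning components into cliques with identical allocations is precisely the nontrivial work the paper does with item transfers along indifference edges: eliminating non-clique cycles and re-balancing within cliques (Lemma~\ref{lem:const-util-shifts}, Operations 1 and 2), and the paper's five-vertex example (two triangles sharing a vertex) shows this structure is genuinely not automatic. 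Your proposal has no mechanism that produces it.

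Second, the price-scaling surgery breaks optimality exactly in the cases you need it. Scaling up the prices of $\cup_{i\in S}\xalloc_i$ by $1+\delta$ while scaling up the budgets of $S$ lowers $v_i(\xalloc_i)/e_i\cdot$(per-price) ratios for $i\in S$ but leaves the bang-per-buck of items outside $S$'s holdings unchanged; if $i\in S$ has an outgoing indifference edge to some $j\notin S$ (these outgoing edges are the ones you must kill --- the incoming ones are already absent by the source property, so your stated target is vacuous), then $j$'s items were MBB for $i$ and now strictly beat $i$'s own items, violating condition~\eqref{eq: KKT 2}/\eqref{eq: KKT 3}. Moreover, your claim that no incoming edge implies $x_{oj}=0$ for outsiders $o$ on $S$'s items is false: an outsider can share a single (fractional) item with an agent in $S$ without being indifferent to that agent's whole bundle, and raising that item's price then violates \eqref{eq: KKT 3} for the outsider. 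The paper avoids all of this by never touching $\bp$: it only moves items along indifference edges (which are MBB by Lemma~\ref{lem:edge-indifferent}) and adjusts budgets by the price-value of what was moved (Lemma~\ref{lem:optimal-transfers}), with the budget differentiation arising from flow-guided transfers from sink cliques to source cliques (Lemma~\ref{lem:budget-change-shifts}) and with an explicit small-$b$ argument to avoid creating envy or new edges. To repair your argument you would essentially have to replace the price-scaling step with such transfer-based steps, at which point you have reconstructed the paper's proof.
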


\noindent We start with the optimal solution $\xalloc$ to the E-G convex program with identical budgets $e_i = 1$ for all $i$. Then, at a high level, we break the algorithm into two procedures that jointly alter $\bx, \bp$ and $\be$ such that $\bx, \bp$ remains an optimal solution to the convex program with budgets $\be$, while preserving envy-freeness, until $\xalloc$ satisfies the desired properties. Specifically, the indifference graph $I(\xalloc)$ will end up being a disjoint set of cliques, such that agents in a clique have identical allocations.

\subsection*{Optimal Transfers}

Given an allocation, let $r_i := \frac{v_i(A_i)}{e_i}$ be the maximum bang-per-buck ratio of agent $i$. We say that agent $i$ is indifferent towards any item $k$ for which $\frac{v_{ik}}{p_k} = r_i$. We first give a useful property of solutions $\bx, \bp$. 
The proof is relegated to Appendix~\ref{app: missing lemma proof}.

\begin{lemma}
\label{lem:edge-indifferent}
Given a solution $\bx, \bp$ of an E-G program with budgets $\be$, for all agents $i, j$ such that $e_i = e_j$, $v_i(\xalloc_i) = v_i(\xalloc_j)$ if and only if $\forall k \in \xalloc_j : \frac{v_{ik}}{p_k} = r_i$.
\end{lemma}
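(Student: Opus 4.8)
The claim is an equivalence between two statements about agents $i,j$ with equal budgets $e_i = e_j$: that $v_i(\xalloc_i) = v_i(\xalloc_j)$, and that every item $k$ in $\xalloc_j$ satisfies $v_{ik}/p_k = r_i$ (i.e. $i$ is indifferent toward it). Let me think about both directions.

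For the $(\Leftarrow)$ direction: suppose $v_{ik}/p_k = r_i$ for all $k \in \xalloc_j$. Then I want to compute $v_i(\xalloc_j) = \sum_{k} v_{ik} x_{jk}$. For each $k$ with $x_{jk} > 0$, we have $k \in \xalloc_j$, so $v_{ik} = r_i p_k$. Hence $v_i(\xalloc_j) = r_i \sum_k p_k x_{jk} = r_i \cdot (\text{total price of } j\text{'s bundle})$. Now I need that $j$ spends her whole budget, i.e. $\sum_k p_k x_{jk} = e_j$. This follows from KKT condition~\eqref{eq: KKT 3}: whenever $x_{jk} > 0$, $v_{jk}/p_k = v_j(\xalloc_j)/e_j$, so $p_k x_{jk} = e_j \cdot v_{jk} x_{jk} / v_j(\xalloc_j)$, and summing over $k$ gives $\sum_k p_k x_{jk} = e_j$. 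Therefore $v_i(\xalloc_j) = r_i e_j = r_i e_i = v_i(\xalloc_i)$, using $e_i = e_j$ and the definition $r_i = v_i(\xalloc_i)/e_i$.

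For the $(\Rightarrow)$ direction: suppose $v_i(\xalloc_i) = v_i(\xalloc_j)$. By KKT condition~\eqref{eq: KKT 2}, $v_{ik}/p_k \leq r_i$ for every item $k$, so $v_{ik} \leq r_i p_k$ for all $k$. Then $v_i(\xalloc_j) = \sum_k v_{ik} x_{jk} \leq r_i \sum_k p_k x_{jk} = r_i e_j = r_i e_i = v_i(\xalloc_i)$, where I again used that $j$ spends her whole budget and $e_i = e_j$. Since the two ends are equal by hypothesis, the inequality must be tight, which forces $v_{ik} = r_i p_k$ for every $k$ with $x_{jk} > 0$ — that is, $v_{ik}/p_k = r_i$ for all $k \in \xalloc_j$.

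The only mild subtlety — really the one reusable fact underpinning both directions — is the ``budget exhaustion'' identity $\sum_k p_k x_{jk} = e_j$, which I derived above from~\eqref{eq: KKT 3}; I would state and prove it once at the top of the argument. Everything else is a one-line application of the KKT inequality~\eqref{eq: KKT 2} plus the equal-budget hypothesis, so I do not anticipate a genuine obstacle here; the proof is short and the main care is just in tracking which KKT condition licenses which step.
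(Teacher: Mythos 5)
Your proposal is correct and follows essentially the same route as the paper: both derive the budget-exhaustion identity $\sum_k p_k x_{jk} = e_j$ from KKT condition~\eqref{eq: KKT 3}, prove the ``$\Leftarrow$'' direction by substituting $v_{ik} = r_i p_k$ on $j$'s bundle, and prove the ``$\Rightarrow$'' direction via KKT condition~\eqref{eq: KKT 2} (the paper phrases this last step as a contrapositive with a strict inequality from a single non-MBB item, while you argue tightness of the summed inequality directly — the same argument).
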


Lemma~\ref{lem:edge-indifferent} tells us that under equal budgets, if agent $i$ is indifferent to agent $j$'s allocation, then $j$'s items are maximum bang-per-buck items for $i$ as well. This gives some intuition for our approach. Assuming equal budgets, if we move items along indifference edges, we can avoid violating the KKT conditions and the solution remains optimal. We formalize this idea in Lemma~\ref{lem:optimal-transfers}; first we need the following definition. Given a solution $(\bx = \xalloc, \bp)$ for budgets $\be$, we consider a change in allocation of items $\Delta$, where $\Delta_{ik}$ is the difference in the allocation of item $k$ for agent $i$. 

\begin{definition}[Optimal transfer]
A transfer of items $\Delta$ is an \textit{optimal transfer} if for all items $k$ (1) $\sum_{i \in [n]} \Delta_{ik} = 0$, i.e. the total allocation of item $k$ remains unchanged, (2) $x_{ik} + \Delta_{ik} \in [0,1]$ for all agents $i$, i.e. $\bx + \Delta$ is feasible, and (3) for all agents $i$ such that $\Delta_{ik} > 0$, $\frac{v_{ik}}{p_k} = r_i$, i.e. if agent $i$ is given more of item $k$, then item $k$ maximizes bang-per-buck for agent $i$. 
\end{definition}

The proof of the next lemma can be found in Appendix~\ref{app: lemma optimal transfers}.

\begin{lemma}
    \label{lem:optimal-transfers}
    Let $(\bx = \xalloc, \bp)$ be a solution for budgets $\be$ and $\Delta$ be an optimal transfer. Let $\delta$ represent the change in budget where $\delta_i = \sum_{k \in \items} p_k \cdot \Delta_{ik}$. Let $\xalloc' = \xalloc + \Delta$. Then $(\bx' = A', \bp)$ is a solution for budgets $\be' = \be + \delta$.
\end{lemma}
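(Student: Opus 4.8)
The plan is to verify that $(\bx' = \xalloc', \bp)$ satisfies the three optimality conditions \eqref{eq: KKT 1}--\eqref{eq: KKT 3} for the E-G program with budgets $\be'$; since these conditions are necessary and sufficient for optimality, this proves the claim. Several ingredients are immediate. Feasibility follows from the definition of an optimal transfer: clause~(2) gives $x'_{ik}=x_{ik}+\Delta_{ik}\in[0,1]$ and clause~(1) gives $\sum_{i}x'_{ik}=\sum_i x_{ik}=1$ for every item $k$; together with the standing assumption $p_k>0$ for all $k$, this is exactly \eqref{eq: KKT 1}. For budget nonnegativity, recall that in any E-G solution each agent spends her entire budget, $e_i=\sum_k p_k x_{ik}$ (multiply \eqref{eq: KKT 3} by $p_j x_{ij}$ and sum over $j$); hence $e'_i=e_i+\delta_i=\sum_k p_k x_{ik}+\sum_k p_k\Delta_{ik}=\sum_k p_k x'_{ik}\geq 0$, so $\be'$ is a valid budget vector.

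The heart of the argument is to show that no agent's maximum bang-per-buck ratio changes, i.e. $r'_i:=v_i(\xalloc'_i)/e'_i$ equals $r_i$ for every $i$. I would first observe that every item $k$ with $\Delta_{ik}\neq 0$ is a bang-per-buck maximizer for agent $i$, that is $v_{ik}/p_k=r_i$: if $\Delta_{ik}>0$ this is precisely clause~(3) of the optimal transfer definition, and if $\Delta_{ik}<0$ then feasibility forces $x_{ik}>0$, so \eqref{eq: KKT 3} for the original solution yields $v_{ik}/p_k=r_i$. Consequently $v_{ik}\Delta_{ik}=r_i p_k \Delta_{ik}$ for \emph{every} item $k$ (trivially when $\Delta_{ik}=0$), and summing over $k$ gives $v_i(\xalloc'_i)=v_i(\xalloc_i)+\sum_k v_{ik}\Delta_{ik}=v_i(\xalloc_i)+r_i\sum_k p_k\Delta_{ik}=v_i(\xalloc_i)+r_i\delta_i=r_i e_i+r_i\delta_i=r_i e'_i$, using $v_i(\xalloc_i)=r_i e_i$ by definition of $r_i$. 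Hence $r'_i=r_i$; moreover $\max_k v_{ik}/p_k=r_i$ is preserved as well, either by \eqref{eq: KKT 3} for the original solution (if $i$ is allocated anything) or trivially otherwise.

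Given this invariance, the remaining conditions follow quickly. For \eqref{eq: KKT 2}: for all $i,j$, $v_{ij}/p_j\leq r_i=r'_i=v_i(\xalloc'_i)/e'_i$, where the inequality is \eqref{eq: KKT 2} for the original solution. For \eqref{eq: KKT 3}: if $x'_{ij}>0$ then either $x_{ij}>0$, and \eqref{eq: KKT 3} for the original solution gives $v_{ij}/p_j=r_i$, or $x_{ij}=0$ so that $\Delta_{ij}=x'_{ij}>0$ and clause~(3) of the transfer definition gives $v_{ij}/p_j=r_i$; in either case $v_{ij}/p_j=r_i=v_i(\xalloc'_i)/e'_i=\max_k v_{ik}/p_k$, as needed. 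I expect the one genuinely delicate step to be the invariance $r'_i=r_i$: it is exactly there that clause~(3) of ``optimal transfer'' (items whose share grows are bang-per-buck-optimal for the recipient) meets complementary slackness of the original solution (items whose share shrinks were being bought, hence bang-per-buck-optimal), so that $\Delta$ only shuffles mass among items that all sit at ratio $r_i$ for agent $i$, leaving her utility-to-budget ratio unchanged. A couple of degenerate cases deserve a line: an agent with $v_i(\xalloc_i)=0$ has $r_i=0$ and all identities collapse to $0=0$; and if some $e'_i=0$ then $\xalloc'_i$ is empty and the conditions involving $i$ hold vacuously, though in our applications the transfers are chosen so that $\be'>0$.
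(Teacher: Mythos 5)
Your proposal is correct and follows essentially the same route as the paper's proof: both verify the KKT conditions for the new solution, with the key observation that $\Delta_{ik}\neq 0$ forces $v_{ik}/p_k=r_i$ (via clause~(3) of the optimal-transfer definition when $\Delta_{ik}>0$, and via feasibility plus condition~\eqref{eq: KKT 3} when $\Delta_{ik}<0$), so that $\sum_k v_{ik}x'_{ik}/e'_i=(v_i(\xalloc_i)+r_i\delta_i)/(e_i+\delta_i)=r_i$ and conditions~\eqref{eq: KKT 2} and~\eqref{eq: KKT 3} carry over unchanged. Your extra remarks on budget nonnegativity and the degenerate cases are harmless additions not present in the paper's argument.
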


\subsection*{Indifference Edge Elimination}

For an allocation $\xalloc$ and subset of agents $S \subseteq \agents$, we overload notation, and let $\xalloc_S$ refer to the allocation for agents in $S$. A set of agents $S$ have identical budgets under $\be$ if there exists some $c$ where for all $i \in S$, $e_i = c$.
Recall that for an allocation $\xalloc$,  $I(\xalloc)$ refers to the indifference graph, a graph where we have a vertex for every agent and an edge from (the vertices corresponding to) agent $i$ to agent $j$ if $v_i(\xalloc_i) = v_i(\xalloc_j)$. In the remainder of this section we refer to agents and vertices interchangeably. Also, recall that for a directed graph $G = (V,E)$, a clique is a subset of vertices $S \subseteq V$ such that for all $v \in S, u \in S$, where $u \neq v$, there is an edge $(u,v) \in E$, and that a weakly connected component (henceforth just component) $S$ is a subset of the agents such that for each pair of agents $i, j \in S$, there is either a path from $i$ to $j$ or a path from $j$ to $i$, and $S$ is a maximal subgraph with this property.

\begin{definition}[Clique acyclic graph]\label{dfn: clique acyclic}
A directed graph $G=(V,E)$ is clique acyclic if the vertices can be partitioned into cliques $C_1, \ldots, C_k$, that is, $C_i \subseteq V$ for all $i$, $\cup_{i=1}^k C_i = V$, and $C_i \cap C_j = \emptyset$ for all $i \neq j$, and  where for any cycle $K$ in the graph, $K$ only contains vertices from $C_i$ for some $i$.
\end{definition}

A crucial step towards producing a CISEF allocation will be to find an allocation $\xalloc$ such that $I(\xalloc)$ is clique acyclic and envy-free, where agents in each clique have the same allocation.

\begin{lemma}
\label{lem:const-util-shifts}
There exists an algorithm that takes as input a solution $(\bx = \xalloc, \bp)$ for budgets $\be$ and a component $S$ with identical budgets, and finds a solution $(\bx' = \xalloc', \bp)$ for budgets $\be$ where $I(\xalloc'_S)$ is clique acyclic and agents in each clique have identical allocations without violating envy-freeness or adding new indifference edges to $I(\xalloc)$. 
\end{lemma}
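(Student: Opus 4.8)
The plan is to iteratively eliminate ``bad'' cycles in the indifference graph $I(\xalloc_S)$ by performing optimal transfers along them. A cycle $K$ in $I(\xalloc_S)$ is an ordered tuple of agents $i_1 \to i_2 \to \cdots \to i_\ell \to i_1$, where each edge $i_a \to i_{a+1}$ means agent $i_a$ is indifferent between $\xalloc_{i_a}$ and $\xalloc_{i_{a+1}}$. Since $S$ has identical budgets, Lemma~\ref{lem:edge-indifferent} tells us that every item in $\xalloc_{i_{a+1}}$ is a maximum bang-per-buck item for $i_a$ as well, i.e. $v_{i_a k}/p_k = r_{i_a}$ for all $k \in \xalloc_{i_{a+1}}$. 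This means the transfer that moves a small fraction of $i_{a+1}$'s bundle ``backwards'' to $i_a$ (cyclically, for every $a$) is an optimal transfer in the sense of the Optimal transfer definition: condition (3) is satisfied precisely because of the indifference edges. I would parametrize this family of transfers by a single scalar $\lambda \ge 0$ — move a $\lambda$-fraction of each agent's current bundle to its in-neighbor along the cycle — and increase $\lambda$ until some structural event happens (an agent's bundle along the cycle empties out on some item, or a new indifference relation is created, or, in the case where all agents on the cycle have identical allocations, we can push all the way to equalizing them). By Lemma~\ref{lem:optimal-transfers}, throughout this process $(\bx', \bp)$ remains an optimal E-G solution; I must additionally check that the budgets of agents in $S$ stay equal — this holds if the transfer is budget-balanced around the cycle, which I can arrange since every item moved is an MBB item for both endpoints, so the price-weighted amount leaving $i_{a+1}$ equals a controlled quantity; I would choose the per-agent fractions so that each agent's net budget change is zero (each agent gives away $\lambda \cdot e_{i_a}/r_{i_a} \cdot r_{i_a}$ worth... more carefully, give each agent a transfer whose value, at prices $\bp$, is the same constant for all agents on the cycle, which is possible since all the relevant items are MBB for the receiving agent).

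The key invariant to maintain is: (i) envy-freeness is preserved, (ii) no new indifference edges are added to $I(\xalloc)$ (the global graph, not just on $S$), and (iii) a suitable progress measure strictly decreases. For (i): since the solution remains an optimal E-G solution with equal budgets on $S$, CEEI/envy-freeness is automatic on $S$ from \cite{varian1974equity}; for agents outside $S$ the budgets and the overall optimality are unchanged in the relevant sense, but I need to argue their envy is unaffected — this is where I have to be careful, since moving items among $S$ changes $\xalloc_S$ as seen by outside agents. Actually the cleaner route is: the lemma only asks to not violate envy-freeness \emph{and} not add indifference edges, so I track the indifference graph directly. For (ii) and (iii): the natural progress measure is the number of indifference edges, or, better, a lexicographic measure (number of weakly connected components of $I(\xalloc_S)$ that are not cliques-with-identical-allocations, then within that, number of edges). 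When I push $\lambda$ as far as possible along a bad cycle, one of two good things happens: either the cycle ``collapses'' (all its agents now have identical allocations, merging into a clique) or an edge on the cycle disappears (an item leaves some bundle, breaking an indifference), reducing the edge count. I must ensure no new edges appear: a new indifference edge $u \to w$ would require $v_u(\xalloc_u) = v_u(\xalloc_w)$ to newly hold, but since the transfer only moves MBB items and preserves the bang-per-buck ratios $r_u$, and since values change continuously, I can stop $\lambda$ just before any such new equality is created — i.e. $\lambda$ is chosen as the supremum of values for which no new edge appears, and at that supremum either a new edge is about to appear (stop there, having made progress via some other event) or the cycle has collapsed.

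The main obstacle I anticipate is the bookkeeping around \emph{which} event triggers the stopping of $\lambda$ and arguing termination: I need to rule out the degenerate possibility that I can push $\lambda$ along a cycle, something happens, but the overall progress measure does not decrease — e.g. an edge disappears on the cycle but a new one appears elsewhere simultaneously. The fix is to define $\lambda^*$ as the largest $\lambda$ such that (a) feasibility holds and (b) \emph{no new indifference edge is created anywhere in $I(\xalloc)$}; then at $\lambda = \lambda^*$ either feasibility binds (some $x_{ik}$ hits $0$, and since it was a positive MBB allocation the corresponding indifference edge on the cycle is destroyed, so the edge count on $S$ drops) or a new edge is about to be created (and I argue the new edge is \emph{within} a set of agents who now all have identical allocations, so it does not prevent us from being in a clique-acyclic configuration). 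Repeating until every component of $I(\xalloc_S)$ is a disjoint union of cliques with identical allocations gives the claim; termination follows because each step strictly decreases the lexicographic measure, which is bounded. A secondary subtlety worth flagging in the writeup: after collapsing agents on a cycle into having identical allocations, subsequent cycles might involve these merged agents, but since they act identically this only helps, and the ``clique acyclic'' target (Definition~\ref{dfn: clique acyclic}) is exactly designed to tolerate cycles that stay inside one clique.
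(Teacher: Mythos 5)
There is a genuine gap in your progress argument, and it is exactly where the paper's proof does its real work. Your plan is to shift a $\lambda$-fraction of each bundle backwards along an arbitrary non-clique cycle and push $\lambda$ to a stopping point. But for an edge $(i_{a-1},i_a)$ on the cycle, the predecessor's value for $i_a$'s new bundle is $(1-\lambda)v_{i_{a-1}}(\xalloc_{i_a})+\lambda v_{i_{a-1}}(\xalloc_{i_{a+1}})$, which stays equal to $v_{i_{a-1}}(\xalloc_{i_{a-1}})$ whenever the chord $(i_{a-1},i_{a+1})$ is present; so a uniform shift along a cycle whose ``skip-one'' chords all exist kills no edge at all, for any $\lambda$. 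Moreover, your claimed stopping events do not yield progress: an entry $x_{ik}$ hitting $0$ does \emph{not} destroy an indifference edge (indifference is an equality of bundle values, not membership of a particular item, and optimal budget-balanced transfers preserve every $v_i(\xalloc_i)$ by construction), and a cyclic shift never produces the ``collapse to identical allocations'' you invoke -- rotating bundles is not averaging them. The paper's mechanism is different in a crucial way: it first locates a cycle with a \emph{missing} chord $(i-1,i+1)$ (guaranteed to exist for the smallest non-clique cycle, by an induction on cycle length), uses Lemma~\ref{lem:edge-indifferent} to extract an item $\ell\in\xalloc_{i+1}$ that is not maximum bang-per-buck for $i-1$, and routes $b$ worth of exactly that item to agent $i$ while all other cycle agents trade arbitrary MBB items of worth $b$. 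The injection of a non-MBB-for-$i-1$ item makes $v_{i-1}(\xalloc'_i)$ \emph{strictly} smaller for every $b>0$, so an edge dies without pushing any parameter to a boundary; $b$ only needs to be small enough (a slack-over-max-bang-per-buck bound, as in Appendix~\ref{app:choosing b}) to avoid creating new edges or envy.

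Separately, even granting edge elimination, your write-up is missing the second half of the lemma. Eliminating all non-clique cycles does not give clique acyclicity (the paper's five-vertex example of two triangles sharing a vertex), and nothing in your proposal produces the ``identical allocations within each clique'' property. The paper needs a second operation -- greedily merging cliques and re-balancing each clique $C$ to the average allocation $\tfrac{1}{|C|}\xalloc_C$ (itself an optimal transfer by Lemma~\ref{lem:edge-indifferent}) -- alternated with the cycle-elimination step, plus the final argument that a cycle spanning two cliques would force those cliques to merge, contradicting maximality. Your lexicographic-measure sketch gestures at this but supplies neither the re-balancing mechanism nor that terminal argument, so as written the proposal does not establish the stated conclusion.
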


Take the allocation $\xalloc$ and indifference graph $I(\xalloc_S)$. At a high level we attempt to apply one of the following two operations; we only apply Operation $2$ only if Operation $1$ cannot be done. 
\begin{itemize}[leftmargin=*]
    \item \textbf{Operation 1:} Eliminate every cycle that is not a clique.
    \item \textbf{Operation 2:} Partition the graph into cliques by merging cliques and ``re-balancing'' allocations.
\end{itemize}

We explain in detail how these operations work, and prove that they satisfy some basic properties. We proceed to show how to use them to prove the lemma.

\paragraph{Operation $1$.}
Without loss of generality, suppose we have a cycle $K = (1, \ldots, k)$, where edges go from agent $i$ to agent $i + 1$ (and indices wrap around). If there exists an $i$ where there is no edge from $i - 1$ to $i + 1$, we show how to eliminate at least one edge from the cycle, keep $v_i(\xalloc_i)$ the same for all agents, and not create any new indifference edges. 

To perform this operation notice that Lemma~\ref{lem:edge-indifferent} implies that there exists an item $\ell \in \xalloc_{i + 1}$ that is not a bang-per-buck item for agent $i-1$, i.e. $\frac{v_{i-1,\ell}}{p_\ell} < r_{i - 1}$. We construct an optimal transfer $\Delta$, parameterized by a budget $b$, as follows. Agent $i$ will take $b$ worth of item $\ell$ (specifically $b/p_\ell$ units) from agent $i + 1$. All other agents $i' \in K$, $i' \neq i$, will take arbitrary items of total worth $b$ from $i' + 1$.   Crucially, for all of the transfers, since there is an $(i',i'+1)$ edge for all $i' \in K$, i.e. $v_{i'}(\xalloc_{i'}) = v_{i'}(\xalloc_{i'+1})$, Lemma~\ref{lem:edge-indifferent} implies the items taken maximize bang-per-buck for $i'$. Next, notice that we can find a $b > 0$ small enough that $\xalloc + \Delta$ is a feasible allocation\footnote{Letting $e_S$ be the budget of agents in $S$, we can choose any $b \leq \min(e_S, x_{i+1, l} \cdot p_l) = x_{i+1, l} \cdot p_l$ (i.e we just need to ensure agent $i$ does not take more than $x_{i+1, l}$ of item $l$).}. Finally, our transfers preserve the total allocation of each item and thus, $\Delta$ is an optimal transfer. Each agent loses and gains $b$ worth of items so there is no change in budget associated with $\Delta$. Therefore, Lemma~\ref{lem:optimal-transfers} implies that $(\bx' = \xalloc', \bp)$ is a solution for budgets $\be$. Thus for all agents $i$, $v_i(\xalloc_i) = v_i(\xalloc'_i)$. 

Next, observe that $v_{i - 1}(\xalloc'_i) < v_{i - 1}(\xalloc_i)$ since it decreases by $b \cdot r_{i - 1}$ but increases by strictly less than $b \cdot r_{i - 1}$. Since $v_{i - 1}(\xalloc_{i - 1}) = v_{i - 1}(\xalloc'_{i - 1})$, this implies that the indifference edge from agent $i - 1$ to agent $i$ disappears. Finally, we want to ensure that we do not violate envy-freeness or add new indifference edges. There are two cases. The first case is that $i$ is indifferent to $j$ in $\xalloc$: we only modify allocations in $S$, and $S$ is a component, so we only consider $i \in S, j \in S$. Because $(\bx', \bp)$ is a solution for budgets $\be$, and agents in $S$ have identical budgets, agent $i$ will still not envy $j$ in $\xalloc'$. The second case is that $i$ is not indifferent to $j$ in $\xalloc$; we want to find a $b > 0$ small enough such that $i$ will not envy or be indifferent to $j$ in $\xalloc'$. We give such a $b$ in Appendix~\ref{app:choosing b}.

We can repeatedly use the above process to eliminate all non-clique cycles. We do this by eliminating cycles in order of increasing size. Suppose that all cycles of size $k$ form a clique. It is then possible to ensure that for all cycles of size $k + 1$, the vertices form a clique. To see why, consider an arbitrary cycle $K$ of size $k + 1$. If there exists an $i$ such that there is no edge from agent $i - 1$ to agent $i + 1$, we can eliminate the cycle. Otherwise, for all $i$, there is an edge from agent $i - 1$ to agent $i + 1$. Then, we know that for all $i$, $K \setminus \{ i \}$ is a cycle of length $k$, and therefore a clique of size $k$, implying the vertices of $K$ form a clique of size $k + 1$. Note that any cycle of size $2$ immediately forms a clique. Therefore, if we repeatedly choose the smallest size non-clique cycle and eliminate it, we will eventually eliminate all cycles that are not a clique. 

\paragraph{Operation $2$.}

We construct a set of cliques $C_1, \ldots, C_s$ by starting with each vertex in its own clique and arbitrarily merging cliques if the resulting set of vertices still forms a clique.  Suppose we merge to get a clique $C = \{ 1, \ldots, \ell \}$. Lemma~\ref{lem:edge-indifferent} implies that every agent $i \in C$ is indifferent to any item $z \in \xalloc_C$, $\xalloc_C = \xalloc_1 + \ldots + \xalloc_{\ell}$, such that $\frac{v_{iz}}{p_z} = r_i$. Thus, we can perform the following re-balance operation to form $\xalloc'$: for each agent $i \in C$, set $\xalloc'_i = \frac{1}{\abs{C}}\xalloc_C$, and for all other agents $i \notin C$, $\xalloc'_i = \xalloc_i$. 

First, note that this is an optimal transfer, by definition. We want to show that we do not violate envy-freeness or add new indifference edges. We separate into cases based on whether agents $i, j$ are in $C$. If both agents are in $C$, nothing changes. If $i \in C$, $j \notin C$, since $v_i(\xalloc_i)$ stays the same, nothing changes. If $i \notin C$, $j \in C$, if $i$ was indifferent to all agents in $C$, nothing changes. Otherwise, $i$ is not indifferent to some agent in $C$, in which case after the re-balance, $i$ will lose their indifference edge towards all agents in $C$. 

\begin{proof}[Proof of Lemma~\ref{lem:const-util-shifts}]
We start by applying Operation $1$, which eliminates every cycle that is not a clique. Now, notice that eliminating all non-clique cycles does \emph{not} imply that we can partition the graph into cliques in a way that is clique acyclic. To see this most clearly, consider a $5$ vertex instance, where vertices $1$, $2$, $3$ form a clique, and so do vertices $3$, $4$, $5$. All cycles are also cliques, but the graph is still not clique acyclic (the most ``tempting'' partition has the issue that cycle $( 1, 2, 3 )$ contains a vertex that belongs to two cliques). This is where Operation $2$ comes in.

Once we have eliminated all non-clique cycles, if we are not in a clique acyclic graph with the desired properties, we apply Operation $2$. It is possible that during the execution of Operation $2$, an indifference edge will be eliminated. In this case, we stop with Operation $2$ and go back to applying Operation $1$, and so on. Eventually, this process terminates: neither operation creates any new edges, and furthermore, each time we loop we eliminate at least one edge. The two operations preserve the property that $(\bx' = \xalloc', \bp)$ is a solution for budgets $\be$, and that $\xalloc'$ is envy-free, without adding new indifference edges. In addition, re-balancing in Operation $2$ ensures that agents in each clique have identical allocations. It remains to show that $I(\xalloc')$ is clique acyclic.

Assume for sake of contradiction that there exists a cycle that includes vertices in two different cliques $C_1, C_2$. Therefore there exists some edge from $C_1$ to $C_2$ and an edge from $C_2$ to $C_1$. Due to the fact that agents in a clique have identical allocations, this implies that there exists agents $i_1 \in C_1, i_2 \in C_2$ such that $i_1$ has edges to all of $C_2$ and $i_2$ has edges to all of $C_1$. Thus, we can construct a cycle, and thus a clique, containing all agents in $C_1 \cup C_2$. Then, $C_1 \cup C_2$ form a clique, contradicting the fact that no more mergers were possible by Operation $2$.
\end{proof}

Once we have an allocation with the properties of Lemma~\ref{lem:const-util-shifts}, it remains to eliminate the edges between cliques, while preserving the property that agents in a clique have the same allocation.

\begin{lemma}\label{lem:budget-change-shifts}
There exists an algorithm that takes as input an allocation $\xalloc$ and component $S$ with identical budgets such that $I(\xalloc_S)$ is clique acyclic (with at least one non-clique edge) and agents in each clique have identical allocations, and finds a set of agent budgets $\be'$ and solution $(\bx' =\xalloc', \bp)$ for budgets $\be'$, where $I(\xalloc'_S)$ consists of $k > 2$ components $S_1, S_2, \cdots, S_k$, each component consists of agents with identical budgets, $I(\xalloc'_S)$ has strictly fewer edges than $I(\xalloc_S)$, and the new allocation does not violate envy-freeness or add new indifference edges.
\end{lemma}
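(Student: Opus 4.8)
The plan is to exploit the clique-acyclic structure of $I(\xalloc_S)$. Contracting each clique to a single vertex yields a directed graph on the cliques that is acyclic: if the contracted graph had a cycle through distinct cliques $C_{a_1},\dots,C_{a_r}$ with $r\ge 2$, then, picking a witnessing indifference edge for each step and splicing consecutive witnesses together through the clique they share (a clique is complete in both directions), we would obtain a cycle of $I(\xalloc_S)$ meeting more than one clique, contradicting clique-acyclicity. So I would fix a topological order $C_1,\dots,C_t$ of the cliques in which every inter-clique indifference edge points from a lower- to a higher-indexed clique. Then $C_1$ is a \emph{source}: no agent outside $C_1$ is indifferent toward an agent of $C_1$. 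Since $S$ is weakly connected and, by hypothesis, contains an inter-clique edge, $C_1$ has at least one outgoing inter-clique edge.

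Next I would record the observation that makes splitting cheap once budgets become unequal: in any solution $(\bx,\bp)$, if $e_a>e_b$ then $a$ is \emph{not} indifferent toward $b$. Indeed, by the KKT conditions $v_a(\xalloc_a)=r_a e_a$ with $r_a=\max_k v_{ak}/p_k$ and $v_{ak}\le r_a p_k$ for every $k$, so $v_a(\xalloc_b)=\sum_k v_{ak}x_{bk}\le r_a\sum_k p_k x_{bk}=r_a e_b<r_a e_a=v_a(\xalloc_a)$. Hence, the moment the common budget of $C_1$ is raised strictly above everyone else's, \emph{all} of $C_1$'s outgoing indifference edges disappear; since $C_1$ had no incoming inter-clique edges, $C_1$ becomes a union of connected components of $I(\xalloc'_S)$, disjoint from $S\setminus C_1$. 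This already gives both a split of $S$ and a strict decrease in the number of edges.

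What remains is to carry out this budget increase at the \emph{same} prices $\bp$ without creating new indifference edges, without violating envy-freeness, and keeping each resulting component budget-uniform. I would do this by a continuous deformation: raise the common budget of $C_1$ to $c+\lambda$ and, for each $\lambda\ge 0$, maintain a solution $(\bx(\lambda),\bp)$ by an \emph{optimal transfer} (Lemma~\ref{lem:optimal-transfers}) in which the agents of $C_1$ acquire more of their maximum-bang-per-buck items --- which, by Lemma~\ref{lem:edge-indifferent} and the edges leaving $C_1$, they may take from the cliques $C_1$ points to --- and the compensating loss cascades ``downstream'' along the DAG, distributed among $S\setminus C_1$ so that $C_1$ is the unique gainer. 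The inequalities guaranteeing envy-freeness and the absence of indifference edges are all strict at $\lambda=0$ except along indifference edges already present; so, provided the cascade is arranged so those edges stay inside budget-uniform parts, all required properties persist for small $\lambda>0$ by continuity, and I would stop at the first $\lambda^{\ast}>0$ at which a new indifference edge, an envy violation, or infeasibility of the transfer would appear, taking any $\lambda\in(0,\lambda^{\ast})$.

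The main obstacle is controlling the downstream cascade. One must show that the required optimal transfer actually exists for small $\lambda>0$ --- a circulation-feasibility statement on the bipartite agent--item maximum-bang-per-buck graph, whose only possible obstruction is a set of agents that is ``closed'' under co-ownership/indifference and therefore cannot shed value --- and, more delicately, that after the cascade every weakly connected component of $I(\xalloc'_S)$ still consists of agents with a single common budget. The latter seems to force a careful prescription of exactly how the compensating loss is spread across $S\setminus C_1$, and this is presumably where the detailed argument does its real work. Everything else --- the DAG setup, the ``rich beats poor'' observation, and the preservation of envy-freeness and of the edge count by a small-perturbation argument --- is routine given Lemmas~\ref{lem:edge-indifferent} and~\ref{lem:optimal-transfers}.
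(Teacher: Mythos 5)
Your high-level strategy coincides with the paper's: contract cliques to obtain a DAG, single out a source clique, raise its budget via an optimal transfer, invoke the ``rich beats poor'' computation to kill its outgoing indifference edges, and preserve envy-freeness and the absence of new edges by taking the perturbation small (your $\lambda^\ast$ is the paper's choice of $b$ in Appendix~\ref{app:choosing b}). The ``rich beats poor'' inequality you derive is verbatim the one used in the paper. However, the step you explicitly defer --- constructing the ``downstream cascade'' and showing that every component of the new indifference graph is budget-uniform --- is precisely the content of the lemma, and leaving it as ``presumably where the detailed argument does its real work'' is a genuine gap. Moreover, you are looking for the obstruction in the wrong place: you frame the existence of the transfer as a circulation-feasibility question on the bipartite agent--item MBB graph, but no such analysis is needed. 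By Lemma~\ref{lem:edge-indifferent}, whenever $(i,j)$ is an indifference edge between equal-budget agents, \emph{every} item in $\xalloc_j$ is a maximum bang-per-buck item for $i$, so an arbitrary (sufficiently small) amount of $j$'s bundle can be handed to $i$ along that edge as part of an optimal transfer. Hence any small flow on the indifference DAG itself is realizable; there is no closed-set obstruction to rule out.

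The construction the paper uses, and which your proposal is missing, is a \emph{balanced flow} on the clique DAG: attach a super-source $s$ to every vertex of the source clique $C_s$ and a super-sink $t$ to every vertex of every sink clique reachable from $C_s$, route $b$ units of flow from $s$ to $t$ along indifference edges, and then rebalance so that the flow out of $s$ is split equally among the members of $C_s$ and, for each sink clique, the flow into $t$ is equal across its members (the paper gives an explicit rebalancing using the intra-clique edges). Transferring $f_{ij}$ worth of items from $j$ to $i$ along each flow edge then makes every member of $C_s$ gain exactly $b/\lvert C_s\rvert$ in budget, every member of a given sink clique lose the same amount, and every intermediate agent conserve budget. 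Budget-uniformity of the resulting components then follows from exactly the two observations you already have: if $e'_i>e'_j$ the edge $(i,j)$ cannot survive, and if $e'_i<e'_j$ the edge $(i,j)$ never existed because budget only moves against the direction of indifference edges. Without the equal-split constraints at the source and sink cliques, agents within a single clique could end up with different budgets while remaining mutually indifferent through their (unchanged) intermediate neighbors, and the induction on budget-uniform components would break. So your proposal needs this flow construction (or an equivalent prescription of the cascade) to be a complete proof.
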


\begin{proof}
For $I(\xalloc_S)$, we view each clique as a vertex. Consider the graph $G$, where each vertex is a clique $C_i$ (of $I(\xalloc_S)$), and there is an edge between $C_i$ and $C_j$ if there exists a $v_i \in C_i, v_j \in C_j$ where $(v_i, v_j)$ is an edge in $I(\xalloc_S)$. Observe that $G$ forms a DAG, and let $C_s$ be a source vertex in $G$ and let $\sink = \{ C_1, \ldots, C_l \}$ be the sink vertices reachable from $C_s$. We know a source and sink vertex exists because we assume there is at least one non-clique edge.

Now we return to $I(\xalloc_S)$. We will create item transfers $\Delta$. We find it intuitive to describe $\Delta$ via a flow in the following graph. Starting from $G$, create a new global source vertex $s$, and add all $(s,v)$ edges for all $v \in C_s$. Create a new global sink vertex $t$, and add a $(v,t)$ edge for all $v \in \sink$. Finally, we let each edge in the graph have infinite capacity. Next, we find a flow of size $b$ from $s$ to $t$, with the additional constraints that the flow from $s$ to each $v \in C_s$ is the same, and for each $C \in \sink$, the flow from each $v \in C$ to $t$ is the same. We show how to construct such a flow, starting from an arbitrary flow of size $b$. Let $f_{ij}$ be the flow along edge $(i, j)$. For the source clique, $C_s$, we can update $f'_{si} = \frac{1}{\abs{C_s}}\sum_{i \in C_s} f_{si}$ so that $f'$ now satisfies the equal flow constraint for the source clique. Then, we can always ensure the flow $f'$ is balanced by updating the flow between vertices in $C_s$: For each $i, j \in C_s, i < j$, set $f'_{ij} = f_{ij} + \frac{1}{\abs{C_s}} \left(f_{sj} - f_{si}\right)$, where negative flows are added as positive flows in the reverse direction. The flow along all other edges remains the same. To show that $f'$ is a feasible flow, we show flows are balanced for vertices in the source clique. Observe for an agent $i \in C_s$:
\[\sum_{j \in \agents \cup \{ s \}} f'_{ji} - \sum_{j \in \agents} f'_{ij} = \sum_{j \in \agents} f_{ji} + \frac{1}{\abs{C_s}}\sum_{j \in C_s} f_{sj} + \frac{1}{\abs{C_s}} \sum_{j \in C_s} \left(f_{si} - f_{sj}\right) - \sum_{j \in \agents} f_{ij} = \sum_{j \in \agents \cup \{ s \}} f_{ji} - \sum_{j \in \agents} f_{ij}.\]
Therefore, if $f$ is a valid flow, so is $f'$. An analogous procedure can be applied to the sink cliques.

We use this flow to guide the item transfers, $\Delta$. For each edge with flow, agent $i$ will take an arbitrary $f_{ij}$ worth of items from agent $j$. The global sink and source vertices are, obviously, excluded. We choose $b$ small enough such that $\xalloc + \Delta$ is feasible. We can assume without loss of generality that the sum of flows towards any vertex is at most $b$, as any extra flow must be part of a cycle and can be eliminated. Then if $e_S$ is the budget of agents in $S$, we can choose any $b \leq e_S$. Since we ensure the total allocation of an item is preserved and we only transfer items along indifference edges between agents of the same budget, $\Delta$ is an optimal transfer.  Therefore we can apply Lemma~\ref{lem:optimal-transfers}.

Since there are no item transfers to $s$, each agent $v \in C_s$ with an incoming flow has the same increased budget. Similarly, for each sink clique $C$ with a positive flow from $v \in C$ to $t$, each member of $C$ will have the same decreased budget. 
Now, if we take each component in the resulting indifference graph, we claim that each agent in the component will have identical budgets.
It is sufficient to show that if $e'_i \neq e'_j$, then there will not be an edge from $v_i$ to $v_j$. Since initially everyone in $S$ had the same budget, and item transfers --- as well as ``budgets'' --- go from sinks to sources, if $e'_i < e'_j$, there was never an edge from $v_i$ to $v_j$ to begin with. Now suppose $e'_i > e'_j$; we have:
\[v_i(\xalloc_i) = \sum_{k \in \xalloc_i} v_{ik} x_{ik} =^{KKT~\eqref{eq: KKT 3}} \sum_{k \in \xalloc_i} r_i p_k x_{ik} = r_i e'_i > r_i e'_j = \sum_{k \in \xalloc_j} r_i p_k x_{jk} \geq \sum_{k \in \xalloc_j} v_{ik} x_{jk} = v_i(\xalloc_j),\]
which implies no $(i,j)$ indifference edge.
Since agents in $C_s$ have a greater budget than all other remaining  agents, there will be at least two components.
The final property that we need to prove is that the new allocation does not violate envy-freeness or add new indifference edges. We break into cases. First, if agents $i$ and $j$ were indifferent, as we only modify allocations in component $S$, we only consider when $i \in S, j \in S$. We ensure $v_i(\xalloc_i)$ either stays the same or increases, since we transfer items in the opposite direction of indifference edges. For similar reasons, $v_i(\xalloc_j)$ either stays the same or decreases. On the other hand, if agent $i$ did not envy $j$ we can set $b$ small enough such that $i$ will not envy $j$ in $\xalloc'$. We discuss this final detail in Appendix~\ref{app:choosing b}.
\end{proof}
\subsection*{Putting everything together}



\begin{proof}[Proof of Theorem~\ref{thm:main-disjoint-cliques}]
Our overall algorithm first solves the E-G convex program with budgets $\be$ where $e_i = 1$ for all $i$, to find a solution $(\bx = \xalloc, \bp)$. We keep track of the set of components $\calS$ with identical budgets, where initially $\calS = \{ \agents \}$. We alternate between applying the algorithm of Lemma~\ref{lem:const-util-shifts} and Lemma~\ref{lem:budget-change-shifts}, henceforth procedure $1$ and procedure $2$. We start by applying the former to each $S \in \calS$. It is possible that after applying procedure $1$, edges are eliminated, resulting in $S$ being split into multiple components. Let $f(S)$ be the set of components formed and update $\calS := \bigcup_{S \in \calS} f(S)$. The ``clique acylic'' and ``identical allocations'' properties are still satisfied by each individual component in $\calS$. We then apply procedure $2$ to each $S \in \calS$. We perform the same update to $\calS$, where $f(S)$ is the set of components with identical budgets found after applying procedure $2$. We repeat until applying the two procedures does not decrease the number of edges in the graph. Finally, we perform the re-balance operation (described in procedure $1$) on each component.

Both procedures do not add edges and reduce the number of edges. There are at most $n^2 - n$ initial edges, and thus the algorithm terminates. In addition, both procedures produce allocations and budgets where $(\bx' = A', \bp)$ is a solution to the E-G program for budgets $\be'$. Finally, $\xalloc$ is CISEF. First, $\xalloc$ is envy-free by construction. Next, consider the final graph $I(\xalloc)$. We claim that each component is a clique where agents have identical budgets. The component must be clique-acyclic after procedure $1$ is applied. No edges were removed by procedure $2$, so there could not have been any edges between cliques. Therefore, the component can only consist of one clique. Next, procedure $2$ guarantees that each agent in the clique has the same budget. Lemma~\ref{lem:edge-indifferent} tells us that agents will have identical valuations, up to a multiplicative factor, for items that any agent in the clique receives.  The final re-balance operation doesn't alter any of the above properties.
\end{proof}


\section{When Fairness and Efficiency are Incompatible}\label{sec:incompatible}

In this section we discuss the tradeoff between efficiency and fairness against the latter two, stronger adversaries. Our main conclusion is, informally, that no allocation algorithm with vanishing envy Pareto dominates random allocation.~\citet{BKPP18} show that against an adaptive adversary (and thus against all weaker adversaries), random allocation has vanishing envy. Allocating at random is also $1/n$-Pareto efficient ex-ante.
Here, we show that no algorithm can achieve vanishing envy and expected $(\frac{1}{n} + \varepsilon)$-Pareto efficient for any $\varepsilon > 0$, even against the non-adaptive adversary.
\begin{theorem}
\label{thm:eff-envy-lb-det}
For any $\varepsilon > 0$, no randomized allocation algorithm can achieve both expected envy $f(T)$, where $f(T) \in o(T)$, and $(\frac{1}{n} + \varepsilon)$-Pareto efficient ex-ante against a non-adaptive adversary.
\end{theorem}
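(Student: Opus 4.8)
The plan is to show that for any randomized algorithm $\mathcal A$ with $\mathbb{E}[\ENVY(\mathcal A)]\in o(T)$ and any $\varepsilon>0$, some non-adaptive instance forces the expected allocation of $\mathcal A$ to be dominated, for every agent simultaneously, by a factor exceeding $\tfrac{1}{1/n+\varepsilon}=\tfrac{n}{1+n\varepsilon}$ — which is exactly the failure of $(\tfrac1n+\varepsilon)$-Pareto efficiency ex ante. Since the non-adaptive adversary sees $\mathcal A$'s code, the instance may depend on $\mathcal A$; I would build it around an "ambiguous" shared prefix of items followed by one of a constant number of "revealing" continuations (and, à la Yao, possibly after randomizing which agent the continuation designates as "special").

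The first step is to argue that low envy pins down $\mathcal A$'s behavior on the prefix. Here $\mathbb{E}[\ENVY]\in o(T)$ implies, by Markov, that for every constant $\eta>0$ we have $\ENVY(\mathcal A)\le \eta T$ with probability $1-o(1)$; on the continuation that "punishes" agent $i$ for taking a disproportionate share of the prefix, this forces $i$'s prefix share to be at most a $(\tfrac1n+o(1))$-fraction with high probability. A union bound over the constantly many continuations then shows the prefix must be split $(\tfrac1n\pm o(1))$-uniformly w.h.p., hence in expectation up to $o(T)$. The second step is to propagate this forced uniformity — together with the envy constraints induced by the chosen continuation (which, in particular, prevent $\mathcal A$ from "dumping" the continuation's items on the special agent by using partial, rather than exclusive, ownership) — through to the expected utility profile $\bar u$, showing $\bar u_i=\tfrac1n V_i\pm o(T)$ for every $i$, i.e. $\mathcal A$ does no better in expectation than allocating uniformly at random on this instance. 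The third step is to exhibit a feasible (offline) allocation $u^*$ with $u^*_i>\tfrac{n}{1+n\varepsilon}\,\bar u_i$ for all $i$ and all large $T$; since $\tfrac{n}{1+n\varepsilon}<n$, this amounts to simultaneously giving every agent more than a $\bigl(1-O(n\varepsilon)\bigr)$-fraction of their maximum value $V_i$, which the continuation is engineered to make possible for a clairvoyant allocator even though it is impossible for the online $\mathcal A$.

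I expect the main obstacle to be making the second and third steps compatible. The prefix must be "ambiguous enough" that low envy across all continuations forces the $\tfrac1n$-uniform split — which pushes toward heavily overlapping valuations — yet the realized instance must admit an allocation beating the $\tfrac1n$-uniform profile by almost a factor of $n$ for every agent at once — which pushes toward near-disjoint supports, since $\sum_i u^*_i\le\sum_t\max_i v_{it}$ while $\sum_i \tfrac{V_i}{n}=\tfrac1n\sum_t\sum_i v_{it}$, and these differ by a factor $n$ only when each item essentially has a single valuer. Reconciling these is precisely where the online restriction must do the work: the supports should look overlapping while items are being allocated but be (nearly) disjoint once the continuation is revealed. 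Constructing such an instance, and then verifying both the envy lower bound (via the Markov/Hoeffding estimates, tracking the $o(T)$ slack carefully so the strict domination survives for all large $T$) and the efficiency gap, is the bulk of the argument; the extension to the fully adaptive adversary is then immediate, since an adaptive adversary can simulate the non-adaptive one.
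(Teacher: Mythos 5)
Your high-level architecture matches the paper's: a family of instances sharing prefixes (the paper uses $n$ truncations $I_1,\dots,I_n$ of a single segmented instance), an inductive argument that sublinear envy forces each agent to receive a $\tfrac1n\pm o(1)$ share of each prefix segment with high probability, and a comparison against an offline allocation that gives every agent a factor $\approx n$ more. But the proposal stops exactly at the load-bearing step: you explicitly leave open how to build an instance in which low envy forces the uniform split \emph{and} the offline optimum beats the uniform profile by nearly a factor of $n$ for every agent simultaneously, and you frame this as a tension between overlapping supports (needed for envy pressure) and disjoint supports (needed for the factor-$n$ gap, via $\sum_t \max_i v_{it}$ vs.\ $\tfrac1n\sum_i V_i$). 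That tension is real, and the construction that resolves it is the missing idea, not a routine detail. The paper's resolution is a two-level valuation: partition $[T]$ into $n$ segments, with agent $i$ valuing every item in segment $i$ at $1$ and every other item at $\varepsilon$. The supports fully overlap, but the background value $\varepsilon$ is small-yet-nonzero: nonzero, so that within a segment (where all items are identical to each agent) an imbalance of $y$ items creates envy at least $\varepsilon y$, forcing each agent's item count per segment into $[\tfrac{T}{n^2}-x_i,\tfrac{T}{n^2}+x_i]$ with $x_i=\tfrac{f(T)}{\varepsilon}(1+\tfrac2\varepsilon)^{i-1}\in o(T)$; and small, so that the forced online utility is at most about $(1+(n-1)\varepsilon)\tfrac{T}{n^2}$ while the clairvoyant allocation (give segment $i$ to agent $i$) yields $\tfrac{T}{n}$ per agent --- a ratio of $\tfrac{n}{1+(n-1)\varepsilon}>\tfrac{n}{1+n\varepsilon}$. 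Without this (or an equivalent) construction, your steps two and three are assertions rather than a proof. Note also that your claim ``supports should look overlapping while items are being allocated but be nearly disjoint once the continuation is revealed'' points in the wrong direction: in the working construction the supports never become disjoint and the continuations do not change anyone's values; they merely truncate (all remaining items become worthless), which freezes the envy accumulated so far and is what makes the per-segment envy constraint binding.

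Two smaller gaps. First, your Markov-plus-union-bound treatment of the prefix is too coarse as stated: the admissible deviation $x_i$ compounds geometrically across segments (earlier imbalances offset later envy), so the induction must track the accumulated slack $\sum_{i'<i}x_{i'}$, and in the non-adaptive setting one needs an auxiliary rate $g(T)$ with $g(T)f(T)\in o(T)$ and $g(T)\in\omega(1)$ to condition segment by segment while keeping the total failure probability $n/g(T)=o(1)$ and the deviations $x_i=\tfrac{g(T)f(T)}{\varepsilon}(1+\tfrac2\varepsilon)^{i-1}$ still sublinear. Second, the Yao-style randomization over which agent is ``special'' is unnecessary: the deterministic family $I_1,\dots,I_n$ suffices because the algorithm cannot distinguish $I_i$ from $I_{i'}$, $i'>i$, before segment $i$ ends, so the high-probability bound derived from each $I_i$ applies to the behavior of the algorithm on $I_n$ as well.
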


\begin{proof}
To build up intuition, we first describe the lower bound for the case of an adaptive adversary. Since against an adaptive adversary randomization does not help, we focus on deterministic algorithms. Consider any vanishing envy algorithm that for all $T$ produces an allocation $\alloc^T$, where $\ENVY(\alloc^T) \leq f(T)$ for some $f(T) \in o(T)$, and assume, for the sake of contradiction, that this algorithm achieves $(\frac{1}{n} + \varepsilon)$-Pareto efficiency for some $\varepsilon > 0$. 

Consider the following instance $I$, parameterized by $\varepsilon$ and $T$. For each agent $i$, items $j$ from  $\frac{T}{n}(i - 1)$ to $\frac{T}{n}i$ will have value $1$, and all other items $j'$ have value $ \varepsilon$. For all intermediate allocations at time $t \leq T$, we must have $\ENVY(\alloc^t) \leq f(T)$ since an adaptive adversary can make the remaining items valueless to all agents. We start by showing via induction that for all ``segments'' of items $\frac{T}{n}(i - 1)$ to $\frac{T}{n}i$, each agent must receive a number of items in $[\frac{T}{n^2} - x_i, \frac{T}{n^2} + x_i]$, where $x_i = \frac{f(T)}{\varepsilon} \left(1 + \frac{2}{\varepsilon}\right)^{i - 1}$.

For $i=1$, i.e. the first segment, suppose that some agent $k$ receives $\frac{T}{n^2} + y$ items where $y > 0$. Another agent $\hat{k}$ must then receive fewer than $\frac{T}{n^2}$ items. Then, the envy of $\hat{k}$ for $k$ at the end of the first segment, $\ENVY_{\hat{k}, k}(\alloc^{T/n})$ is at least $\varepsilon \cdot y$. But, $\ENVY_{\hat{k}, k}(\alloc^{T/n}) \leq f(T)$, which implies that $y \leq \frac{f(T)}{\varepsilon}$; the lower bound on $y$ is identical. For the inductive step, again suppose that in the segment $\frac{T}{n}(i - 1)$ to $\frac{T}{n}i$ some agent $k$ receives $\frac{T}{n^2} + y$ items, where $y > 0$, and let $\hat{k}$ be the agent who received fewer than $\frac{T}{n^2}$ items. At the start of the segment, to agent $\hat{k}$, the difference in value between agent $k$'s bundle and their bundle is at least $-2 \sum_{i' < i} x_{i'}$, which is if $\hat{k}$ got $\frac{T}{n^2} + x_{i'}$ in each segment, $k$ got $\frac{T}{n^2} - x_{i'}$, and $\hat{k}$ had value $1$ for all items up until $\frac{T}{n}(i - 1)$. Thus, after the $i$-th segment, $\ENVY_{\hat{k}, k}(\alloc^{\frac{T}{n}i}) \geq \varepsilon \cdot y + v_{\hat{k}}\left(\alloc^{\frac{T}{n}(i-1)}_{k}\right) - v_{\hat{k}}\left(\alloc^{\frac{T}{n}(i-1)}_{\hat{k}}\right) \geq \varepsilon \cdot y - 2 \sum_{i' < i} x_{i'},$ which in turn implies that
\[ \textstyle y \leq \frac{1}{\varepsilon}\left(f(T) + 2\sum_{i' < i}x_{i'}\right) = \frac{1}{\varepsilon}\left(f(T) + 2 \sum_{i' < i} \frac{f(T)}{\varepsilon} \left(1 + \frac{2}{\varepsilon}\right)^{i' - 1} \right) = \frac{f(T)}{\varepsilon} \left(1 + \frac{2}{\varepsilon} \right)^{i-1}.
\]
The last equality is obtained by summing the geometric series and simplifying. The bound for $y$ is identical when we consider the case that $y < 0$.
Next, we show that this implies the allocation $\alloc^T$ is not $(\frac{1}{n} + \varepsilon)$-Pareto efficient. First, note that the social welfare maximizing allocation achieves utility $(\frac{T}{n}, \ldots, \frac{T}{n})$, by giving all the items of the $i$-th segment to agent $i$. Meanwhile, since $x_i < x_n$, we have that in $\alloc^T$ each agent gets utility $u_i$ at most $(1 + (n - 1)\varepsilon)(\frac{T}{n^2} + x_n)$. Therefore,
\begin{align*}
 \frac{u_i}{1/n + \varepsilon} &< (1 + (n - 1)\varepsilon)\left(\frac{T}{n^2} + x_n\right)\left(\frac{1}{\frac{1}{n} + \varepsilon}\right)= (1 + (n - 1)\varepsilon)\left(\frac{T}{n^2} + \frac{f(T)}{\varepsilon} (1 + \frac{2}{\varepsilon})^{n - 1}\right)\frac{n}{1 + \varepsilon n} \\
   &= \frac{1 + (n - 1)\varepsilon}{1 + \varepsilon n} \cdot \left(\frac{T}{n} + n\frac{f(T)}{\varepsilon}(1 + \frac{2}{\varepsilon})^{n - 1}\right) =\frac{T}{n} \cdot \frac{1 + (n - 1)\varepsilon}{1 + \varepsilon n} \cdot \left(1 + \frac{n^2}{\varepsilon} \frac{f(T)}{T}(1 + \frac{2}{\varepsilon})^{n - 1}\right).
\end{align*}
Picking $T$ large enough so that $\frac{f(T)}{T} < \frac{\varepsilon}{1+(n-1)\varepsilon} \cdot \frac{\varepsilon}{n^2 (1 + 2/ \varepsilon)^{n - 1} }$ gives $u_i < \frac{T}{n} \cdot ( 1/n + \varepsilon )$ for every agent $i$. Therefore, $\alloc^T$ is not $(\frac{1}{n} + \varepsilon)$-Pareto efficient, a contradiction. This concludes the proof for an adaptive adversary.
We can use the above result to prove the same for a non-adaptive adversary; details are relegated to Appendix~\ref{app: lower bound proof continued}.
\end{proof}
\section{Conclusion and Future Directions}\label{sec: conclusion}

In this paper we study whether fairness and efficiency can be balanced in an online setting under a spectrum of different adversary models. We show tight upper and lower bounds for each setting. For the adaptive adversary, and therefore for all the other, weaker ones as well, the upper bound is given by~\cite{BKPP18}: an algorithm with vanishing envy that gets a $1/n$ approximation of Pareto. We show that this result is essentially tight (Theorem~\ref{thm:eff-envy-lb-det}) for the non-adaptive adversary (and therefore for all the stronger ones) if one wants to balance fairness and efficiency. Our main upper bound (Theorem~\ref{thm:POCR}) is for correlated agents and i.i.d. items and shows how to get Pareto efficiency ex-post and guarantee EF1 ex-post, or envy-freeness with high probability. We argue (Appendix~\ref{app:optimal-fairness-guarantee}) that one cannot hope for a better fairness guarantee even for identical agents, and since the efficiency guarantee is the strongest one, we conclude that this result is also tight. Finally, even though our main upper bound of course holds for identical agents, for this case we show (Proposition~\ref{prop:util-max-properties}) how to adapt the approach of~\citet{DickersonGKPS14} and~\citet{kurokawa2016can} and get the same tight result, but with a simpler algorithm.

\paragraph*{Computation Considerations.}

Given our structural result, Theorem~\ref{thm: structural}, it is clear that all our results run in polynomial time. To get our structural result we assume an exact solution to the E-G program, that we can get in strongly polynomial time via the results of~\citet{orlin2010improved}. Our edge-elimination processes only runs $O(n^2)$ times. The only possible issue is the number of bits in the solution $(\bx, \bp)$ and budgets $\be$, as the item transfers described in Lemma~\ref{lem:const-util-shifts} and~\ref{lem:budget-change-shifts} can both increase the length (in bits) of $\bx$ and $\be$. This increase depends primarily on the budget transfers $b$ (computed in Appendix~\ref{app:choosing b}). We can always choose $b$ such that $b$ is equal to $(v_i(\alloc_i) - v_i(\alloc_j))/4c$ for some $i, j$, where $c$ is a constant that only depends on the instance and the initial solution to the E-G program. Since $v_i(A_i), v_i(A_j)$ are linear functions of $\bx$, their representations are a constant (depending on the $v_{ij}$ and $n$) larger than the elements of $\bx$. In addition, the representation of $b$ is an additive constant larger than the bit length of the min difference $v_i(\alloc_i)-  v_i(\alloc_j)$, so performing the transfer of items under budget $b$ will also only increase the bit length of the elements of $\bx, \be$ by a constant.

\paragraph*{Future directions.}
Notice that, with the exception of the non-adaptive adversary, we completely understand what is possible if we want to optimize only fairness or only efficiency. On the other hand, for the non-adaptive adversary, it remains open how well one can do with respect to minimizing the maximum (expected) pairwise envy. Vanishing envy is of course achievable, but we do not know if it's possible to improve the $\tilde{O}(\sqrt{T})$ guarantee, or how big that improvement could be; for instance, we don't even know of a super constant lower bound. Another technical question that remains open is what is possible when the distributions chosen by the adversary can depend on $T$.
Finally, a general direction for future work is balancing fairness and efficiency under different adversaries (for example, one could imagine adversaries weaker than the non-adaptive one, but stronger than correlated agents with i.i.d. items), or under different definitions of ``fair'' and ``efficient''.

%





\bibliographystyle{plainnat}
\bibliography{references}

\appendix
\newpage
\section{Different Budgets and Prices are Necessary}\label{app: same budget counter example}

Consider the instance with $3$ agents and $2$ items. Agent $A$ has value $1$ for both items. Agent $B$ has values $1/2$ and $1$ for the first and second item, respectively, while agent $C$ has values $1$ and $1/2$. 

It is easy to check that the unique solution that maximizes the product of agents' utilities (i.e. equal budgets in the EG convex program) gives agent $A$ $1/3$ of each item, while agent $B$ gets $2/3$ of item 2 and agent $C$ gets $2/3$ of item $1$. The prices of both items are equal.


In this example, the unique product maximizing solution does not satisfy the CISEF property, as there are two indifference edges from $A$ to $B$ and $A$ to $C$. Therefore, to get an allocation like the one guaranteed by Theorem~\ref{thm: structural} one needs to go beyond equal budgets in the E-G convex program.

\section{Improving on ``EF1 or EF w.h.p.'' Guarantee is Impossible}
\label{app:optimal-fairness-guarantee}
We show these impossibility results under the weakest adversary, where the adversary can give a distribution $D^V$ and item values $v_{it}$ are drawn from $D^V$. 

First, we show that it is not always possible to guarantee envy-freeness with high probability. Define $D^V$ to be the uniform distribution over the set $\{ 1 \}$. Note that whenever $T$ is not a multiple of $n$, the allocation will not be envy-free. Next, we show that for any $x$, envy-freeness up to $x$ goods, is not an achievable guarantee. We use the construction of the lower bound in \citet{BKPP18}, which assumes item values bounded within $[0, 1]$. 

\begin{lemma}[\citet{BKPP18}]\label{thm: benade et al}
    For $n \geq 2$ and $r < 1$, there exists an adversary strategy for setting item values such that any algorithm must have $\ENVY(A) \in \Omega((T/n)^{r/2})$, where $A$ is the allocation $T$ items. 
\end{lemma}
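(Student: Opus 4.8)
This is the online lower bound of \citet{BKPP18}; I sketch a self-contained argument. The plan is to exhibit an \emph{adaptive} adversary that accumulates envy through $\Theta(r\log(T/n))$ geometrically-scaled ``phases'', and to read off the bound by solving a simple recursion. First I would reduce to a clean core. It suffices to produce, for some fixed set of agents, an adversary strategy that uses $\Theta(T/n)$ of the rounds, forces envy $\Omega((T/n)^{r/2})$ among those agents, and makes every round in which the algorithm parks an item on an ``uninvolved'' agent only hurt the algorithm (so parking gains nothing); the remaining rounds are padded with items of value $0$ to everyone. Equivalently one can first prove the bound for a small fixed number of agents with $T$ items, obtaining the stronger $\Omega(T^{r/2}) \ge \Omega((T/n)^{r/2})$, and then lift it by the standard device of running $\Theta(n)$ independent copies of the construction on disjoint blocks of rounds, each block dedicated to a fresh group of agents so that the blocks do not interfere.

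Next, the core construction. I would track the full vector of \emph{signed} pairwise envies $e_{ij}^{\,t} = v_i(\alloc_j^t) - v_i(\alloc_i^t)$ as the ``state'' (so $\ENVY(\alloc^t) = \max\{\max_{i,j} e_{ij}^{\,t},\,0\}$). The structural fact driving the argument is that when the algorithm hands an arriving item to an agent it can decrease at most one of these signed envies, and only by that agent's value for the item, which is at most $1$; so, given any target coordinate to shrink, the adversary can always present an item that either raises that coordinate or makes the algorithm shrink it only at the cost of raising another. I would group the rounds into phases $k=1,2,\dots$, where phase $k$ uses $\approx \beta^{k}$ items each of value $\approx \beta^{-k/2}\le 1$, and the adversary chooses these values \emph{adaptively from the current state} so that the envy accumulated in phases $<k$ is not cancelled and a further amount (growing with $k$) is added. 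Two observations make the phases compose. (i) Because each item has value $\le 1$, the algorithm can move any single envy coordinate by at most $1$ per round, hence it cannot ``catch up'' faster than the adversary forces growth. (ii) Because the adversary is adaptive, it may at any moment set all remaining items to value $0$ and freeze the state; so if the algorithm ever lets the relevant envy exceed the phase-$k$ target, the adversary simply freezes and we are done — therefore we may assume the algorithm keeps that envy below target throughout phase $k$, which is exactly the hypothesis needed to run phase $k+1$.

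Solving the recursion: $k^*$ phases use $\sum_{k\le k^*}\beta^{k}=\Theta(\beta^{k^*})$ rounds, so choose $k^*=\log_\beta(T/n)-O(1)$; the forced envy is $\Omega(\beta^{k^*/2})=\Omega\big((T/n)^{1/2}/C^{k^*}\big)$, where $C$ is the constant envy lost per phase to the algorithm's balancing. Taking $\beta=\beta(r)$ large enough that $\log_\beta C \le (1-r)/2$ turns this into $\Omega((T/n)^{r/2})$ — this is precisely where the slack $r<1$ is spent. The step I expect to be the main obstacle is the per-phase amplification: proving that the algorithm genuinely cannot use the many cheap phase-$k$ items to erase the expensive envy built in earlier phases. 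The subtlety is that the algorithm may \emph{hedge}, splitting items so as to keep several envy coordinates simultaneously small, so the analysis cannot track only $\ENVY$; one needs a potential over the whole signed-envy vector and must show that the adversary, using only its knowledge of the current allocation, can pin this potential above the target round by round — exploiting that shrinking $e_{ij}$ necessarily inflates some other coordinate (another pair's envy when $n\ge 3$, or a finer discrepancy statistic when $n=2$).
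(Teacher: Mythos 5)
The paper does not prove this lemma; it imports it verbatim from \citet{BKPP18} and only sketches their construction in Appendix~\ref{app:optimal-fairness-guarantee}: a two-agent state machine with states $\ldots,L_2,L_1,0,R_1,R_2,\ldots$, items of value $(1,\nu_i)$ or $(\nu_i,1)$ with $\nu_i=(i+1)^r-i^r$ so that the envy telescopes to $\approx i^r$ at state $i$, and a random-walk argument showing the terminal state is typically at distance $\Omega(\sqrt{T/n})$ from the origin. Your geometric-phase amplification scheme is a genuinely different route, so the comparison is against that cited argument rather than anything in this paper.

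There is, however, a real gap in your proposal: the step you yourself flag as ``the main obstacle'' --- showing that the $\beta^{k}$ items of value $\beta^{-k/2}$ in phase $k$ cannot be used to cancel the envy accumulated in phases $<k$ --- is the entire content of the lower bound, and you do not supply it. The total value available to the algorithm in phase $k$ is $\beta^{k}\cdot\beta^{-k/2}=\beta^{k/2}$, which is of the same order as the envy you claim to have built so far, so ``the algorithm can only move a coordinate by $1$ per round'' does not by itself prevent cancellation; you need a per-round invariant (a potential over the signed-envy vector, or an anti-concentration statement) that the adversary can enforce against \emph{every} allocation of the current item, and that is exactly what the BKPP18 state machine provides and what your sketch defers. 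Two smaller issues: (i) your arithmetic $\Omega\bigl(\beta^{k^*/2}/C^{k^*}\bigr)$ presumes the per-phase loss is a fixed \emph{multiplicative} factor $C$, which is again the unproved amplification claim (an additive constant loss per phase would give a different, and for your purposes useless, recursion); (ii) the bookkeeping claim that assigning an item to an agent ``can decrease at most one'' signed envy coordinate is false for $n\ge 3$ --- giving item $t$ to agent $j$ decreases $e_{ji'}$ for all $i'\neq j$ simultaneously and increases $e_{ij}$ for all $i\neq j$ --- so the $n>2$ case of your potential argument needs more care than stated (BKPP18 sidesteps this by making the item worthless to agents $3,\ldots,n$, so that parking on them strictly increases some envy). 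As written, the proposal is a plausible plan, not a proof.
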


In the proof of Lemma~\ref{thm: benade et al}, they use an adversary strategy where all agents other than the first two agents do not value the item. The value of the arriving item to the first two agents depends solely on a state machine, where the states are ``$\ldots, L_2, L_1, 0, R_1, R_2, \ldots$''. The item associated with state $L_i$ has value $(1, \nu_i)$ (so $a_1$ values it at $1$ and $a_2$ values it at $\nu_i$), the item associated with state $0$ has value $(1, 1)$, and the item associated with state $R_i$ has value $(\nu_i, 1)$. The state machine starts at state $0$ and transitions one step left or right after each item arrival, depending on how the previous item was allocated. Thus, the set of all possible item values used for a length $T$ instance is $\{0, 1, \nu_1, \ldots, \nu_T\}$. They define $\nu_i = (i + 1)^r - i^r$. 

We now apply Lemma~\ref{thm: benade et al}, setting $r = \frac{1}{2}$. Let $c, T_0$ to be the constants such that for any $T \geq T_0$, the adversary can guarantee $\ENVY(A) \geq c(T/n)^{r/2} = c(T/n)^{1/4}$. We then take any $T' \geq T_0$ where $c(T'/n)^{1/4} > x$ and set $D$ to be the uniform distribution over $\{0, 1, \nu_1, \ldots, \nu_{T'}\}$. 

We claim that for any $T \geq T'$, there is a positive probability that the allocation will not be envy-free up to $x$ goods. First, for the given algorithm, there is a positive probability that the first $T$ items drawn will follow the adversary strategy. Then, for the remaining items, there is a positive probability that 
all the items will have no value to any agent. Thus, for the final allocation $A$, we will have $\ENVY(A) > x$. Finally, observe that when item values are bounded within $[0, 1]$, $\ENVY(A) > x$ implies that $A$ is not envy-free up to $x$ goods.

\section{Missing Proofs}

\subsection{Proof of Proposition~\ref{prop:util-max-properties}}\label{app: missing proposition}

\begin{proof}[Proof of Proposition~\ref{prop:util-max-properties}]
Allocating to the agent with the highest value clearly generates a Pareto efficient allocation as it maximizes social welfare.
The algorithm first handles the special case where $D$ is a point mass. In this case $v_{it} = v$ for all $i$ and all $t$, therefore allocating the arriving items in a round-robin manner achieves an EF1 allocation, as each agent will have at most one fewer item than other agents.

In the remaining cases, we apply Theorem~\ref{thm: john et al}. The statement and proof of Theorem~\ref{thm: john et al} assume that the distribution is non-atomic, which ensures that $\arg\max_{k \in \agents} v_{kt}$ is exactly one agent and makes the utilitarian allocation algorithm well defined. In contrast, our setting assumes atomic distributions. However, the theorem can be easily adapted to our setting by adding random tie-breaking to the utilitarian allocation algorithm as described in Algorithm~\ref{alg:util}. We then rewrite Theorem~\ref{thm: john et al}, replacing [$\arg\max_{k \in \agents} v_{kt} = \{ j \}$] with [$j$ receives $t$] and similarly with [$\arg\max_{k \in \agents} v_{kt} = \{ i \}$]. This modified theorem can be shown using essentially the same proof as the original.

We now show that the distribution satisfies the two properties. The first property to show is that that for each agent $i \in \agents$, $\Pr[i \text{ receives } t] = 1/n$. This property is satisfied by symmetry, as all agent values are drawn from the same distribution and tie-breaking is done randomly. The second property is that for some constant $\Delta > 0$, and for all agents $i, j \in \agents$ where $i \neq j$: $\E[v_{it} \mid i \text{ receives } t] - \E[v_{it} \mid j \text{ receives } t] \geq \Delta$. 
We largely follow the proof of Lemma 3.2 by \citet{kurokawa2016can} with some simplifications and importantly, handling the case when the distribution $D$ is discrete.

Since agents' value distributions are identical, we can restate this as:
$$\E[v_{it} \mid i \text{ receives } t] - \E[v_{it} \mid i \text{ does not receive t}] \geq \Delta$$

Agents are all identical and the algorithm always allocates the item to an agent with the maximum value for it. This implies that $\E[v_{it} \mid i \text{ receives } t] = \E[\max(v_{1t}, \ldots, v_{nt})]$. Next, if $D$ is not a point mass, we know that $\Var[D] > 0$. From here, we can show that $\E[\max(v_{1t}, \ldots, v_{nt})] > \E[v_{it}]$.

Let $\Var[D] = c$.  Let $\bar{X} = \E[X], p = \Pr[  X < \bar{X}]$. Observe that when $\Var[D] > 0, p \in (0, 1)$.

\begin{align*}
    c &= \Var[D]\\
    &= \E[(X - \bar{X})^2]\\
    &\leq \E[\abs{X - \bar{X}}]\\
    &=p\E[\bar{X} - X \mid X < \bar{X}] + (1-p)\E[X - \bar{X} \mid X \geq \bar{X}]\\
    &=-p\E[X \mid X < \bar{X}] + (1 - p)\E[X \mid X \geq \bar{X}] + (2p - 1)\bar{X}
\end{align*}
From here, we do case-analysis based on $p \leq 1/2$ or $p > 1/2$.

Suppose that $p > 1/2$, we use the substitution $\bar{X} = p\E[X \mid X < \bar{X}] + (1 - p)\E[X \mid X \geq \bar{X}]$ and then we rewrite the above as:
\begin{align*}
    c &\leq 2p(\bar{X} - \E[X \mid X < \bar{X}])\\
    \implies \frac{c}{2} &\leq \bar{X} - \E[X \mid X < \bar{X}] \leq \E[X \mid X \geq \bar{X}] - \E[X \mid X < \bar{X}] 
\end{align*}
Similarly, we $p \leq 1/2$, we go in the other direction and get:
\begin{align*}
    c &\leq 2(1-p)(\E[X \mid X \geq \bar{X}] - \bar{X})\\
    \implies \frac{c}{2} &\leq \E[X \mid X \geq \bar{X}] - \bar{X} \leq \E[X \mid X \geq \bar{X}] - \E[X \mid X < \bar{X}] 
\end{align*}

Finally, we have that:
\begin{align*}
    \E[\max(v_{1t}, \ldots, v_{nt})] &\geq (1 - p^n)\E[X \mid X \geq \bar{X}] + p^n \E[X \mid X < \bar{X}]\\
    &\geq (1 - p)\E[X \mid X \geq \bar{X}] + p \E[X \mid X < \bar{X}]\\
    &\qquad + (p - p^n)(\E[X \mid X \geq \bar{X}] - \E[X \mid X < \bar{X}])\\
    &\geq (1 - p)\E[X \mid X \geq \bar{X}] + p \E[X \mid X < \bar{X}] + (p - p^n)\frac{c}{2}\\
    &=\E[v_{it}] + (p - p^n)\frac{c}{2}
\end{align*}

Thus, we have that $\E[v_{it} \mid i \text{ receives } t] \geq \E[v_{it}] + (p - p^n)\frac{c}{2}$. From law of total expectation, we know that $\E[v_{it}] = \frac{1}{n}\E[v_{it} \mid i \text{ receives } t] + \frac{n-1}{n}\E[v_{it} \mid i \text{ does not receive } t]$. We can combine and rearrange to also show that $\E[v_{it} \mid i \text{ does not receive } t] \leq \E[v_{it}] - (p - p^n)\frac{c}{2(n-1)}$, which allows us to conclude:

$$\E[v_{it} \mid i \text{ receives } t] - \E[v_{it} \mid i \text{ does not receive } t] \geq (p - p^n)\frac{c}{2(n-1)} + (p - p^n)\frac{c}{2}$$

Setting $\Delta$ to $(p - p^n)\frac{c}{2(n-1)} + (p - p^n)\frac{c}{2}$, which is positive since $p \in (0, 1)$ and $c > 0$, completes the proof.\end{proof}



\subsection{Proof of Lemma~\ref{lem:edge-indifferent}}\label{app: missing lemma proof}

We can rewrite KKT condition~\eqref{eq: KKT 3} as: for all $i \in [n]$, for all $k \in X_i$, $p_j = \frac{v_{ik} \cdot e_i}{\sum_{k' \in \xalloc_i} v_{ik'}x_{ik'}} = \frac{v_{ik}}{r_i}$.

Then, for every agent $i$ we have $\sum_{k \in \xalloc_i} p_k x_{ik} = \sum_{k \in \xalloc_i} \frac{v_{ik} \cdot e_i}{\sum_{k' \in \xalloc_i} v_{ik'}x_{ik'}}x_{ik} = e_i$

For the ``only if'' direction, we know for all items $k \in \xalloc_j$, $v_{ik} = r_i p_k$. We can substitute this into our previous equation to get:
$$v_i(\xalloc_i) = \sum_{k \in \xalloc_i} v_{ik} x_{ik} = \sum_{k \in \xalloc_i} r_i p_k x_{ik} = r_i e_i = r_i e_j = \sum_{k \in \xalloc_j} r_i p_k x_{jk} = \sum_{k \in \xalloc_j} v_{ik} x_{jk} = v_i(\xalloc_j).$$

For the ``if'' direction, assume that there is an item $k^* \in \xalloc_j$ such that $\frac{v_{i k^*}}{p_{k^*}} \neq r_i$. KKT condition~\eqref{eq: KKT 3} implies that $k^*$'s bang-per-buck is at most $r_i$, so we must have $v_{ik^*} < r_ip_{k^*}$. Combined, we get:
\[
v_i(X_i) = \sum_{k \in X_i} v_{ik} x_{ik} = \sum_{k \in X_i} r_i p_k x_{ik} = r_i e_i = r_i e_j = \sum_{k \in X_j} r_i p_k x_{jk} > \sum_{k \in X_j} v_{ik} x_{jk} = v_i(X_j). \qed
\]

\subsection{Proof of Lemma~\ref{lem:optimal-transfers}}\label{app: lemma optimal transfers}

    $(\bx, \bp, \be)$ must satisfy the KKT conditions. $\bp$ does not change and $\bx'$ is feasible by definition; it remains to show that $(\bx', \bp, \be')$ satisfies the KKT conditions. 
    Condition~\eqref{eq: KKT 1} is satisfied since $\sum_{i \in [n]} \Delta_{ik} = 0$ for all items $k$. 
Furthermore, notice that  $\Delta_{ik} \neq 0$ implies that $\frac{v_{ik}}{p_k} = r_i$: if $\Delta_{ik} > 0$ this fact is implied by the definition of an optimal transfer, while if $\Delta_{ik} < 0$, for $\bx + \Delta$ to be feasible, it must be that $x_{ik} > 0$, so $\frac{v_{ik}}{p_k} = r_i$ is implied by condition~\eqref{eq: KKT 3}.
Thus:   
\[ \frac{\sum_{k=1}^m v_{ik}x'_{ik}}{e'_i} = \frac{\sum_{k=1}^m v_{ik}x_{ik} + \sum_{k=1}^m v_{ik}\Delta_{ik}}{e_i + \delta_i} = \frac{\sum_{k=1}^m v_{ik}x_{ik} + \sum_{k=1}^m r_i p_k \Delta_{ik}}{e_i + \delta_i}  = \frac{\sum_{k=1}^m v_{ik}x_{ik} + r_i \delta_i}{e_i + \delta_i},  \] 
which is equal to $\frac{\sum_{k=1}^m v_{ik}x_{ik}}{e_i},$ since $r_i = v_i(\xalloc_i)/e_i$. Therefore, the RHS of condition~\eqref{eq: KKT 2} does not change (the LHS of course didn't change since it only has values and prices), so condition~\eqref{eq: KKT 2} is still satisfied. When $\bx_{ik} > 0$, condition~\eqref{eq: KKT 3} is satisfied by similar reasoning. Finally, it is possible that $\bx_{ik} = 0$ but $\bx'_{ik} > 0$. In this case, we know that $\Delta_{ik} > 0$, and therefore $\frac{v_{ik}}{p_k} = r_i = \frac{\sum_{k=1}^m v_{ik}x_{ik}}{e_i},$ by the definition of an optimal transfer and the definition of $r_j$. Thus, condition~\eqref{eq: KKT 3} is satisfied.
\qed

\subsection{Choosing $b$}\label{app:choosing b}

\paragraph{Operation $1$.}
We first discuss how to choose $b$ in Operation $1$ from Lemma~\ref{lem:const-util-shifts}.

For any pair $(i, j)$, we are concerned about $v_i(A_i) - v_i(A_j)$. Operation $1$ guarantees that $v_i(X_i) = v_i(X'_i)$, so $v_i(A_i) - v_i(A_j)$ can change only when $j \in K$. Define $c = \max_{i \in \agents,k \in [m]} \frac{v_{ik}}{p_k}$: the maximum bang-per-buck for any agent. Then, we choose $b$ such that:

$$b < \min \left\{ \frac{v_i(A_i) - v_i(A_j)}{c} : i \in \agents, j \in K, v_i(A_i) - v_i(A_j) > 0 \right\}$$

Observe that a budget constraint of $b$ ensures that $v_i(A_j)$ will change by at most $b \cdot c < v_i(A_i) - v_i(A_j)$. 

\paragraph{Lemma~\ref{lem:budget-change-shifts}.}
Next, we discuss how to choose $b$ in the procedure described in Lemma~\ref{lem:budget-change-shifts}.

We use a similar approach. Unlike before, $v_i(A_i) - v_i(A_j)$ can change when either $i \in S$ or $j \in S$, as $v_i(A_i)$ changes for some $i \in S$. Setting

$$b < \min \left\{ \frac{v_i(A_i) - v_i(A_j)}{2c} : i, j \in \agents, i \in S \vee j \in S, v_i(A_i) - v_i(A_j) > 0 \right\}$$

as the budget constraint is sufficient as it ensures that $v_i(A_i)$ and $v_i(A_j)$ can each change by at most $\frac{v_i(A_i) - v_i(A_j)}{2}$.

\subsection{Proof of Theorem~\ref{thm:eff-envy-lb-det}}\label{app: lower bound proof continued}
\begin{proof}[Proof of Theorem~\ref{thm:eff-envy-lb-det} (continued) ]
Suppose that there is an allocation algorithm which guarantees that for all $T$, no matter the instance the adversary selects, $\E[ \ENVY(\alloc^T) ] \leq f(T)$ for some $f(T) \in o(T)$, where the expectation is over the randomness used by the algorithm. We first describe a family of $n$ instances. For $i=1$ to $n$, instance $I_i$'s first $\frac{T}{n}i$ items follow $I$, the instance of the adaptive adversary described above, and the remaining items have no value. Let $g(T)$ be a function such that $g(T)\cdot f(T) \in o(T)$ and $g(T) \in \omega(1)$\footnote{One can think of $g(T) = T^\delta$, for some small $\delta > 0$ that depends on $f(T)$.}.

Again, we bound the number of items the algorithm can allocate to each agent in each segment; this time our bounds will be looser and probabalistic. Consider the behavior of the algorithm when faced with instance $I_1$. At the end of the first segment, i.e. for items $1$ through $\frac{T}{n}$, if the algorithm allocates to some agent $k$ at least $\frac{T}{n^2} + x_1$, for $x_1>0$, with probability at least $\frac{1}{g(T)}$, then the expected envy of some agent $\hat{k}$ (the one who received fewer than $\frac{T}{n^2}$ items) is at least $\varepsilon \cdot x_1$, that is $\E[ \ENVY(\alloc^T) ] \geq \frac{1}{g(T)} \cdot \varepsilon x_1$. Since $\E[ \ENVY(\alloc^T) ] \leq f(T)$, we have that $x_1 \leq \frac{g(T) f(T)}{\varepsilon}$. In other words, with probability $1 - \frac{1}{g(T)}$, each agent receives a number of items within $[\frac{T}{n^2} - x_1, \frac{T}{n^2} + x_1]$. Note that because the first $\frac{T}{n}$ items are identical for all instances, this holds for instances $I_2, \ldots, I_n$. 

Similarly, we consider the behavior of the algorithm faced with $I_2$ and look at the end of the second segment. Suppose that conditioned on the algorithm having allocated each agent a number of items within $[\frac{T}{n^2} - x_1, \frac{T}{n^2} + x_1]$ after the first segment, with (conditional) probability at least $\frac{1}{g(T) - 1}$, some agent $k$ receives at least $\frac{T}{n^2} + x_2$ items from the second segment. This translates to an unconditioned probability of occuring of at least $\frac{1}{g(T)}$ and for similar reasons as before, we must have that $x_2 \leq \frac{g(T)f(T)}{\varepsilon} + \frac{2x_1}{\varepsilon} \leq \frac{g(T) f(T)}{\varepsilon} \left( 1 + \frac{2}{\varepsilon} \right)$. Together, we know that with probability $1 - \left(\frac{1}{g(T)} + \frac{1}{g(T) - 1}\left(1 - \frac{1}{g(T)}\right)\right)= 1 - \frac{2}{g(T)}$, each agent receives a number of items within $[\frac{T}{n^2} - x_2, \frac{T}{n^2} + x_2]$. We continue this for larger $i$, for $x_i = \frac{g(T) f(T)}{\varepsilon} \left( 1 + \frac{2}{\varepsilon} \right)^{i-1}$. Finally, we analyze efficiency of the algorithm for instance $I_n$. Each agent receives utility at most $(1 + (n - 1)\varepsilon))(\frac{T}{n^2} + x_n) + \frac{n}{g(T)}T$, where the additional $\frac{n}{g(T)}T$ term accounts for the worst case allocation assuming a deviation. 

$$\frac{u_i}{1/n + \varepsilon} < \frac{T}{n} \cdot \frac{1 + (n - 1)\varepsilon}{1 + \varepsilon n} \cdot \left(1 + \frac{n^2}{\varepsilon} \frac{f(T)g(T)}{T}(1 + \frac{2}{\varepsilon})^{n - 1} + n^3g(T)\right).$$

By picking $T$ large enough\footnote{Ensuring $\frac{g(T)f(T)}{T} < \frac{1}{2}\frac{\epsilon}{1+(n-1)\varepsilon} \cdot \frac{\varepsilon}{n^2 (1 + 2/ \varepsilon)^{n - 1} }$ and $g(T) < \frac{\varepsilon}{2n^3}$ will suffice.} the adversary can make sure that the expected utility for every agent $i$ is upper bounded by $\frac{T}{n} \cdot ( 1/n + \varepsilon )$ . On the other hand, the single allocation that gives items from $\frac{T}{n}(i - 1)$ to $\frac{T}{n}i$ to agent $i$ yields utility $u_i = \frac{T}{n}$.
\end{proof}

\section{Strong Envy-Free and Pareto Optimality are not Achievable}
\label{app:sef-po-nonident-cex}
Here, we give an instance where strong envy-freeness and Pareto optimality are not achievable even though agents valuations are not identical up to a multiplicative factor. 
\renewcommand{\arraystretch}{1}

\setlength\tabcolsep{2pt}
\begin{table*}[ht]
\begin{center}
    \[\begin{array}{ | c | c c c | }
    \hline
     \text{Item $t$} & g_1 & g_2 & g_3 \\ [0.5ex] 
     \hline
     \text{Value of $g_t$ to Agent 1} & 1 & 1 & 1 \\
     \text{Value of $g_t$ to Agent 2} & 0.5 & 1 & 1\\
     \text{Value of $g_t$ to Agent 3} & 0.25 & 1 & 1 \\
     \hline
    \end{array}\]
    \caption{Instance with three agents}
    \label{table:efpareto-incompatible}
\end{center}
\end{table*}

In Table~\ref{table:efpareto-incompatible}, none of the three agents are identical. However, we claim that in any envy-free and Pareto optimal allocation, agents $2$ and $3$ will be indifferent towards each other's allocation. Intuitively, the problem is that agents $2$ and $3$ have identical valuations over the items they could possibly receive in an envy-free and Pareto efficient allocation, items $2$ and $3$. 

Let $X$ be any Pareto efficient and envy-free allocation. Since it is envy-free and agent $1$ is indifferent between the items, we must $X_{11} + X_{12} + X_{13} \geq 1$. 

Next, we show that if $X$ is Pareto efficient, then $X_{11} = 1$. If $X_{11} < 1$, then either $X_{12} > 0$ or $X_{13} > 0$. In addition, either $X_{21} > 0$ or $X_{31} > 0$. Without loss of generality, suppose $X_{12} > 0$ and $X_{21} > 0$. Then letting $c = \min(X_{21}, X_{12})$ we give the following allocation $X'$ which Pareto dominates $X$ so $X$ is not Pareto efficient.

\begin{equation*}
    X' = X + \begin{bmatrix} 
        c & -c & 0\\
        -c & c & 0\\
        0 & 0 & 0 
        \end{bmatrix}
\end{equation*}

Therefore, we know $X_{11} = 1$ and as a result, $X_{21} = X_{31} = 0$. Finally, because agents $2$ and $3$ have identical valuations for items $2$ and $3$, we have $X_{22} + X_{23} \geq 1$ and $X_{32} + X_{33} \geq 1$. Together, this implies that $X_{22} + X_{23} = X_{32} + X_{33} = 1$ so agents $2$ and $3$ will be indifferent towards each other's allocations.

\section{Deriving KKT Conditions}
\label{app:kkt-derive}
Here, we derive the KKT conditions in order to show that they are both necessary and sufficient conditions for optimal solutions to the Eisenberg-Gale convex program.

We first introduce dual variables $\bp, \bk$ for the first and second inequality constraints. From stationarity, we have:
$$\grad_{\bx}\sum_{i = 1}^n e_i \log \sum_{j = 1}^m v_{ij}x_{ij} = \sum_{j = 1}^m p_j \grad_{\bx} \sum_{i = 1}^n x_{ij} - 1 + \sum_{i = 1}^n\sum_{j = 1}^m - k_{ij} \grad_{\bx} x_{ij}$$

For each $i, j$, we can take the gradient with respect to $x_{ij}$ to get:
\begin{align}
    e_i\frac{v_{ij}}{\sum_{j = 1}^m v_{ij}x_{ij}} &= p_j - k_{ij} \nonumber \\
    \frac{v_{ij}}{p_j} &= \frac{\sum_{j = 1}^m v_{ij}x_{ij}}{e_i}\left(1 - \frac{k_{ij}}{p_j}\right) \label{ln:stationarity}
\end{align}

The primal and dual feasibility conditions tell us that:
$$-x_{ij} \leq 0, \sum_{i=1}^n x_{ij} - 1 \leq 0, p_j \geq 0, k_{ij} \geq 0$$

Finally, complementary slackness tells us that:
\begin{align}
    x_{ij} > 0 \implies k_{ij} = 0 &\text{ and } k_{ij} > 0 \implies x_{ij} = 0 \label{ln:comp-slack}\\
    p_j > 0 \implies \sum_{i=1}^n x_{ij} = 1 &\text{ and } \sum_{i=1}^n x_{ij} < 1 \implies p_j = 0 \label{ln:comp-slack-2}
\end{align}

KKT condition~\ref{eq: KKT 1} follows from Equation~\ref{ln:comp-slack-2}.
Meanwhile, we show that KKT conditions~\ref{eq: KKT 2} and~\ref{eq: KKT 3} are equivalent to the stationarity condition plus the first two complementary slackness conditions. 

\begin{proposition}
    For any $\bx, \bp$ where $p_j > 0$, KKT conditions~\ref{eq: KKT 2} and~\ref{eq: KKT 3} hold if and only if there exists $\bk$ such that Equations~\ref{ln:stationarity} and~\ref{ln:comp-slack} hold. 
\end{proposition}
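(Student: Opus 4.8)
The plan is to prove both directions by direct manipulation of Equation~\eqref{ln:stationarity}, treating $\bk$ as a slack that we either solve for (to go from KKT to stationarity) or eliminate (to go from stationarity to KKT). Throughout, fix $i,j$ and recall we are assuming $p_j>0$; write $u_i := \sum_{k=1}^m v_{ik}x_{ik}$ for brevity, so that $r_i = u_i/e_i$ in the notation of the paper.

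First I would handle the ``if'' direction: suppose there is a $\bk\ge 0$ with Equations~\eqref{ln:stationarity} and~\eqref{ln:comp-slack}. Since $k_{ij}\ge 0$ and $p_j>0$, the factor $(1-k_{ij}/p_j)\le 1$, so Equation~\eqref{ln:stationarity} immediately gives $\frac{v_{ij}}{p_j}\le \frac{u_i}{e_i}$, which is exactly KKT~\eqref{eq: KKT 2}. For KKT~\eqref{eq: KKT 3}, assume $x_{ij}>0$; then by Equation~\eqref{ln:comp-slack} we get $k_{ij}=0$, so Equation~\eqref{ln:stationarity} collapses to $\frac{v_{ij}}{p_j}=\frac{u_i}{e_i}$. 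Combined with the inequality KKT~\eqref{eq: KKT 2} applied to every other item $k$, this shows $\frac{v_{ij}}{p_j}=\frac{u_i}{e_i}=\max_{k}\frac{v_{ik}}{p_k}$, which is the remaining content of KKT~\eqref{eq: KKT 3}. (Here I would note that the middle equality $\frac{u_i}{e_i}=\max_k \frac{v_{ik}}{p_k}$ is slightly subtle and needs the standard observation that $u_i = \sum_k v_{ik}x_{ik} = \sum_{k: x_{ik}>0}(v_{ik}/p_k)p_k x_{ik}$, so $u_i/e_i$ is a $p_kx_{ik}/e_i$-weighted average of bang-per-bucks of items actually allocated to $i$; since each such item has bang-per-buck equal to the common value just derived, and every item has bang-per-buck at most this value by KKT~\eqref{eq: KKT 2}, the average equals the max.)

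Next I would handle the ``only if'' direction: suppose KKT~\eqref{eq: KKT 2} and~\eqref{eq: KKT 3} hold for $\bx,\bp$ with $p_j>0$. Define $k_{ij} := p_j - e_i v_{ij}/u_i$, i.e. the unique value making Equation~\eqref{ln:stationarity} hold — equivalently $k_{ij}=p_j(1 - \frac{v_{ij}/p_j}{u_i/e_i})$. I must check $k_{ij}\ge 0$ and the complementary-slackness implications in~\eqref{ln:comp-slack}. Nonnegativity is exactly KKT~\eqref{eq: KKT 2} (which says $\frac{v_{ij}/p_j}{u_i/e_i}\le 1$), using $p_j>0$. For $x_{ij}>0 \implies k_{ij}=0$: KKT~\eqref{eq: KKT 3} gives $\frac{v_{ij}}{p_j}=\frac{u_i}{e_i}$, hence $k_{ij}=p_j(1-1)=0$. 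For $k_{ij}>0 \implies x_{ij}=0$: contrapositively, if $x_{ij}>0$ then by the previous sentence $k_{ij}=0$. So Equation~\eqref{ln:comp-slack} holds, and by construction Equation~\eqref{ln:stationarity} holds. That completes the equivalence.

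**Expected obstacle.** The only place requiring care is the $\max_k \frac{v_{ik}}{p_k}$ clause of KKT~\eqref{eq: KKT 3} in the ``if'' direction: stationarity plus complementary slackness alone give the local conditions $\frac{v_{ij}}{p_j}=\frac{u_i}{e_i}$ at allocated items and $\frac{v_{ij}}{p_j}\le\frac{u_i}{e_i}$ everywhere, but one still has to argue these are consistent — i.e. that $\frac{u_i}{e_i}$ really is the maximum bang-per-buck and is attained — via the weighted-average argument sketched above. Everything else is a one-line algebraic rearrangement of Equation~\eqref{ln:stationarity}, so the proposition is essentially a bookkeeping statement; I would keep the write-up short and front-load the weighted-average remark so the $\max$ clause does not look like it is being asserted for free.
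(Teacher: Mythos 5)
Your proof is correct and takes essentially the same route as the paper: in one direction you solve Equation~\eqref{ln:stationarity} for $k_{ij}$ (your single formula $k_{ij}=p_j\bigl(1-\tfrac{v_{ij}/p_j}{u_i/e_i}\bigr)$ is just the paper's two cases merged) and use KKT~\eqref{eq: KKT 2}/\eqref{eq: KKT 3} for nonnegativity and slackness, and in the other you use $k_{ij}\ge 0$ and $p_j>0$ exactly as the paper does. The only divergence is your weighted-average digression for the $\max_k v_{ik}/p_k$ clause, which is unnecessary (and as stated quietly presupposes $\sum_k p_k x_{ik}=e_i$): once you have $v_{ij}/p_j=u_i/e_i$ at the allocated item $j$ and $v_{ik}/p_k\le u_i/e_i$ for all $k$ from KKT~\eqref{eq: KKT 2}, the maximum is attained at $j$ and equals $u_i/e_i$ immediately.
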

\begin{proof}
    Consider any $\bx, \bp$.

    We first show the forwards direction. We assume that KKT conditions~\ref{eq: KKT 2} and~\ref{eq: KKT 3} hold and give a $\bk$ such that Equations~\ref{ln:stationarity} and~\ref{ln:comp-slack} hold. Take any $i \in [n], j \in [m]$. We set the value of $k_{ij}$ depending on whether $x_{ij} > 0$. Suppose that $x_{ij} > 0$. Then observe that setting $k_{ij} = 0$ will satisfy both the stationarity and complementary slackness conditions. Otherwise, $x_{ij} = 0$. In this case, the slackness conditions trivially hold and we just need to show that there exists a $k_{ij} \geq 0$ such that Equation~\ref{ln:stationarity} holds. From KKT condition~\ref{eq: KKT 2}, we have:
    $$\frac{v_{ij}}{p_j} / \frac{\sum_{j = 1}^m v_{ij}x_{ij}}{e_i} \leq 1$$

    Letting $c = \frac{v_{ij}}{p_j} / \frac{\sum_{j = 1}^m v_{ij}x_{ij}}{e_i}$, we solve for $k_{ij}$ which gives $k_{ij} = p_j (1 - c)$, which is non-negative.

    Next, we show the reverse direction. Assume that there exists $\bk$ such that Equations~\ref{ln:stationarity} and~\ref{ln:comp-slack} hold. For any $i \in [n], j \in [m]$, because we have that $p_j > 0$ and $k_{ij} \geq 0$, this implies that $1 - \frac{k_{ij}}{p_j} \leq 1$. Combine this with Equation~\ref{ln:stationarity} and we get exactly KKT condition~\ref{eq: KKT 2}. 

    Next, we show that KKT condition~\ref{eq: KKT 3} holds. Assume that $x_{ij} > 0$. Equation~\ref{ln:comp-slack} tells us that $k_{ij} = 0$. Therefore, $1 - \frac{k_{ij}}{p_j} = 1$ and applying Equation~\ref{ln:stationarity} tells us that $\frac{v_{ij}}{p_j} = \frac{\sum_{j = 1}^m v_{ij}x_{ij}}{e_i}$. 
\end{proof}

\section{Improving Guarantee for Independent Agents}
\label{app:independent agents strong ef}

When the adversary is limited to distributions where agents' valuations are independent, under the mild assumption that the distributions $D_i$ are not point masses, we give an ex-post Pareto efficient allocation algorithm that guarantees envy-freeness with high probability. That is, it is possible to improve the ``or EF1'' part of the guarantee of Theorem~\ref{thm:POCR} when agents are independent and the distributions are not point masses.

Our approach builds on our structural result  (Theorem~\ref{thm: structural}) obtained for correlated agents, which transforms the online problem into one of finding a fractional allocation for divisible items. 
For independent agents, we can find a fractional allocation that is strongly envy-free and Pareto optimal. Under this stronger guarantee, the proof of Theorem~\ref{thm:POCR} gives that when the graph is strongly envy-free, using Algorithm~\ref{alg:POCR} guarantees envy-freeness with high probability.

\subsection{Strong envy-freeness example}
At first, it seems plausible that with independent agents and distributions that are not point masses, any solution to the E-G convex program will be strongly envy-free and Pareto optimal. Unfortunately, this is not the case. In this section, we give an example where there exists a fractional allocation that is both Pareto optimal and strongly envy-free. However, we show that this solution can be missed when choosing an arbitrary solution to the E-G convex program, or even by using the algorithm for finding CISEF and Pareto optimal solutions. We show how the E-G solution can be transformed into a strongly envy-free and Pareto optimal allocation, highlighting the key ideas behind our algorithm.

Consider the following fractional allocation problem for three agents. The agents' value distributions are as follows: For both $X \sim D_1$ and $X \sim D_2$, we have $\Pr[X = 0] = \frac{1}{10}, \Pr[X = 1] = \frac{9}{10}$. For $X \sim D_3$, we have $\Pr[X = 1] = \frac{16}{17}, \Pr[X = 2] = \frac{1}{17}$. 

Recall that in the fractional allocation problem, we treat each item type $\gamma_j$ as a divisible item. Each agent $i$ has a value function $v'_i$, where $v'_i(\gamma_j) = v_i(\gamma_j) \cdot f_D(\gamma_j)$. We now present a solution $\bx, \bp$ to the E-G convex program with identical budgets of $1$, shown in Table~\ref{table:independent-agents-nonstrong}.

\setlength\tabcolsep{2pt}
\begin{table*}[ht]
\begin{center}
    \[\begin{array}{ | c | c | c | c | c | c | c | c | c | c | c | c | }
    \hline
     \text{Item type $\gamma_j$} & v_{1t} & v_{2t} & v_{3t} & f_D(\gamma_j) & v'_{1t} & v'_{2t} & v'_{3t} & x_{1t} & x_{2t} & x_{3t} & p_t\\ [0.5ex] 
     \hline
      \gamma_1 & 0 & 0 & 1 & \frac{16}{1700} & 0 & 0 & \frac{16}{1700} & 0 & 0 & 1 & \frac{16}{600}\\
      \gamma_2 & 0 & 0 & 2 & \frac{1}{1700} & 0 & 0 & \frac{2}{1700} & 0 & 0 & 1 & \frac{2}{600} \\
      \gamma_3 & 0 & 1 & 1 & \frac{144}{1700} & 0 & \frac{144}{1700} & \frac{144}{1700} & 0 & 0 & 1 & \frac{144}{600}\\
      \gamma_4 & 0 & 1 & 2 & \frac{9}{1700} & 0 & \frac{9}{1700} & \frac{18}{1700} & 0 & 0 & 1 & \frac{18}{600} \\
      \gamma_5 & 1 & 0 & 1 & \frac{144}{1700} & \frac{144}{1700} & 0 & \frac{144}{1700} & 0 & 0 & 1 & \frac{144}{600}\\
      \gamma_6 & 1 & 0 & 2 & \frac{9}{1700} & \frac{9}{1700} & 0 & \frac{18}{1700} & 0 & 0 & 1 & \frac{18}{600} \\
      \gamma_7 & 1 & 1 & 1 & \frac{1296}{1700} & \frac{1296}{1700} &  \frac{1296}{1700} &  \frac{1296}{1700} & \frac{75}{162} & \frac{75}{162} & \frac{12}{162} & \frac{1296}{600}\\
      \gamma_8 & 1 & 1 & 2 & \frac{81}{1700} & \frac{81}{1700} & \frac{81}{1700} & \frac{162}{1700} & 0 & 0 & 1 & \frac{162}{600} \\
    \hline
    \end{array}\]
    \caption{Fractional allocation problem. All items with positive value to an agent are also MBB items for that agent. The MBB ratio for agents $1$ and $2$ is $\frac{3}{17}$ and for agent $3$ is $\frac{6}{17}$.}
    \label{table:independent-agents-nonstrong}
\end{center}
\end{table*}

$\bx, \bp$ are constructed such that each agent spends their entire budget of $1$ on their items. $\bp$ has the property that all items are bang-per-buck items for agent $3$. Agent $3$ has maximum bang-per-buck items $\gamma_5, \gamma_7$ while agent $2$ has MBB items $\gamma_3, \gamma_7$. The allocation $\bx$ satisfies the maximum bang-per-buck constraints. Therefore, $\bx, \bp$ is a solution.

The envy-graph for $\bx$ still contains four edges: $(1, 2), (2, 1), (3, 1), (3, 2)$. Now consider what happens if we apply the algorithm for finding a CISEF and Pareto optimal solution. Procedure $2$, as described in Lemma~\ref{lem:budget-change-shifts}, increases the budget of agent $3$ and gives agent $3$ more of $\gamma_7$. Concretely, we might increase the budget of agent $3$ by $\frac{16}{600}$ and decrease the budget of agents $1$ and $2$ by $\frac{8}{600}$, resulting in a new solution $(\bx' = \xalloc', \bp')$ where $\bp' = \bp$ and the only differences between $\bx'$ and $\bx$ are that $\bx'_{17} = \bx'_{27} = \frac{74}{162}$ and $\bx'_{37} = \frac{14}{162}$. The resulting envy-graph is now CISEF but not strongly envy-free, as it contains two edges: $(1, 2), (2, 1)$.

In this example, it is fairly easy to see how to remove the remaining two envy edges and maintain Pareto optimality. Suppose we transfer $\Delta$ worth of $\gamma_7$ from agent $1$ to $3$ and $\Delta$ worth of $\gamma_5$ from agent $3$ to $1$. We do a similar transfer for agents $2$ and $3$ with items $\gamma_7$ and $\gamma_3$. As agent $1$ does not value $\gamma_3$ and agent $2$ does not value $\gamma_5$, performing these two changes will remove the two envy edges. If we choose $\Delta$ small enough, no new envy edges are created.

\subsection{Strong Envy-freeness and Pareto Optimality}

The above approach can be generalized into an algorithm for finding a strongly envy-free and Pareto optimal allocation. For each distribution $D_i$, let $S_i$ be the support of $D_i$. The independent agent adversary can be simulated by the correlated agent adversary by expanding out the individual distributions into a distribution for item types $D$ with support $G_D$ where $G_D = S_1 \times S_2 \times \cdots \times S_n$. In this section, we will use two notations, $\gamma_j$ and $(a_1, \ldots, a_n)$, interchangeably to refer to an item type. As described in the Pareto optimal clique rounding algorithm, we define a new valuation function $v'_i$ for each agent $i$: $v'_i((a_1, \ldots, a_n)) = v_i((a_1, \ldots, a_n))f_D((a_1, \ldots, a_n)) = a_i \cdot f_D((a_1, \ldots, a_n))$. 

\begin{theorem}
\label{thm:ind-sef}
Given an instance with items $G_D = S_1 \times S_2 \times \cdots \times S_n$, where for all $i$, $\abs{S_i} \geq 2$, and $n$ agents with valuation functions $v'_i((a_1, \ldots, a_n)) = a_i \cdot f_D((a_1, \ldots, a_n))$, there always exists an allocation that is strongly envy-free and Pareto efficient.
\end{theorem}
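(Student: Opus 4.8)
The plan is to start from the Pareto efficient and CISEF allocation $(\bx=\xalloc,\bp)$ with budgets $\be$ produced by Theorem~\ref{thm: structural} (equivalently Theorem~\ref{thm:main-disjoint-cliques}), whose indifference graph $I(\xalloc)$ is a disjoint union of cliques $C_1,\dots,C_s$ in which agents of the same clique hold identical allocations and have valuations that are proportional over the items they are allocated. If every clique is a singleton the allocation is already strongly envy-free, so the task is to ``split'' each clique of size at least two by perturbing the allocation until no indifference edge remains, while keeping the allocation Pareto efficient.

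The driver for splitting is the following observation, which is exactly where the hypothesis $\abs{S_i}\ge 2$ enters: no two agents $j,k$ can have proportional valuations over all of $G_D$. Indeed, if $v'_j=\lambda v'_k$ for some $\lambda>0$, then for every type $(a_1,\dots,a_n)$ we would need $a_j f_D(a)=\lambda a_k f_D(a)$, hence $a_j=\lambda a_k$; fixing the coordinates other than $j$ and letting $a_j$ range over $S_j$ (which has at least two elements while $a_k$ stays fixed) gives a contradiction. Consequently, inside any clique $C$ of size $\ge 2$ and for any pair $j,k\in C$, the CISEF proportionality can hold only over the strict subset of items currently allocated to $C$, and there must be an item type $\ell$ witnessing $v_{j\ell}\, r_k \neq v_{k\ell}\, r_j$.

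Given such witnesses, the separation step for a pair $j,k\in C$ is a zero-budget optimal transfer (in the sense of Lemma~\ref{lem:optimal-transfers}) that routes a small $\delta$-fraction of an item that $j$ values more than the within-clique proportional rate into $j$'s bundle, and symmetrically an item favoring $k$ into $k$'s bundle, returning to the agents they were taken from an equal-price amount of items those agents already hold at maximum bang-per-buck, so that every agent's total value is unchanged. As in the worked example earlier in this appendix, this may involve routing the swap through an agent in another clique, i.e. a short chain of maximum-bang-per-buck moves; the equal-flow construction of Lemma~\ref{lem:budget-change-shifts} can be reused to keep it clean. Since totals are unchanged and every moved unit stays on maximum-bang-per-buck items, the KKT conditions still hold and Pareto efficiency is preserved; meanwhile $X_j$ and $X_k$ become distinct with $v_j(X_j)>v_j(X_k)$ and $v_k(X_k)>v_k(X_j)$, so both indifference edges disappear. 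Choosing $\delta$ small enough (exactly as in Appendix~\ref{app:choosing b}) prevents any new indifference or envy edge from appearing. Iterating over all pairs inside all cliques, each step removes at least one edge and adds none, so the process terminates with an empty indifference graph, yielding an allocation that is strongly envy-free and still Pareto efficient.

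The step I expect to be the main obstacle is guaranteeing that the witnessing item $\ell$ can actually be moved \emph{while preserving Pareto efficiency}: over items that are simultaneously maximum-bang-per-buck for both $j$ and $k$ their valuations are automatically proportional ($v_{j\ell}=r_j p_\ell$ and $v_{k\ell}=r_k p_\ell$), so any witness $\ell$ must lie outside at least one of the two bang-per-buck sets, and a direct transfer of $\ell$ would violate KKT condition~\eqref{eq: KKT 3}. Handling this --- either by first adjusting the prices $\bp$ and budgets $\be$ so that $\ell$ becomes a maximum-bang-per-buck item for the agent who needs it (mirroring the price/budget surgery of Section~\ref{sec: CISEF}), or by a sharper argument exploiting the product structure of $G_D$ to locate a witness that is already at maximum bang-per-buck for one of $j,k$ --- is the delicate part of the proof, and the place where the $\abs{S_i}\ge 2$ product assumption must be invoked more than once.
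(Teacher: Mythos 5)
Your proposal correctly sets up the right starting point (the CISEF and Pareto efficient allocation from Theorem~\ref{thm: structural}) and correctly identifies why $\abs{S_i}\ge 2$ rules out globally proportional valuations, but it stops exactly where the real work begins, and you say so yourself: any witness item $\ell$ with $v_{j\ell}r_k\neq v_{k\ell}r_j$ necessarily lies outside the maximum bang-per-buck set of at least one of $j,k$ (on common MBB items proportionality is automatic via $v_{j\ell}=r_jp_\ell$, $v_{k\ell}=r_kp_\ell$), so transferring it directly violates KKT condition~\eqref{eq: KKT 3} and you lose the certificate of Pareto efficiency. Neither of the two escape routes you sketch (price/budget surgery, or ``a sharper argument exploiting the product structure'') is carried out, and the first is not obviously viable at all --- changing $\bp$ to make a non-MBB item become MBB for one agent generically disturbs the MBB structure of every other agent, and nothing in Section~\ref{sec: CISEF} performs price changes of this kind. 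So the proposal has a genuine gap at its central step: it asserts that some small MBB-respecting swap separates $j$ from $k$, but never exhibits items for which such a swap exists.

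The paper closes this gap with a concrete construction that uses the product structure twice, and in a different way than your ``no global proportionality'' observation. For a clique $K$ it looks at the types $H=(h_1,\dots,h_k)$ (every clique member at its maximum value) and $H^i$ (agent $i$ at $h_i$, all other clique members at their minimum $l_{i'}$, which exist precisely because $\abs{S_{i'}}\ge 2$ gives $h_{i'}>l_{i'}$). Lemma~\ref{lem:high-value-item} shows the clique's bundle contains some $\gamma_j\in G_H$: if $\gamma_j\in \xalloc_K$ but $(H,R(\gamma_j))$ were held by an outside agent, comparing bang-per-buck ratios for that outside agent (whose value for the two types is identical) and for a clique agent yields contradictory inequalities on $f_D/p$. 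Lemma~\ref{lem:high-low-item} shows no item of type $H^i$ is in $\xalloc_K$, so $(H^i,R(\gamma_j))$ is owned by some $i'\notin K$; the same two MBB inequalities now force $f_D(\gamma_j)/p_j=f_D(\gamma_{j'})/p_{j'}$, which makes $(H^i,R(\gamma_j))$ an MBB item for agent $i$ (and $\gamma_j$ MBB for $i'$), while it is strictly non-MBB for every other clique member $i^*$ since $l_{i^*}<h_{i^*}$. Swapping a small amount of $\gamma_j$ against $(H^i,R(\gamma_j))$ between $i$ and $i'$ is therefore a legal optimal transfer (Lemma~\ref{lem:optimal-transfers}) that kills every indifference edge into $i$ from the rest of the clique, with the budget $b$ chosen as in Appendix~\ref{app:choosing b} and with the extra re-balancing step when $i'$ itself sits in a nontrivial clique. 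This explicit identification of which items can be moved --- not merely that agents' valuations differ somewhere --- is the content of the theorem, and it is missing from your argument.
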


Applying Theorem~\ref{thm: structural}, we start with a solution $(\bx = \xalloc, \bp)$ to budgets $\be$ where $\xalloc$ is CISEF and Pareto optimal. Now suppose in $I(X)$ there exists a clique $K$ with size $k$. Without loss of generality, suppose that these are the first $k$ agents. For agents $i \in K$, let $h_i = \max S_i$ and $l_i = \min S_i$. These are agent $i$'s highest and lowest values for items prior to scaling by the probability of the item. Since $\abs{S_i} \geq 2$, $h_i > l_i$.

For any $(a_1, a_2, \ldots, a_k) \in S_1 \times S_2 \ldots \times S_k$, we denote the subset of items with those values with $G_{(a_1, a_2, \ldots, a_k)} = \{(b_1, b_2, \ldots, b_n) : b_1 = a_1, b_2 = a_2, \ldots, b_k = a_k, (b_1, b_2, \ldots, b_n) \in G_D \}$. In particular, we care about the subsets of items involving the high and low values for each agent. Let $H = (h_1, \ldots, h_k)$ and for each $i \in K$, let $H^i = (l_1, \ldots, l_{i - 1}, h_i, l_{i + 1}, \ldots l_k)$. The corresponding subsets with those values are $G_H$ and $G_{H^i}$.

For a $\gamma_j = (a_1, \ldots, a_n)$, we denote to the item's value to the remaining agents as $R(\gamma_j) = (a_{k+1}, \ldots, a_n)$. In particular, we are interested in the items in $G_H$ and $G_{H^i}$ with values $R(\gamma_j)$. Define $(H^i, R(\gamma_j)) = (l_1, \ldots, l_{i - 1}, h_i, l_{i+1}, \ldots l_k, a_{k + 1}, \ldots, a_n)$ and $(H, R(\gamma_j)) = (h_1, \ldots, h_k, a_{k+1}, \ldots, a_n)$.

Now consider the allocation of the agents in $K$. From the CISEF property, they all have the same allocation, denoted as $X_K$. We first show the following two claims:
 
\begin{lemma}
\label{lem:high-value-item}
There exists an item $\gamma_j \in G_H \cap X_K$.
\end{lemma}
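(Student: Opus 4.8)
The plan is to prove the statement by contradiction, via an exchange argument that violates the Pareto efficiency guaranteed by Theorem~\ref{thm: structural}. Suppose no type of $G_H$ is allocated (even fractionally) to the clique $K=\{1,\dots,k\}$. First I would note that the clique's common allocation $X_K$ is nonempty: a feasible allocation giving every agent a $1/n$ fraction of the type $(h_1,\dots,h_n)$ exists (each $h_i>0$ since $\abs{S_i}\ge 2$), so the E-G optimum has $v'_i(\xalloc_i)>0$ for all $i$, hence $\xalloc_i=X_K\neq\emptyset$ for $i\in K$. Fix any type $\gamma^\ast=(a_1,\dots,a_k,\rho^\ast)\in X_K$, where $\rho^\ast=R(\gamma^\ast)$ records its values for the agents outside $K$. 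Since $\gamma^\ast\notin G_H$, there is an agent $i\in K$ with $a_i<h_i$. Because $G_D=S_1\times\cdots\times S_n$ is a product set, the type $\gamma_H:=(h_1,\dots,h_k,\rho^\ast)$ lies in $G_H$; by the contradiction hypothesis it is allocated entirely outside $K$, so fix an agent $j\notin K$ holding a positive fraction of $\gamma_H$.

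Next I would transfer, for a small $\epsilon>0$, an $\epsilon$-fraction of $\gamma^\ast$ from agent $i$ to agent $j$ and an $\epsilon':=\epsilon\, f_D(\gamma^\ast)/f_D(\gamma_H)$-fraction of $\gamma_H$ from agent $j$ to agent $i$. The key point is that $\gamma^\ast$ and $\gamma_H$ agree in every coordinate outside $K$, so agent $j$'s scaled values satisfy $v'_j(\gamma^\ast)/f_D(\gamma^\ast)=v'_j(\gamma_H)/f_D(\gamma_H)$ (both equal $j$'s common raw value for these two types); the choice of $\epsilon'$ therefore leaves agent $j$'s utility exactly unchanged. Agent $i$'s scaled utility changes by $\epsilon' h_i f_D(\gamma_H)-\epsilon a_i f_D(\gamma^\ast)=\epsilon\, f_D(\gamma^\ast)(h_i-a_i)>0$, while every other agent is untouched. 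For $\epsilon$ small enough all allocation entries stay in $[0,1]$, so the modified allocation Pareto dominates $\xalloc$ under $v'$, contradicting Theorem~\ref{thm: structural}. Hence some type of $G_H$ must lie in $X_K$.

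The step I expect to be the only real subtlety is ensuring the exchange is utility-neutral for the outside agent $j$: this is exactly where the product structure of the support (equivalently, the independence of the agents' value distributions) enters, and it is the reason one must compare $\gamma^\ast$ against the unique $G_H$-type sharing its $R$-profile rather than an arbitrary type of $G_H$; the rest is a routine feasibility check. It is worth noting that this argument uses only Pareto efficiency of $\xalloc$ and the product form of $G_D$ — not the finer CISEF structure of $K$ (beyond the fact that $X_K$ is its common allocation), nor the specific probabilities $f_D(\cdot)$ — which suggests the same kind of exchange argument should also handle the companion claim about the types in $G_{H^i}$.
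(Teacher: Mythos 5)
Your argument is correct, but it is a genuinely different route from the paper's. The paper stays inside the Eisenberg--Gale equilibrium framework: it uses KKT condition~\eqref{eq: KKT 3} to compare bang-per-buck ratios of an outside agent holding $(H,R(\gamma_j))$ and a clique agent holding $\gamma_j$, derives contradictory inequalities between $f_D(\gamma_j)/p_j$ and $f_D(\gamma_{j'})/p_{j'}$, and thereby proves the stronger closure property that $\gamma_j \in X_K$ implies $(H,R(\gamma_j)) \in X_K$ (nonemptiness of $X_K$ then comes from envy-freeness). You instead give a direct Pareto-improving exchange: trade a sub-maximal type $\gamma^\ast\in X_K$ held by the deficient clique agent against the lifted type $(H,R(\gamma^\ast))$ held outside $K$, calibrating the exchange rate by $f_D(\gamma^\ast)/f_D(\gamma_H)$ so the outside agent is exactly neutral (this is where the product support enters) while the clique agent strictly gains. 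Your proof needs only Pareto efficiency under $v'$, the clique's common allocation, and positivity of $f_D$ on the product support --- no prices, no MBB structure --- so it is more elementary and applies to any Pareto efficient allocation with this clique structure, not just E-G solutions; the paper's price-based derivation, on the other hand, produces exactly the ratio identities that Lemma~\ref{lem:high-low-item} and Procedure~3 reuse, so it is not wasted machinery in context. Two small points you leave implicit: that $\gamma_H$ is allocated to someone at all (immediate here since fractional allocations satisfy $\sum_i X_{ij}=1$, or by a one-line Pareto argument since clique agents value it positively), and your nonemptiness argument leans on $\xalloc$ being an E-G optimum (fine, since Theorem~\ref{thm:main-disjoint-cliques} provides exactly that; envy-freeness gives it as well). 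Your closing remark that the same exchange handles Lemma~\ref{lem:high-low-item} is plausible but not verbatim true --- that lemma is about which items the \emph{identical} clique bundles can contain, and a single swap with an outside agent does not directly expose a Pareto defect; one needs a further exchange within the clique (trading $(H^i,R(\gamma_{j'}))$ against $(H,R(\gamma_{j'}))$ between two clique members), so some additional work beyond a repeat of this argument is required.
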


\begin{lemma}
\label{lem:high-low-item}
For all items $\gamma_j \in X_K$, $\gamma_j \notin G_{H^1} \cup G_{H^2} \cup \cdots \cup G_{H^k}$.
\end{lemma}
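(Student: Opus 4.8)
The plan is to distill a few structural facts from the CISEF, Pareto optimal solution $(\bx = \xalloc, \bp)$ with budgets $\be$ that Theorem~\ref{thm: structural} supplies, and then read both lemmas off from them. Write $r_i = v'_i(X_i)/e_i$ for agent $i$'s maximum bang-per-buck ratio, and recall: (a) all prices are positive, and since every agent values some item --- else drop that agent --- $r_q > 0$ for all $q$; (b) by KKT~\eqref{eq: KKT 3} every type $\gamma \in X_K$ is a maximum bang-per-buck item for each $i \in K$, i.e.\ $v'_i(\gamma)/p_\gamma = r_i$, while by KKT~\eqref{eq: KKT 1} every positively priced type is fully allocated; (c) for $\gamma = (a_1,\dots,a_n) \in X_K$ and $i,j \in K$, since clique agents share the allocation $X_K$, applying KKT~\eqref{eq: KKT 3} to $i$ and to $j$ at $\gamma$ gives $v'_i(\gamma)/p_\gamma = r_i$ and $v'_j(\gamma)/p_\gamma = r_j$, so $v'_i(\gamma)/r_i = v'_j(\gamma)/r_j$; since $v'_i(\gamma) = a_i f_D(\gamma)$ this reads $a_i/a_j = r_i/r_j$ (this is precisely the ``up to multiplicative factors'' clause~(4) of Definition~\ref{def:cisef}). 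Throughout I assume $k = \abs{K} \ge 2$, since singleton cliques need no further processing.

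For Lemma~\ref{lem:high-value-item} I would fix any $\gamma^* = (a^*_1,\dots,a^*_n) \in X_K$ ($X_K$ is nonempty since each clique agent spends its positive budget on positive-value items, so $a^*_i > 0$ for $i \in K$). If $a^*_i = h_i$ for all $i \in K$, then $\gamma^* \in G_H \cap X_K$ and we are done; otherwise I ``boost'' $\gamma^*$ to $\gamma^H = (h_1,\dots,h_k,a^*_{k+1},\dots,a^*_n)$, a legitimate type because independence of the $D_i$ makes $G_D = S_1 \times \cdots \times S_n$, with $f_D(\gamma^H)/f_D(\gamma^*) = \prod_{i\in K}\Pr[D_i=h_i]/\Pr[D_i=a^*_i] =: \rho > 0$. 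The crux is to show $\gamma^H$ goes only to clique agents. If $x_{q,\gamma^H} > 0$ for some $q \notin K$, then $\gamma^H$ is MBB for $q$, so $p_{\gamma^H} = v'_q(\gamma^H)/r_q$; since $q$'s coordinate is unchanged, $v'_q(\gamma^H) = \rho\,v'_q(\gamma^*)$, and KKT~\eqref{eq: KKT 2} gives $v'_q(\gamma^*) \le r_q p_{\gamma^*}$, whence $p_{\gamma^H} \le \rho\,p_{\gamma^*}$. Conversely, choosing $i_0 \in K$ with $a^*_{i_0} < h_{i_0}$ and using $\gamma^* \in X_{i_0}$ (so $r_{i_0} = a^*_{i_0}f_D(\gamma^*)/p_{\gamma^*}$) together with KKT~\eqref{eq: KKT 2} for $i_0$ at $\gamma^H$ yields $p_{\gamma^H} \ge v'_{i_0}(\gamma^H)/r_{i_0} = (h_{i_0}/a^*_{i_0})\,\rho\,p_{\gamma^*} > \rho\,p_{\gamma^*}$, a contradiction. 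Hence $\gamma^H$, being positively priced, is fully allocated within $K$; as clique agents share an allocation, $x_{i,\gamma^H} = 1/k$ for every $i \in K$, so $\gamma^H \in X_K$, and $\gamma^H \in G_H$ by construction.

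For Lemma~\ref{lem:high-low-item} I would invoke Lemma~\ref{lem:high-value-item} to get $\gamma^H \in G_H \cap X_K$; applying fact (c) to $\gamma^H$ and any $i,j \in K$ pins down $r_i/r_j = h_i/h_j$. Were some $\gamma \in X_K$ in $G_{H^i}$ --- i.e.\ with $i$-coordinate $h_i$ and $j$-coordinate $l_j$ for every $j \in K\setminus\{i\}$ --- then, picking $j \in K\setminus\{i\}$ (possible as $k \ge 2$) and applying fact (c) to $\gamma$ and the pair $i,j$, I would obtain $h_i/l_j = r_i/r_j = h_i/h_j$, forcing $l_j = h_j$ and contradicting $h_j > l_j$ (true since $\abs{S_j} \ge 2$). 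So $X_K$ meets none of $G_{H^1},\dots,G_{H^k}$.

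I expect the real obstacle to be the exchange/price argument in Lemma~\ref{lem:high-value-item}: one must rule out that ``boosting'' an item to the clique agents' maximal values makes it attractive to an outsider. This works only because on $X_K$ the bang-per-buck relations hold with equality for clique agents but merely as inequalities (KKT~\eqref{eq: KKT 2}) for outsiders, and because independence of the $D_i$ renders $f_D(\gamma^H)$ a clean multiplicative rescaling $\rho\,f_D(\gamma^*)$. I note in passing that iterating the boosting argument (one coordinate at a time, re-applying fact (c) each time) in fact shows $X_K \subseteq G_H$, from which Lemma~\ref{lem:high-low-item} would be immediate; but the two-step route above is already self-contained.
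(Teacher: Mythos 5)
Your proof of Lemma~\ref{lem:high-low-item} is correct and takes essentially the same route as the paper: both invoke Lemma~\ref{lem:high-value-item} to obtain an item of $G_H$ in $X_K$, and then use the bang-per-buck/KKT conditions for clique members (who share the allocation $X_K$) together with the strict gap $l_j < h_j$ to exclude any item of $G_{H^i}$. The only difference is packaging --- the paper argues via KKT~\eqref{eq: KKT 2} that such an item cannot be a maximum bang-per-buck item for a second clique agent $i' \neq i$, whereas you assume it lies in $X_K$ and use the KKT~\eqref{eq: KKT 3} proportionality $r_i/r_j = h_i/h_j$ to force $l_j = h_j$; this is the same calculation in contradiction form (just handle the degenerate case $l_j = 0$ separately, where KKT~\eqref{eq: KKT 3} already gives $r_j = 0$, contradicting $r_j > 0$, rather than dividing by $l_j$).
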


\begin{proof}[Proof of Lemma~\ref{lem:high-value-item}]
We first show that for any $\gamma_j$ and Pareto optimal $X$, if $\gamma_j \in X_K$, then $(H, R(\gamma_j)) \in X_K$. Assume for sake of contradiction that $\gamma_j \in X_K$ but $(H, R(\gamma_j)) \notin X_K$. Let $\gamma_{j'} = (H, R(\gamma_j))$ and let agent $i$ be any agent not in $K$ where $\gamma_{j'} \in X_i$. From KKT condition~\eqref{eq: KKT 3}, item $j'$ must maximize agent $i$'s bang-per-buck, meaning:

$$\frac{v_{ij} \cdot f_D(\gamma_{j})}{p_{j}} = \frac{v'_{ij}}{p_{j}} \leq \frac{v'_{ij'}}{p_{j'}} = \frac{v_{ij'} \cdot f_D(\gamma_{j'})}{p_{j'}}$$

However, because $v_{ij} = v_{ij'}$, this implies that $\frac{f_D(\gamma_{j})}{p_{j}} \leq \frac{f_D(\gamma_{j'})}{p_{j'}}$. Meanwhile, item $j$ maximizes agent $1$'s bang-per-buck so:

$$\frac{h_1 \cdot f_D(\gamma_{j'})}{p_{j'}} = \frac{v_{1j'} \cdot f_D(\gamma_{j'})}{p_{j'}} = \frac{v'_{1j'}}{p_{j'}} \leq \frac{v'_{1j}}{p_{j}} = \frac{v_{1j} \cdot f_D(\gamma_{j})}{p_{j}} < \frac{h_1 \cdot f_D(\gamma_{j})}{p_{j}}$$

This implies that $\frac{f_D(\gamma_{j'})}{p_{j'}} < \frac{f_D(\gamma_j)}{p_j}$, contradicting the previous inequality. We know $X_K$ is non-empty, as otherwise the allocation $X$ would not be envy-free. Thus, there exists some $\gamma_j \in X_K$. This implies that $(H, R(\gamma_j)) \in X_K$ and also by definition, $(H, R(\gamma_j)) \in G_{H}$.
\end{proof}

\begin{proof}[Proof of Lemma~\ref{lem:high-low-item}]
From Lemma~\ref{lem:high-value-item}, there exists a $\gamma_j \in G_H \cap \xalloc_K$. Now consider an arbitrary $\gamma_{j'} \in G_{H^i}$ for some $i \in K$. As shown earlier in the proof of Lemma~\ref{lem:high-value-item}, if $v_{ij'} = v_{ij}$ and $\gamma_j$ is an MBB item for agent $i$, then $\frac{f_D(\gamma_{j'})}{p_{j'}} \leq \frac{f_D(\gamma_j)}{p_j}$.

However, for all other agents $i' \in K, i' \neq i$:

$$\frac{v'_{i'j'}}{p_{j'}} = \frac{v_{i'j'} \cdot f_D(\gamma_j')}{p_{j'}} \leq \frac{v_{i'j'} \cdot f_D(\gamma_j)}{p_{j}} < \frac{h_{i'} \cdot f_D(\gamma_j)}{p_{j}} = \frac{v'_{i'j}}{p_{j}}$$

In other words, item $j'$ is not a maximum bang-per-buck item for $i'$. As agents in $K$ have the same allocation $\xalloc_K$, this means that $\gamma_j \notin \xalloc_K$.

\end{proof}

Now, we introduce a third procedure to the indifference edge elimination algorithm:

\paragraph{Procedure $3$.}
Take a clique in the indifference graph and choose an item $\gamma_j \in G_H \cap X_K$, which exists by Lemma~\ref{lem:high-value-item}. We construct $X'$ by performing $k$ transfers. For each $i \in K$, by Lemma~\ref{lem:high-low-item}, there must exist some agent $i' \notin K$ where $(H^i, R(\gamma_j)) \in X_{i'}$. We define an item transfer $\Delta^i$ as: transfer $\delta^i$ worth of $(H^i, R(\gamma_j))$ from agent $i'$ to agent $i$, and $\delta^i$ worth of $\gamma_j$ from $i$ to $i'$.

This transfer has the property that both $\gamma_j$ and $(H^i, R(\gamma_j))$ are MBB items for agent $i$ and $i'$, but $(H^i, R(\gamma_j))$ is not an MBB item for all other agents $i^* \in K, i^* \neq i$. To show these properties, letting $\gamma_{j'} = (H^i, R(\gamma_j))$, we already know that $\gamma_j$ is an MBB item for agent $i$ and $\gamma_{j'}$ is an MBB item for agent $i'$. Because $v_{ij} = v_{ij'}$ and $v_{i'j} = v_{i'j'}$, we can show that $\frac{f_D(\gamma_{j'})}{p_{j'}} \leq \frac{f_D(\gamma_j)}{p_j}$ and also $\frac{f_D(\gamma_{j'})}{p_{j'}} \geq \frac{f_D(\gamma_j)}{p_j}$, implying that $\frac{f_D(\gamma_{j'})}{p_{j'}} = \frac{f_D(\gamma_j)}{p_j}$. Using this equality, we show:

$$\frac{v'_{ij}}{p_j} = \frac{v_{ij} \cdot f_D(\gamma_j)}{p_j} = \frac{v_{ij'} \cdot f_D(\gamma_{j'})}{p_{j'}} = \frac{v'_{ij'}}{p_{j'}}$$

Thus, items $j$ and $j'$ both maximize bang-per-buck for agent $i$. We can show the same for agent $i'$. Meanwhile, for agent $i^*$, the same equality shows $\gamma_{j'} = (H^i, R(\gamma_j))$ is not an MBB item:

$$\frac{v'_{i^*j'}}{p_{j'}} = \frac{v_{i^*j'} \cdot f_D(\gamma_{j'})}{p_{j'}} < \frac{v_{i^*j} \cdot f_D(\gamma_{j})}{p_{j}} = \frac{v'_{i^*j}}{p_{j}}$$

The only remaining condition for $\Delta = \sum_i \Delta^i$ to be an optimal transfer is to set the $\delta^i$ small enough that $X + \Delta$ is feasible.

Next, for any two agents $i, j \in K$, we want to show that this transfer eliminates the edge $(i, j)$. $X' = X + \sum_i \Delta^i$. Transfer $\Delta^i$ will eliminate the edge $(j, i)$ for all $j \in K, j \neq i$ since agent $i$ now owns an item that is not an MBB item for agent $j$. Finally, to ensure no new indifference edges or envy-freeness violations are created, we set $\delta^i = \frac{b}{k}$, where $b$ is chosen using the method discussed in Appendix~\ref{app:choosing b}. The extra $\frac{1}{k}$ factor ensures that we can perform all $k$ transfers safely.

One issue is that $X'$ might not be CISEF. Consider when the recipient in $\Delta^i$, agent $i'$, is part of clique $K'$ with $\abs{K'} > 1$. In this case, we change $X_{i'}$ without changing the allocation for other agents in $K'$. To handle this case, let $\xalloc_{K'}$ be the allocation for agents in $K'$. Any transfer of items to and from the agent can instead be spread out across all agents in the clique. Specifically, the final allocation is $X^*$, where $\xalloc^*_{K'} = \xalloc_{K'} + \frac{1}{\abs{K'}}(\xalloc'_{i'} - \xalloc_{i'})$ for the relevant $i', K'$ and the same as $X'$ otherwise. This allocation can be shown to be CISEF and Pareto optimal.

\begin{proof}[Proof of Theorem~\ref{thm:ind-sef}]
From Theorem~\ref{thm: structural}, we start with a solution $(\bx = \xalloc, \bp)$ to budgets $\be$ where $\xalloc$ is CISEF and Pareto optimal. Starting from allocation $\xalloc$, we repeat procedure $3$ while the indifference graph still contains cliques (of size greater than $1$). Each time we perform procedure $3$, we eliminate one clique without creating new indifference edges (or envy violations). Since procedure $3$ describes an optimal transfer, the resulting allocation is still Pareto optimal. Once we eliminate all cliques of size greater than $1$, the resulting indifference graph only contains cliques of size $1$ and thus, the allocation is strongly envy-free and Pareto optimal.
\end{proof}

\end{document}